\title{The Relationship between Consumer Theories with and without Utility Maximization}
\author{Yuhki Hosoya\thanks{E-mail: ukki(at)gs.econ.keio.ac.jp,\ ORCID ID:0000-0002-8581-4518}~{}\thanks{Faculty of Economics, Chuo University. 742-1, Higashinakano, Hachioji, Tokyo, 192-0393 Japan.}}
\renewenvironment{proof}[1][\proofname]{\par
  \pushQED{\qed}%
  \normalfont \topsep6\p@\@plus6\p@\relax
  \trivlist
  \item\relax
  {\bfseries
  #1\@addpunct{.}}\hspace\labelsep\ignorespaces
}{%
  \popQED\endtrivlist\@endpefalse
}
\theoremstyle{definition}
\newtheorem{prop}{Proposition}
\newtheorem{thm}{Theorem}
\newtheorem{lem}{Lemma}
\newtheorem{cor}{Corollary}
\newtheorem{step}{Step}
\newtheorem{fact}{Fact}
\begin{document}
\maketitle

\begin{abstract}
To study the assumption that the utility maximization hypothesis implicitly adds to consumer theory, we consider a mathematical representation of pre-marginal revolution consumer theory based on subjective exchange ratios. We introduce two axioms on subjective exchange ratio, and show that both axioms hold if and only if consumer behavior is consistent with the utility maximization hypothesis. Moreover, we express the process for a consumer to find the transaction stopping point in terms of differential equations, and prove that the conditions for its stability are equal to the two axioms introduced in the above argument. Therefore, the consumer can find his/her transaction stopping point if and only if his/her behavior is consistent with the utility maximization hypothesis. In addition to these results, we discuss equivalence conditions for axioms to evaluate their mathematical strength, and methods for expressing the theory of subjective exchange ratios in terms of binary relations.

\vspace{12pt}
\noindent
{\bf Keywords}: Integrability Theory, Marginal Revolution, Subjective Exchange Ratio, The Weak Weak Axiom, Ville's Axiom.

\vspace{12pt}
\noindent
{\bf JEL codes}: D11, C61, C65.
\end{abstract}

\section{Introduction}
In modern economic theory, economic agents are usually designed to follow the utility maximization hypothesis. This remains true even when the rationality of the agent is in doubt, and studies that try to find a class of utility functions that can explain irrational behavior that emerges from psychological experiments are often found in decision theory. A typical example of such a study is Schmeidler (1989), who constructed a theory of Choquet expected utility to explain Ellsberg's (1961) experiment.

However, economics has not been with the utility maximization hypothesis since its beginnings. This hypothesis was introduced by William Stanley Jevons, Carl Menger, and Leon Walras during the period known as the {\bf marginal revolution}. Thus, the following question naturally arises: With the addition of the utility maximization hypothesis, what have we assumed about economic agents?

To consider this question, we need to compare consumer theory before and after utility maximization was assumed. Therefore, let us look at consumer theory before the marginal revolution. According to Kawamata (1989), in theories of consumer behavior prior to the marginal revolution, the consumer was assumed to be able to view his/her {\bf subjective value} for $i$-th commodity according to their current state. In this case, the `appropriate' exchange ratio between $i$-th and $j$-th commodities the consumer believes would be expressed as a ratio of subjective values. Let us call this ratio the {\bf subjective exchange ratio}. On the other hand, the actual exchange ratio is expressed as a ratio of prices. If these ratios are different, then the consumer will not be satisfied with what he/she has and will continue to trade. As a result of the transaction, the subjective value and the subjective exchange ratio are updated. After repeating this process, the consumer will stop trading only when the subjective exchange ratio finally becomes equal to the price ratio.

Menger (1871) showed that if we replace the subjective value treated in this pre-marginal revolution consumer theory with the concept of {\bf marginal utility}, everything can be well explained. If the subjective value is given by marginal utility, then the subjective exchange ratio coincides with the usual {\bf marginal rate of substitution}. Moreover, from Lagrange's first-order condition, the point at which the transaction stops is precisely the point of utility maximization. Thus, the new theory modeled in terms of utility maximization can be positioned as an inheritance and development of the old theory. However, one can also consider the consumer who attaches subjective exchange ratios that cannot be regarded as marginal utility. Menger showed that the utility-maximizing consumer can also be described as the consumer with subjective exchange ratios, but the converse is not true. Thus, another view could be that Menger made stronger assumptions about the consumer. If so, what additional assumptions were made about consumer behavior by introducing utility maximization? In this paper, we consider this question. What is the relationship between new and old consumer theories?

Another aspect of the marginal revolution is that it introduced the convention of describing theories mathematically in economics. This means that the theory using subjective exchange ratios was not described mathematically. Therefore, this paper begins by mathematizing consumer theory based on subjective exchange ratios. We construct two mathematical theories: static and dynamic.

In both theories, the main consideration is the function that gives a vector of the subjective values of commodities when the consumption vector $x$ is on hand. In static theory, we consider the transaction stopping condition as the consumption vector in the budget hyperplane under which the subjective value vector is proportional to the price vector, and construct the corresponding demand function $f^g$. We impose two conditions for $g$. The first condition is called the {\bf weak weak axiom}, which has the same interpretation as the weak weak axiom of revealed preference introduced by Kihlstrom et al. (1976). The second condition is called {\bf Ville's axiom} and is a requirement for a certain type of rationality introduced by Ville (1946). We first show that under the locally Lipschitz assumption, $f^g$ can be written as a result of the utility maximization if and only if $g$ satisfies the above two axioms (Theorem \ref{THM1}). Thus, these two axioms are the additional conditions imposed on the subjective exchange ratio of the consumer by the marginal revolution.

Static theory describes only the transaction stopping point and does not discuss how the consumer finds that point. In contrast, dynamic theory discusses how the consumer finds the transaction stopping point. Since the consumer can only observe the current subjective value, he/she only knows the direction of improvement. In order to describe efforts to improve the consumer's state, we use some kind of differential equation. If the solution to the differential equation converges to the transaction stopping point, then we can interpret that the consumer can successfully find the transaction stopping point. We analyze this differential equation, and show that it is necessary and sufficient for stability to hold no matter what improvement method is used that both the weak weak axiom and Ville's axiom hold (Theorem \ref{THM2}). Thus, the consumer will always eventually find a transaction stopping point when they can be viewed as if they are acting according to the utility maximization hypothesis, but they may fail to find a transaction stopping point when they are not. This is another implication of the utility maximization hypothesis.

We have introduced two axioms, and the economic interpretations of these axioms are relatively natural. On the other hand, it is not easy to evaluate how mathematically strong these axioms are. We study this and find equivalent conditions for axioms that are easy to evaluate (Theorem \ref{THM3}). Moreover, the condition obtained there can be characterized by the properties of what was called the {\bf Antonelli matrix} in classical consumer theory (Proposition \ref{PROP1}). Using this result, we can obtain the following two conclusions. First, when there are only two commodities, any subjective value function satisfies Ville's axiom. When there are three or more commodities, however, the set of subjective value functions that satisfies Ville's axiom is nowhere dense under almost any appropriate topology (Corollary \ref{COR1}). Thus, when there are three or more commodities, Ville's axiom is, at least mathematically, very strong.

Finally, we discuss the relationship between the theory of subjective exchange ratios and the theory of binary relations. Although we formulated the primitives of consumers as a subjective value function, in consumer theory since the 20th century, it is usually assumed that the primitive mathematical element of the consumer is the binary relation that represents the consumer's preference. We mention a way to construct a binary relation from $g$ and provide a correspondence in a natural way (Theorem \ref{THM4}). The corresponding binary relation $\succsim^g$ is the unique continuous binary relation that satisfies a few additional requirements. This binary relation is a weak order if and only if $g$ satisfies Ville's axiom. Moreover, the subjective exchange ratios under $g$ and $h$ coincide at any point if and only if $\succsim^g=\succsim^h$, and the mapping $g\mapsto \succsim^g$ is continuous under natural topologies. Therefore, we can consider that this $\succsim^g$ is the consumer's preference relation that naturally corresponds to the subjective exchange ratio.

All of our results are obtained under the assumption that $g$ is locally Lipschitz. Although the most natural assumption for $g$ is continuity, for a consistent construction method of indifference curves, we could not do without locally Lipschitz assumption of $g$. On the other hand, all results of classical integrability theory have been obtained under continuous differentiability of $g$. Since any continuously differentiable function is locally Lipschitz, our assumption is weaker than that of related research.

This paper is organized as follows. Section 2 provides the notation and definitions that are necessary for the discussion in this paper. Subsection 3.1 describes the results of static theory, whereas Subsection 3.2 describes the results of the dynamic theory. Subsection 3.3 provides a series of discussions to evaluate the strength of axioms. Subsection 3.4 addresses the correspondence problem between subjective exchange ratios and binary relations. Section 4 discusses the relationship between the theory discussed in this paper and related studies. Section 5 is the conclusion. Since the proofs of results in this paper are lengthy, all proofs of results are placed in Section 6.

\section{Notation and Definitions}
Throughout this paper, we use the following notation: $\mathbb{R}^N_+=\{x\in\mathbb{R}^N|x_i\ge 0\mbox{ for all }i\in \{1,...,N\}\}$, and $\mathbb{R}^N_{++}=\{x\in \mathbb{R}^N|x_i>0\mbox{ for all }i\in \{1,...,N\}\}$. The former set is called the \textbf{nonnegative orthant} and the latter set is called the \textbf{positive orthant}. We write $x\ge y$ if $x-y\in \mathbb{R}^N_+$, $x\gneq y$ if $x\ge y$ and $x\neq y$, and $x\gg y$ if $x-y\in \mathbb{R}^N_{++}$. If $N=1$, then we omit $N$ and simply write $\mathbb{R}_+$ and $\mathbb{R}_{++}$.

Fix $n\ge 2$. Let $\Omega$ denote the consumption set. We assume that $\Omega=\mathbb{R}^n_{++}$ unless otherwise stated. A set $\succsim\subset \Omega^2$ is called a {\bf binary relation} on $\Omega$.

For a binary relation $\succsim$ on $\Omega$, we write $x\succsim y$ instead of $(x,y)\in \succsim$ and $x\not\succsim y$ instead of $(x,y)\notin \succsim$. Moreover, we write $x\succ y$ if $x\succsim y$ and $y\not\succsim x$, and $x\sim y$ if $x\succsim y$ and $y\succsim x$. Note that, $\succ$ and $\sim$ can be seen as binary relations.

For a binary relation $\succsim$ on $\Omega$, we say that it is
\begin{itemize}
\item {\bf complete} if, for every $(x,y)\in \Omega^2$, either $x\succsim y$ or $y\succsim x$,

\item {\bf transitive} if $x\succsim y$ and $y\succsim z$ imply $x\succsim z$,

\item {\bf continuous} if $\succsim$ is relatively closed in $\Omega^2$,

\item {\bf locally non-satiated} if for any $x\in \Omega$ and any neighborhood $U$ of $x$, there exists $y\in U$ such that $y\succ x$,

\item {\bf monotone} if $x\gg y$ implies $x\succ y$, and

\item {\bf weakly convex} if $y\succsim x$ and $t\in [0,1]$ imply $(1-t)x+ty\succsim x$.
\end{itemize}

A binary relation $\succsim$ on $\Omega$ is called a {\bf weak order} if it is complete and transitive. It can be easily shown that if $\succsim$ is a transitive binary relation on $\Omega$, then $\succ$ is also transitive.

Suppose that $\succsim$ is a weak order on $\Omega$. If there exists a function $u:\Omega\to \mathbb{R}$ such that
\[x\succsim y\Leftrightarrow u(x)\ge u(y),\]
then we say that $u$ {\bf represents} $\succsim$, or $u$ is a {\bf utility function} of $\succsim$. It is known that for a binary relation $\succsim$ on $\Omega$, it has a continuous utility function if and only if it is a continuous weak order.\footnote{See Debreu (1954).}

Let $\succsim$ be a binary relation on $\Omega$. For each $(p,m)\in (\mathbb{R}^n_+\setminus \{0\})\times \mathbb{R}_{++}$, we define
\[\Delta(p,m)=\{x\in \Omega|p\cdot x\le m\},\]
\[f^{\succsim}(p,m)=\{x\in \Delta(p,m)|y\not\succ x\mbox{ for all }y\in \Delta(p,m)\}.\]
We call the set-valued function $f^{\succsim}$ the {\bf demand relation} of $\succsim$, and if it is single-valued, then we call $f^{\succsim}$ the {\bf demand function} of $\succsim$. If $u$ represents $\succsim$, then $f^u$ denotes $f^{\succsim}$.

Note that, we do not assume that $f^{\succsim}(p,m)\neq \emptyset$ for all $(p,m)\in (\mathbb{R}^n_+\setminus\{0\})\times \mathbb{R}_{++}$. For a set-valued function $f:(\mathbb{R}^n_+\setminus \{0\})\times \mathbb{R}_{++}\twoheadrightarrow \Omega$, we call the set of $(p,m)$ in which $f(p,m)$ is nonempty the {\bf domain} of $f$.

\section{Main Results}
\subsection{Static Theory: Representation by Utility Maximization}
As noted in the introduction, the consumer theory before the marginal revolution was not formulated mathematically. This is inconvenient for us, and thus we must first reformulate this theory mathematically. There are two possible formulations, static and dynamic. In this subsection, we treat the static theory.

Let a locally Lipschitz vector field $g:\Omega\to \mathbb{R}^n_+\setminus \{0\}$ be given. For a given consumption vector $x\in \Omega$, $g_i(x)$ is interpreted as representing the consumer's {\bf subjective value} of $i$-th commodity when he/she has the consumption vector $x$, and its relative ratio $g_i(x)/g_j(x)$ represents his/her {\bf subjective exchange ratio} between the $i$-th and $j$-th commodities. If the consumer has the consumption vector $x$, then he/she believes that the ``appropriate'' exchange ratio of the $i$-th and $j$-th commodities for him/her is $g_i(x):g_j(x)$. On the other hand, given the market price $p$, the ``actual'' exchange ratio of the $i$-th and $j$-th commodities he/she faces is $p_i:p_j$. If these do not coincide, then he/she will not be satisfied with his/her current consumption vector $x$, and will attempt to continue the transaction.

We continue to consider the consumer's choice. Because of the interpretation of the subjective exchange ratio, if $g(x)\cdot v>0$, then the consumer that has the consumption vector $x$ thinks that $x+tv$ is more desirable than $x$ whenever $t>0$ is sufficiently small. Therefore, if the consumer's income is $m>0$ and $p\cdot (x+tv)\le m$ for a small $t>0$, then this consumer rejects this $x$. For the absence of such a vector $v$, it is necessary and sufficient that the following two requirements hold: 1) $p\cdot x=m$, and 2) $g(x)$ is proportional to $p$. The consumer agrees to terminate the transaction if and only if these two conditions are met. Under this consideration, we define a set-valued function $f^g$ that describes this requirement:
\[f^g(p,m)=\{x\in \Omega|p\cdot x=m,\ \exists \lambda>0\mbox{ s.t. }g(x)=\lambda p\}.\]
In this section, we seek the condition for this $f^g$ matches the utility maximization hypothesis.

First, we introduce three axioms on $g$. The first axiom is as follows: $g$ satisfies the {\bf weak weak axiom} if and only if, for any $x,y\in \Omega$, if $g(x)\cdot y\le g(x)\cdot x$, then $g(y)\cdot x\ge g(y)\cdot y$.\footnote{This axiom is related to the weak weak axiom of revealed preference introduced by Kihlstrom et al. (1976), and thus we call this axiom the ``weak weak axiom''.} It is equivalent to the following assertion: $g(x)\cdot y<g(x)\cdot x$ implies that $g(y)\cdot x>g(y)\cdot y$.

Second, $g$ is said to satisfy the {\bf weak axiom} if and only if, for any $x,y\in \Omega$, if $g(x)\cdot y\le g(x)\cdot x$ and $y\neq x$, then $g(y)\cdot x>g(y)\cdot y$. Clearly, the weak axiom is stronger than the weak weak axiom. Later, we confirm that the weak axiom is needed for assuring $f^g$ to be single-valued.

Next, consider a piecewise $C^1$ closed curve $x:[0,T]\to \Omega$.\footnote{A function $x:[0,T]\to \mathbb{R}^N$ is called a piecewise $C^1$ function if and only if, it is continuous and there exists a finite set $\{t_0,...,t_k\}$ such that $t_0=0, t_k=N$, $t_i<t_{i+1}$, and for $i\in \{1,...,k\}$, there exists a $C^1$ function $g_i: [0,T]\to \mathbb{R}^N$ that coincides with $x$ on $[t_i,t_{i+1}]$. Moreover, a function $x:[0,T]\to \mathbb{R}^N$ is called a {\bf closed curve} if and only if $x(0)=x(T)$.} We call this curve a {\bf Ville curve} if
\[g(x(t))\cdot \dot{x}(t)>0\]
for all $t\in [0,T]$ such that $\dot{x}(t)$ is defined. We say that $g$ satisfies {\bf Ville's axiom} if and only if there is no Ville curve.

Suppose that $f^g=f^u$ for some $C^1$ function $u:\Omega\to \mathbb{R}$ such that $\nabla u(x)\gneq 0$ for all $x\in \Omega$. If $x(t)$ is a Ville curve, then $u(x(t))$ is increasing, and thus $u(x(0))<u(x(T))=u(x(0))$, which is a contradiction. Therefore, $g$ satisfies Ville's axiom. Moreover, if $g(x)\cdot y\le g(x)\cdot x$, then $y\in \Delta(g(x),g(x)\cdot x)$. This implies that $u(x)\ge u(y)$, and thus, for any $a>1$, $u(ax)>u(y)$. Therefore, $ax\notin \Delta(g(y),g(y)\cdot y)$, and thus $g(y)\cdot ax>g(y)\cdot y$. Letting $a\downarrow 1$, we obtain that $g(y)\cdot x\ge g(y)\cdot y$, and thus $g$ satisfies the weak weak axiom. In conclusion, the utility maximization hypothesis implies that both axioms hold. We will show that, this is also true even when $u$ is not differentiable, and the converse is also true.\footnote{See III) of Theorem \ref{THM1}.}

We introduce a condition for a binary relation. A binary relation $\succsim$ on $\Omega$ is said to satisfy the {\bf LNST condition} if, for any $x,y\in \Omega$, if $x\succsim y$ and $U$ is a neighborhood of $x$, then there exists $z\in U$ such that $z\succ y$.

Note that, if $\succsim$ is transitive and locally non-satiated, then it satisfies the LNST condition. We show this result. Suppose that $x\succsim y$ and $U$ is a neighborhood of $x$. Then, by the local non-satiation, there exists $z\in U$ such that $z\succ x$. Because of the definition of $\succ$ and the transitivity of $\succsim$, we have that $z\succsim y$. If $z\not\succ y$, then $y\succsim z$. Because $x\succsim y$, by the transitivity, $x\succsim z$, which is a contradiction. Therefore, $z\succ y$, as desired.

Let $U\subset \Omega$ be open and $u:U\to \mathbb{R}$. We say that $u$ satisfies {\bf property (F)} if and only if it is increasing and locally Lipschitz, and for any $w\in \mathbb{R}$, if $X=u^{-1}(w)$ is not empty, then it is an $n-1$ dimensional $C^1$ manifold, and $g(x)$ is orthogonal to the tangent space $T_x(X)$ for all $x\in X$.\footnote{We say that a real-valued function $u$ is {\bf increasing} if $x\gg y$ implies that $u(x)>u(y)$, and when $u$ is differentiable, then it is said to be {\bf non-degenerate} if $\nabla u(x)\neq 0$ everywhere.} Note that, if $u:U\to \mathbb{R}$ is a $C^1$ increasing non-degenerate function, then $u$ satisfies property (F) if and only if $\nabla u(x)=\lambda(x)g(x)$ for each $x\in U$, where $\lambda$ is some positive real-valued function defined on $U$.

Now, we complete the preparation of stating our first main results.

\begin{thm}\label{THM1}
For a given locally Lipschitz function $g:\Omega\to \mathbb{R}^n_+\setminus\{0\}$, the following four results hold.
\begin{enumerate}[I)]
\item The function $g$ satisfies the weak weak axiom if and only if $f^g=f^{\succsim}$ for a complete binary relation $\succsim$ on $\Omega$ that satisfies the LNST condition.

\item The function $g$ satisfies Ville's axiom if and only if there exists a real-valued function $u:\Omega\to \mathbb{R}$ that satisfies property (F). In this case, for each $v\in \Omega$, there uniquely exists a function $u^g_v:\Omega\to \mathbb{R}$ such that $u^g_v$ satisfies property (F) and $u^g_v(av)=a$ for any $a>0$. If $g$ is $C^k$, then $u^g_v$ is also $C^k$ and $\nabla u^g_v(x)=\lambda(x)g(x)$ for each $x\in \Omega$, where $\lambda$ is some positive real-valued function.

\item The following three statements are equivalent.
\begin{enumerate}[1)]
\item The function $g$ satisfies the weak weak axiom and the Ville's axiom.

\item There exists a quasi-concave function $u:\Omega\to \mathbb{R}$ that satisfies property (F).

\item there exists a weak order $\succsim$ such that $f^g=f^{\succsim}$.
\end{enumerate}
Moreover, in this case, if $u$ satisfies property (F), then $u$ is quasi-concave and $f^g=f^u$. In particular, $u^g_v$ is quasi-concave, and $f^g=f^{u^g_v}$ for any $v\in \Omega$.

\item $g$ satisfies the weak axiom if and only if $f^g$ is single-valued and satisfies the weak axiom of revealed preference. In particular, if $g$ satisfies Ville's axiom, then $g$ satisfies the weak axiom if and only if $u^g_v$ is strictly quasi-concave.
\end{enumerate}
\end{thm}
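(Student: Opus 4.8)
The plan is to make part II) the engine and derive the rest from it. The easy half of II): if $u$ has property (F), then along any Ville curve $x:[0,T]\to\Omega$ the map $t\mapsto u(x(t))$ is locally strictly increasing at every point, since near $x(t_0)$ the level set $u^{-1}(u(x(t_0)))$ is a $C^1$ hypersurface with normal a positive multiple of $g(x(t_0))$, and because $u$ is increasing the side into which $g(x(t_0))$ points is $\{w:u(w)>u(x(t_0))\}$ — the direction $(1,\dots,1)$ witnesses this, as $g(x(t_0))\cdot(1,\dots,1)>0$ — so $g(x(t_0))\cdot\dot x(t_0)>0$ drives the curve into that side; hence $u(x(T))>u(x(0))$, contradicting $x(0)=x(T)$. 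The hard half is constructing $u^g_v$ under Ville's axiom. I would build the indifference hypersurfaces as the integral manifolds of the rank-$(n-1)$ distribution $x\mapsto\{w:g(x)\cdot w=0\}$: for $n=2$ these are the orbits of the locally Lipschitz field obtained by rotating $g$ by a right angle, which exist, are unique, and depend $C^k$-smoothly on initial data by Picard's theorem, and Ville's axiom is automatic; for $n\ge 3$ one declares $x\sim y$ when they are joined by a piecewise-$C^1$ path everywhere tangent to the distribution, and upgrades ``no Ville curve'' (by perturbation) to ``no closed path that is tangent everywhere and strictly rising somewhere'', which is exactly the obstruction to this being an equivalence relation with $C^1$ manifold classes. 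Since $g(x)\cdot v>0$ for every $x$, the half-line $\{av:a>0\}$ is a global transversal, so labelling the class through $x$ by the $a$ at which it meets $av$ yields a function with property (F) and $u^g_v(av)=a$; uniqueness follows because any two functions with property (F) have the same level sets (necessarily the integral manifolds), so one is an increasing reparametrisation of the other, and the normalisation pins the reparametrisation down. I expect this construction, carried out under the sole hypothesis that $g$ is locally Lipschitz rather than $C^1$, to be the main obstacle of the whole theorem.

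For part I), ``$f^g=f^{\succsim}$ for a complete $\succsim$ with the LNST condition $\Rightarrow$ weak weak axiom'' is short: if $g(x)\cdot y\le g(x)\cdot x$ then $y\in\Delta(g(x),g(x)\cdot x)$ while $x\in f^g(g(x),g(x)\cdot x)=f^{\succsim}(g(x),g(x)\cdot x)$, so $y\not\succ x$ and completeness gives $x\succsim y$; the LNST condition then produces $z$ arbitrarily close to $x$ with $z\succ y$, and since $y\in f^{\succsim}(g(y),g(y)\cdot y)$ every such $z$ lies outside $\Delta(g(y),g(y)\cdot y)$, i.e.\ $g(y)\cdot z>g(y)\cdot y$, so letting $z\to x$ yields $g(y)\cdot x\ge g(y)\cdot y$. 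For the converse, assuming the weak weak axiom, I would exhibit a complete relation $\succsim$ arranged so that $z\succ x$ fails whenever $g(x)\cdot z\le g(x)\cdot x$ (so that $x\in f^{\succsim}(g(x),g(x)\cdot x)$), deriving completeness from the strict reformulation of the axiom, the identity $f^g=f^{\succsim}$ from the half-space description of budget sets at proportionality points together with perturbation arguments (sliding along $(1,\dots,1)$ and using continuity of $g$ to rule out slack income and non-proportional $g(x)$), and the LNST condition from a further perturbation. Engineering the relation so that the LNST condition holds is the delicate point here — the naive relation $x\succsim y\Leftrightarrow g(y)\cdot x\ge g(y)\cdot y$ satisfies everything except this — and it is a second place where local Lipschitzness is used.

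Part III) I would prove as the cycle 1)$\Rightarrow$2)$\Rightarrow$3)$\Rightarrow$1). From 1), part II) supplies $u:=u^g_v$ with property (F), and the weak weak axiom forces it to be quasi-concave: a failure of quasi-concavity produces a segment tangent to a level set at an interior minimiser, and running this through the budget-hyperplane description contradicts the axiom — this is one of the more delicate verifications. Given a quasi-concave $u$ with property (F), I claim $f^g=f^u$: at $x\in f^g(p,m)$ the budget hyperplane is tangent to the $C^1$ level set through $x$, and convexity of $\{w:u(w)\ge u(x)\}$ plus a supporting-hyperplane argument (a point of $\{w:u(w)>u(x)\}$ on the budget hyperplane would be an interior point of this convex set lying on one of its supporting hyperplanes) forces $\Delta(p,m)\subseteq\{w:u(w)\le u(x)\}$; the reverse inclusion comes from reading the tangency backwards, as in the argument given just before the theorem for the differentiable case. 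Since $u$ is continuous (being locally Lipschitz) it represents a weak order $\succsim$ with $f^{\succsim}=f^u=f^g$, which is 3). Given 3), a weak order $\succsim$ with $f^g=f^{\succsim}$ yields the weak weak axiom directly — if $g(x)\cdot y\le g(x)\cdot x$ but $g(y)\cdot x<g(y)\cdot y$, then transitivity and completeness force $x\sim y$, hence $x\in f^{\succsim}(g(y),g(y)\cdot y)=f^g(g(y),g(y)\cdot y)$, contradicting $g(y)\cdot x<g(y)\cdot y$ — and Ville's axiom via a uniform Lipschitz estimate along a hypothetical Ville curve showing $x(s)\succ x(t)$ whenever $s-t$ is small, so transitivity of $\succ$ and $x(0)=x(T)$ give a contradiction. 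The ``moreover'' clause follows since any $u$ with property (F) is an increasing reparametrisation of $u^g_v$ (same level sets), hence also quasi-concave with $f^g=f^u$.

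For part IV), ``weak axiom $\Rightarrow$ $f^g$ single-valued'' is immediate: if $x,y\in f^g(p,m)$ with $x\ne y$, then $g(x)$ and $g(y)$ are positive multiples of $p$ and $p\cdot x=p\cdot y=m$ force $g(x)\cdot y=g(x)\cdot x$ and $g(y)\cdot x=g(y)\cdot y$, contradicting the strict conclusion of the weak axiom. Assuming Ville's axiom, so that $u:=u^g_v$ is quasi-concave with $f^g=f^u$ by III): if $u$ is strictly quasi-concave then each $\Delta(p,m)$ has at most one $u$-maximiser, and the argument given just before the theorem, now with strict inequalities, upgrades the weak weak axiom to the weak axiom; conversely, if $u$ fails to be strictly quasi-concave, the equality case of quasi-concavity places two distinct points in some $f^g(p,m)=f^u(p,m)$, so $f^g$ is not single-valued and, by the first assertion of IV), the weak axiom fails. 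Throughout, the recurring technical burden — and the place I expect the genuine work — is making the classical indifference-surface and integrability constructions run assuming only that $g$ is locally Lipschitz.
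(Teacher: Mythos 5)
Your high-level decomposition (prove II, derive III and IV from it, prove I separately) matches the paper, and a number of your shorter arguments — the ``only if'' of I, the deduction of the weak weak axiom from a weak order in III, the first half of IV — are correct and essentially the paper's. The easy half of II (property (F) $\Rightarrow$ no Ville curve) is also sound and is Lemma~\ref{LEM4} in disguise. But the hard half of II, which you correctly flag as the engine of the whole theorem, is where the proposal diverges from the paper and where the sketch as written would not go through.

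You propose to build the indifference hypersurfaces as integral manifolds of the distribution $x\mapsto\{w:g(x)\cdot w=0\}$ and to declare $x\sim y$ when a piecewise $C^1$ path everywhere tangent to this distribution joins them. This is the classical Frobenius route of Debreu (1972) and Hosoya (2013), and the paper explicitly states that it is \emph{not} used here, for a reason: when $g$ is only locally Lipschitz the distribution has no $C^1$ structure, so there is no Frobenius theorem to invoke, and the relation ``joined by a tangent path'' is not obviously an equivalence relation whose classes are $C^1$ manifolds (for a non-involutive Lipschitz distribution it can fail to separate points at all, in the spirit of Chow's theorem, and your ``upgrade of `no Ville curve' to `no tangent closed path strictly rising somewhere''' would need a genuine argument). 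The paper instead constructs a \emph{planar} object: for each pair $(x,v)$ it solves the two-dimensional ODE $\dot y=(g(y)\cdot x)v-(g(y)\cdot v)x$, whose trajectory stays in $\mathrm{span}\{x,v\}$, extracts the intersection with the ray through $v$ to define $u^g(x,v)$ and the relation $\succsim^g$, and proves a functional equation $u^g(u^g(x,z)z,v)=u^g(x,v)$ to glue the planar slices together. That this $u^g_v$ has property (F), i.e.\ that the level sets are $C^1$ hypersurfaces, is then established not by Frobenius but by a delicate argument (Step 4 of Lemma~\ref{LEM5}) using Fact~\ref{FACT9} about null sets and a one-variable integral representation. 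This is the genuinely new content of the theorem under Lipschitz regularity, and it is not recoverable from your sketch.

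A second, smaller gap: for the converse of I you acknowledge you do not have a candidate relation and that the naive one fails the LNST condition. In the paper the candidate is precisely $\succsim^g$ constructed above; it is \emph{always} complete, p-transitive, continuous, and monotone (Lemma~\ref{LEM2}), and under the weak weak axiom it is weakly convex with $f^g=f^{\succsim^g}$ (Lemma~\ref{LEM3}, proved by an explicit Euler-polygon convexity argument for the planar ODE). This single construction feeds I, II and III simultaneously; without it, I is not just ``delicate'' but unproven.

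Finally, in IV your claimed step ``$u^g_v$ not strictly quasi-concave $\Rightarrow$ two distinct points in some $f^g(p,m)$'' is not immediate: quasi-concavity does not directly produce a line segment inside a level set, so you cannot just read off a supporting hyperplane that captures two optima. The paper instead proves the contrapositive directly: from the weak axiom and single-valuedness of $f^g=f^{u^g_v}$ it shows $u^g_v(z)>u^g_v(x)$ for any proper convex combination $z$ of $x$ and a $y$ with $u^g_v(y)\ge u^g_v(x)$, which is strict quasi-concavity; and conversely from strict quasi-concavity it derives the weak axiom via the unique-maximiser property. Replace your ``equality case of quasi-concavity'' detour with that direct argument and IV closes cleanly.

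Your direct route from a weak order $\succsim$ with $f^g=f^\succsim$ to Ville's axiom (show $x(s)\succ x(t)$ for nearby $s>t$ using local nonsatiation of $\succsim$ and continuity of $g$, then chain by transitivity) is a legitimate and in fact slightly shorter alternative to the paper's argument, which instead shows that such a $\succsim$ forces $\succsim^g$ to be transitive and then invokes Lemma~\ref{LEM5}. That part of your plan is a genuine simplification worth keeping; it is the Frobenius step and the missing $\succsim^g$ construction that need to be replaced.
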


\noindent
{\bf Remarks on Theorem \ref{THM1}}. We mention the interpretations of axioms. First, suppose that $g(x)\cdot y<g(x)\cdot x$. If we define $v=x-y$, then $g(x)\cdot v>0$. Therefore, the consumer that has the consumption vector $x$ finds that the transaction that moves the consumption vector slightly in the $x-y$ direction is preferable. If $g(y)\cdot x>g(y)\cdot y$, then $g(y)\cdot v>0$. Therefore, when the consumer has the consumption vector $y$, then he/she also finds the transaction that moves the consumption vector slightly in the $x-y$ direction preferable. The weak weak axiom states such a consistency of the consumer's preference.

Next, Ville's axiom is related to Samuelson's (1950) ``three-sided tower'' argument. He considered a consumer in which for three linearly independent vectors $x,y,z\in \Omega$, where $x$ is indifferent to $y$, $y$ is indifferent to $z$, and $z$ is indifferent to $ax$ for some $a<1$. He said that such a consumer is ``easily cheated'' because of the following reasons. First, suppose that the consumer has the consumption vector $x$. If this consumer were offered a transaction that changed the consumption vector from $x$ to $y$, he/she would accept this transaction because he/she feels that $x$ and $y$ are indifferent. Next, if he/she were offered a transaction that changed his/her consumption vector from $y$ to $z$, he/she would accept the transaction for the same reason. Finally, if he/she were offered a transaction that changed his/her consumption vector from $z$ to $ax$, he/she would accept the transaction. As a result of three transactions, the consumer would be forced to tolerate a lower consumption vector than he/she originally had.\footnote{Samuelson apparently did not see any problem with the fact that an indifference curve can be drawn on any two-dimensional space. Therefore, he thought that it was no problem to use the `indifference' relation for $x$ and $y$, $y$ and $z$, and $z$ and $ax$, in his discussions. This view can also be found in Pareto (1906). Like Samuelson, Pareto considered irrational consumers in the mathematical appendix of his book (Pareto, 1909), and his so-called ``open cycle theory'' seems to include a similar idea as Ville curves. Ville's axiom itself is introduced by Ville (1946), and analyzed by Hurwicz and Richter (1979). Note that, although the proof of our Theorem \ref{THM1} also makes extensive use of indifference curves, the construction method is presented rigorously through a differential equation there.}

Actually, for a consumer that has a subjective exchange ratio function $g$ violating Ville's axiom, we can show that there exists the following Ville curve $x:[0,T]\to \Omega$: there exists $t_1,t_2,t_3\in [0,T]$ such that, 1) $x(t)\in \mbox{span}\{x(0),x(t_1)\}$ for $t\in [0,t_1]$, 2) $x(t)\in \mbox{span}\{x(t_1),x(t_2)\}$ for $t\in [t_1,t_2]$, 3) $x(t)\in \mbox{span}\{x(t_2),x(t_3)\}$ for $t\in [t_2,t_3]$, and 4) $x(t_3)=ax(0)$, where $0<a<1$.\footnote{See Step 1 of the proof of Lemma \ref{LEM5}.} If we define $x(0)=x$, $x(t_1)=y$, $x(t_2)=z$, then this consumer considers that $y$ is preferred to $x$, $z$ is preferred to $y$, and $ax$ is preferred to $z$, but $x$ is greater than $ax$. Therefore, we can say that this consumer is also ``easily cheated'' for the same reason as Samuelson's argument.

Let us explain statement II) in Theorem \ref{THM1}. Although property (F) is a little difficult to understand, this property is important for the economic interpretation of this result. To simplify the argument, suppose that $u$ satisfies property (F) and is continuously differentiable. Then. $g(x)$ is positively proportional to $\nabla u(x)$ for each $x\in \Omega$. Hence, the function $\nabla u$ represents the same subjective exchange ratio as $g$. This implies that we can see the subjective value of this consumer as the marginal utility. In fact, this difference in perspective is the transformation of consumer theory brought about by the marginal revolution. In other words, what was thought to be a subjective exchange ratio before the marginal revolution came to be regarded simply as a ratio of marginal utilities after the marginal revolution. And what II) of Theorem \ref{THM1} shows is that Ville's axiom is a necessary and sufficient condition for the justifiability of this view.

By III), the consumer's behavior represented by $f^g$ matches the maximization of the function $u^g_v$, which is a standard hypothesis of consumer behavior in the modern microeconomic theory, if and only if $g$ satisfies the weak weak axiom. Since the quasi-concavity of $u^g_v$ was implicitly assumed at the period of the marginal revolution, the gap between the idea that $\nabla u$ can be viewed as a subjective exchange ratio and the idea of utility maximization may have been overlooked. However, to justify utility maximization, $u^g_v$ must be quasi-concave, and the condition for this is exactly the weak weak axiom.\footnote{See Lemma \ref{LEM3} in the proof section.}

We mention the independence of two axioms. First, as we will show later, if $n=2$, then any $g$ satisfies Ville's axiom. Therefore, if we define $g(x)=x$, then it satisfies Ville's axiom.\footnote{Actually, if $u(x)=\frac{1}{2}[x_1^2+x_2^2]$, then $g(x)=\nabla u(x)$.} However, if $x=(2,1)$ and $y=(1,2)$, then $g(x)\cdot y<g(x)\cdot x$ and $g(y)\cdot x<g(x)\cdot x$, which implies that the weak weak axiom is violated. Second, let $n=3$. Gale (1960) found a function $g$ that satisfies the weak axiom but $f^g$ violates the strong axiom of revealed preference. By Theorem \ref{THM1}, this means that $g$ must violates Ville's axiom. Therefore, these axioms are independent.

\subsection{Dynamic Theory: Stability and Utility Maximization}
Insofar as the consumer determines his/her behavior through utility maximization, his/her behavior can be described statically. That is, the consumption vector that the consumer finally chooses is the one that maximizes utility, which can be obtained by solving the utility maximization problem. In our theory, however, the consumer does not know a utility function, but only a subjective value. Moreover, since a subjective value changes according to the consumption vector currently held, it is possible that the consumer does not know his/her subjective value under a consumption vector that he/she does not currently own. Assume that the consumer can know $g(x)$ if he/she has the consumption vector $x\in \Omega$, but has no further information. In this case, for a given pair $(p,m)$ of the price vector and the income, it is unknown whether the transaction stopping point $f^g(p,m)$ can be found. Therefore, the static theory is not sufficient to describe the consumer's transactions. A dynamic consumer model for finding the transaction stopping point is additionally needed.

In this subsection, we assume that $f^g$ is a single-valued function for simplifying the arguments. We first define the process that describe the consumer's transaction procedure. Fix a pair $(p,m)$ in the domain of $f^g$. We call a continuous function $h:\Omega\to \mathbb{R}^n$ an {\bf improvement direction function} if 1) $g(x)\cdot h(x)>0$ for all $x\in \Delta(p,m)\setminus \{f^g(p,m)\}$, and 2) $p\cdot h(x)\le 0$ when $p\cdot x=m$. For a given improvement policy $h$, consider the following differential equation:
\begin{equation}\label{IMP}
\dot{x}(t)=h(x(t)), x(0)=x_0,
\end{equation}
where $x_0\in \Delta(p,m)$. We call this differential equation the {\bf improvement process} according to $h$.

We define two notions of stability. First, an improvement process is said to be {\bf locally stable} if, there exists an open neighborhood $U$ of $f^g(p,m)$ such that if $x_0\in U$, then there exists a solution $x(t)$ to (\ref{IMP}) defined on $\mathbb{R}_+$, and for any such solutions, $x(t)$ converges to $f^g(p,m)$ as $t\to \infty$. Second, an improvement process is said to be {\bf compact stable} if, for any solution $x(t)$ to (\ref{IMP}) defined on $\mathbb{R}_+$ such that the trajectory of $x(t)$ is included in an compact set $C\subset \Omega$, $x(t)$ converges to $f^g(p,m)$ as $t\to \infty$.

The following is our second main result.

\begin{thm}\label{THM2}
Suppose that $g:\Omega\to \mathbb{R}^n_+\setminus\{0\}$ is locally Lipschitz, and $f^g$ is single-valued. Then, the following two assertions are equivalent.
\begin{enumerate}[1)]
\item $g$ satisfies the weak weak axiom and Ville's axiom.

\item For any $(p,m)$ in the domain of $f^g$ and any improving direction function $h$, the improvement process (\ref{IMP}) according to $h$ is both locally and compact stable.
\end{enumerate}
\end{thm}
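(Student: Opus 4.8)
The plan is to prove the two implications separately, using in both directions the utility representation supplied by Theorem~\ref{THM1}: for $1)\Rightarrow 2)$ I would use $u:=u^g_v$ as a strict Lyapunov function and run a LaSalle-type argument, and for $2)\Rightarrow 1)$ I would argue by contraposition, treating the failure of Ville's axiom and the failure of the weak weak axiom by two separate constructions of an unstable improvement process.

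\emph{Proof of $1)\Rightarrow 2)$.} Fix $(p,m)$ in the domain of $f^g$ and put $x^*=f^g(p,m)$. By III) of Theorem~\ref{THM1}, $u:=u^g_v$ is quasi-concave, satisfies property (F), and $f^g=f^u$; since $f^g$ is single-valued, $x^*$ is the \emph{unique} maximizer of $u$ over $\Delta(p,m)$. First, $\Delta(p,m)$ is forward invariant under (\ref{IMP}) because $\tfrac{d}{dt}(p\cdot x(t))=p\cdot h(x(t))\le 0$ whenever $p\cdot x(t)=m$, and the improvement process is defined with $x_0\in\Delta(p,m)$, so every solution stays in $\Delta(p,m)$. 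The main step is the monotonicity claim: along any such solution $t\mapsto u(x(t))$ is nondecreasing and is strictly increasing on every subinterval on which $x(t)\neq x^*$. This is where $u$ being only locally Lipschitz matters; I would derive it from property (F) exactly as in the proof of Theorem~\ref{THM1}, using that each nonempty level set of $u$ is a $C^1$ hypersurface with normal $g$ oriented toward larger values of $u$ (approximate $g(z)$ by vectors in $\mathbb{R}^n_{++}$ to see the orientation), so that $g(x(t))\cdot\dot x(t)>0$ forces $x(t)$ to cross level sets upward — the quantitative comparison of $x(t')$, for $t'$ near $t$, with the tangent hyperplane of the level set at $x(t)$ gives this on a short subinterval, and a covering argument on a compact time interval completes it. Granting monotonicity, compact stability is immediate: if a solution stays in a compact $C\subset\Omega$, its $\omega$-limit set $W$ is nonempty, compact and weakly invariant, $u$ is constant on $W$ (being the limit of the monotone bounded sequence $u(x(t))$), and if some $\bar x\in W$ had $\bar x\neq x^*$ then $u$ would strictly increase along a solution through $\bar x$ lying in $W$, a contradiction; hence $W=\{x^*\}$ and $x(t)\to x^*$. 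For local stability, fix $\delta>0$ with $\overline{B(x^*,\delta)}\subset\Omega$; since $x^*$ is the unique maximizer of $u$ over $\Delta(p,m)$, the nested compact sets $\{x\in\overline{B(x^*,\delta)}\cap\Delta(p,m):u(x)\ge u(x^*)-\varepsilon\}$ intersect in $\{x^*\}$, so for small $\varepsilon>0$ one of them lies in $B(x^*,\delta/2)$; take $U:=B(x^*,\delta/2)\cap\{x:u(x)>u(x^*)-\varepsilon\}$. A solution of (\ref{IMP}) starting in $U\cap\Delta(p,m)$ has $u(x(t))$ nondecreasing, so it cannot reach $\partial B(x^*,\delta)$ (at a first such time it would sit in $\partial B(x^*,\delta)\cap\Delta(p,m)$ with $u\ge u(x^*)-\varepsilon$, hence in $B(x^*,\delta/2)$); thus it stays in the compact set $\overline{B(x^*,\delta)}\cap\Delta(p,m)\subset\Omega$, exists for all $t\ge 0$, and converges to $x^*$ by the LaSalle argument.

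\emph{Proof of $2)\Rightarrow 1)$, by contraposition.} Suppose first that Ville's axiom fails. Pick a Ville curve and upgrade it to an embedded $C^1$ closed Ville curve $\gamma:[0,T]\to\Omega$ with $g(\gamma(t))\cdot\dot\gamma(t)>0$ everywhere: cut out self-intersecting sub-loops (each is again a Ville curve) to obtain a simple one, then round its finitely many corners using velocities in the open convex cone spanned by the one-sided tangents — which are not anti-parallel since the curve is simple — on which $g\cdot(\cdot)>0$ persists near the corner (the curve from Step~1 of the proof of Lemma~\ref{LEM5} may also be used). Choose $x^*\in\Omega$ with $x^*\gg\gamma(t)$ for all $t$; then $g(x^*)\cdot x^*>g(x^*)\cdot\gamma(t)$ for all $t$ because $g(x^*)\in\mathbb{R}^n_+\setminus\{0\}$, so with $p:=g(x^*)$, $m:=p\cdot x^*$ we have $(p,m)$ in the domain of $f^g$, $f^g(p,m)=\{x^*\}$, and $\gamma([0,T])\subset\{x:p\cdot x<m\}$. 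Reparametrize $\gamma$ so that it solves $\dot x=\dot\gamma$ and set $h:=\dot\gamma$ on the image of $\gamma$; extend $h$ to a continuous improvement direction function on $\Omega$ by a Tietze extension followed by a partition-of-unity correction, using that at each $x\in\Delta(p,m)\setminus\{x^*\}$ the admissible velocities form a nonempty open convex cone (on the budget hyperplane this uses $g(x)\not\parallel p$ for $x\neq x^*$, which forces $\{w:g(x)\cdot w>0\}\cap\{w:p\cdot w\le 0\}\neq\emptyset$) and that $\dot\gamma(t)$ lies in that cone at $\gamma(t)$ (a point off the budget hyperplane). Then $\gamma$, extended periodically, is a solution of (\ref{IMP}) on $\mathbb{R}_+$ whose trajectory lies in a compact subset of $\Omega$ and does not converge to $x^*$, so the process is not compact stable and $2)$ fails.

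Suppose instead that the weak weak axiom fails but Ville's axiom holds. Then $u:=u^g_v$ exists and satisfies property (F) by II) of Theorem~\ref{THM1}, but by III) of Theorem~\ref{THM1} it is \emph{not} quasi-concave, so some superlevel set is non-convex, yielding a segment $[a,b]$ on which $t\mapsto u((1-t)a+tb)$ has an interior minimum at some $\bar t\in(0,1)$. Writing $z^*=(1-\bar t)a+\bar t b$, the segment near $z^*$ lies in $\{u\ge u(z^*)\}$ and meets its $C^1$ boundary only at $z^*$, hence is tangent to the level set of $u$ through $z^*$, i.e. $g(z^*)\cdot(b-a)=0$. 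Put $p:=g(z^*)$, $m:=p\cdot z^*$, so $x^*:=f^g(p,m)=\{z^*\}$; then every point $(1-t)a+tb$ with $t$ near $\bar t$ satisfies $p\cdot((1-t)a+tb)=m$ and $u>u(z^*)$, so $\Delta(p,m)$ contains points arbitrarily close to $z^*$ whose $u$-value strictly exceeds $u(z^*)$. For \emph{any} improvement direction function $h$, $u$ is nondecreasing along solutions of (\ref{IMP}) (the monotonicity claim uses only property (F)), so a solution started at such a point keeps $u$ bounded below by a value $>u(z^*)$ and, by continuity of $u$, cannot converge to $z^*=x^*$; since such starting points accumulate at $x^*$, no neighborhood $U$ of $x^*$ works and the process is not locally stable, so $2)$ fails.

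\emph{Main obstacle.} I expect the hardest part to be the two constructions in the Ville-failure case: upgrading an arbitrary Ville curve to an embedded $C^1$ one, and then producing a \emph{continuous} improvement direction function realizing it while keeping the velocity in the admissible cone at every point of $\Delta(p,m)$ — the cones degenerate as $x\to x^*$ along the budget hyperplane, so the partition-of-unity gluing must be carried out carefully near $x^*$ (letting $|h|\to 0$ there). On the other side, the only genuine subtlety is the monotonicity of the merely locally Lipschitz $u^g_v$ along solutions; everything else reduces to the $C^1$ level-set foliation furnished by property (F).
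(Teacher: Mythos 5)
Your proof is correct and follows essentially the same strategy as the paper: use $u^g_v$ as a Lyapunov function for $1)\Rightarrow 2)$, and for $2)\Rightarrow 1)$ split into the case where Ville's axiom fails (build a $C^1$ closed Ville curve, embed it as the trajectory of a continuous improvement direction function via a tubular-neighborhood/partition-of-unity gluing, and observe the resulting periodic orbit refutes compact stability) and the case where the weak weak axiom fails (pick the interior minimizer of $u^g_v$ along a ``bad'' segment, set $p=g(z^*)$ and $m=p\cdot z^*$, and refute local stability). The paper's proof of the Ville-failure case uses the explicit four-piece Ville curve furnished by Step~1 of Lemma~\ref{LEM5} (already simple) together with Fact~\ref{FACT5} and an explicit bump-function smoothing and gluing, whereas you propose to extract a simple sub-loop from an arbitrary Ville curve and then round corners; both routes work, and your choice of $(p,m)$ via $x^*\gg\gamma(t)$, $p=g(x^*)$, $m=p\cdot x^*$ is actually a bit cleaner than the paper's, since it guarantees $(p,m)$ lies in the domain of $f^g$ with $f^g(p,m)=\{x^*\}$. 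One small slip: the reason the one-sided tangents at a corner of a Ville curve cannot be anti-parallel is \emph{not} that the curve is simple (a simple curve can have a cusp-like corner with opposite one-sided tangents); it is that $g(x(t_0))\cdot D_-x(t_0)>0$ and $g(x(t_0))\cdot D_+x(t_0)>0$ are incompatible with $D_+x(t_0)$ being a negative multiple of $D_-x(t_0)$. With that correction, and granting that the monotonicity of $u^g_v$ along solutions is exactly what the paper's Lemma~\ref{LEM4}(i) supplies, your argument matches the paper's.
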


\noindent
{\bf Remarks on Theorem \ref{THM2}}. Theorem \ref{THM1} states that, in the static theory, the weak weak axiom and Ville's axiom are necessary and sufficient for justifying the utility maximization hypothesis. In contrast, Theorem \ref{THM2} states that, in the dynamic theory, the same axioms are necessary and sufficient for stability requirements. Because each stability requirement represents the consumer succeeds in finding a termination point of the transaction, we conclude that the condition for the consumer to ensure closing his/her transaction successfully is to match his/her behavior with the utility maximization hypothesis.

We explain why the utility maximization hypothesis is needed for stability. First, suppose that 1) of Theorem \ref{THM2} holds. Define $L(x)=u^g_v(f^g(p,m))-u^g_v(x)$, where $u^g_v$ is defined in II) of Theorem \ref{THM1}. Then, we can easily verify that $L(x)$ is a Lyapunov function of the differential equation (\ref{IMP}). Therefore, $f^g(p,m)$ is stable in both senses. Second, suppose that $g$ violates Ville's axiom. We can show that there exists a Ville curve $x:[0,T]\to \Omega$ that is a solution to some improvement process. Therefore, such an improvement process fails to satisfy compact stability. Third, suppose that $g$ satisfies Ville's axiom but violates the weak weak axiom. Then, $u^g_v$ is not quasi-concave, and thus for some $(p,m)$, $f^g(p,m)$ is not a local maximum point of $u^g_v$. Therefore, for any open neighborhood $U$ of $f^g(p,m)$, there exists $x_0\in U$ such that $u^g_v(x_0)>u^g_v(f^g(p,m))$, which implies that any improvement process fails to satisfy the local stability.

We also explain why we need two concepts of stability. Essentially, what is desired is global stability. That is, it is most desirable that for any $x_0$, there exists a solution defined on $\mathbb{R}_+$ in (1), and any such solution converges to $f^g(p,m)$ as $t\to \infty$. However, the conditions imposed on the improvement direction function $h$ are too weak, and the possibility exists that $x(t)$ can be defined only on a finite time interval. Therefore, global stability cannot be used. Compact stability is requested in place of this global stability. However, there could be a possibility that if $x_0\neq f^g(p,m)$, then there is no solution defined on $\mathbb{R}_+$ such that the orbit of $x(t)$ is contained within the compact set of $\Omega$. For example, suppose that $n=2$, $g(x)=x$, $p=(1,1)$, and $m=2$. In this case, $f^g(p,m)=(1,1)$. If we define $h(x)=x-\frac{(x_1+x_2)^2}{4}(1,1)$, then by the Cauchy--Schwarz inequality, we can show that this $h(x)$ satisfies our requirements. Moreover, $h(x)$ satisfies the compact stability, but the reason is very odd. That is, if $x_0\neq (1,1)$ and $p\cdot x_0=m$, then any solution to (1) reachs some point in $\mathbb{R}^2_+\setminus \mathbb{R}^2_{++}$ in a finite time, and thus $x(t)$ can only be defined on finite time interval. Recall that the concept of stability is interpreted as the condition under which the consumer can find the transaction stopping point. It is clear that a consumer that seeks $f^g(p,m)$ according to the improving process using this $h(x)$ fails to find $f^g(p,m)$, even though this process satisfies the compact stability. As can be seen from this example, compact stability is not sufficient to represent our interpretation of stability. In this view, local stability exactly provides what is missing.

\subsection{Evaluation of Axioms}
Both the weak weak axiom and Ville's axiom have a certain interpretation as an axiom of rationality, as discussed in subsection 3.1. However, in addition to their economic interpretations, axioms have another point of evaluation: how strong they are mathematically. From the definition alone, neither axioms can be ascertained in terms of their strength. Thus, we try to derive equivalent conditions that are relatively easy to evaluate its strength.

First, choose any locally Lipschitz function $g:\Omega\to \mathbb{R}^n_+\setminus\{0\}$. By Rademacher's theorem, $g$ is differentiable almost everywhere. We introduce three conditions on $g$ at $x\in \Omega$.
\begin{description}
\item{(A1)} If $v\cdot g(x)=0$, then $\liminf_{t\downarrow 0}\frac{1}{t}v\cdot (g(x+tv)-g(x))\le 0$.

\item{(A2)} If $v\cdot g(x)=0$ and $v\neq 0$, then $\liminf_{t\downarrow 0}\frac{1}{t}v\cdot (g(x+tv)-g(x))<0$.

\item{(B)} $g$ is differentiable at $x$, and if $i,j,k\in \{1,...,n\}$, then\footnote{In this equation, the variable $x$ is abbreviated.}
\[g_i\left(\frac{\partial g_j}{\partial x_k}-\frac{\partial g_k}{\partial x_j}\right)+g_j\left(\frac{\partial g_k}{\partial x_i}-\frac{\partial g_i}{\partial x_k}\right)+g_k\left(\frac{\partial g_i}{\partial x_j}-\frac{\partial g_j}{\partial x_i}\right)=0.\]
\end{description}

Then, the following result holds.

\begin{thm}\label{THM3}
Suppose that $g:\Omega\to \mathbb{R}^n_+\setminus\{0\}$ is locally Lipschitz. Then, the following three results hold.
\begin{enumerate}[i)]
\item The function $g$ satisfies the weak weak axiom if and only if condition (A1) holds everywhere.

\item If condition (A2) holds everywhere, then $g$ satisfies the weak axiom.

\item $g$ satisfies Ville's axiom if and only if condition (B) holds almost everywhere.
\end{enumerate}
\end{thm}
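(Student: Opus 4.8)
The plan is to take the three parts in increasing order of difficulty, using that (A1), (A2), (B) are the infinitesimal shadows of the corresponding global conditions: each ``only the pointwise condition'' direction is a differentiation, each converse is an integration argument. For the forward half of i), assume the weak weak axiom and fix $x\in\Omega$ and $v\neq 0$ with $v\cdot g(x)=0$. Then $g(x)\cdot(x+tv)=g(x)\cdot x$, so for $t>0$ small enough that $x+tv\in\Omega$ the axiom applied to the pair $x,\ x+tv$ gives $g(x+tv)\cdot x\ge g(x+tv)\cdot(x+tv)$, i.e. $v\cdot g(x+tv)\le 0$; dividing by $t$ and using $v\cdot g(x)=0$ yields $\tfrac{1}{t}\,v\cdot(g(x+tv)-g(x))\le 0$, and $\limsup_{t\downarrow 0}$ gives (A1).

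For the converses of i) and ii) I reduce each global axiom to a one-dimensional statement. Given $x,y\in\Omega$, put $v=y-x$ and $\psi(t)=g(x+tv)\cdot v$ on $[0,1]$; the segment lies in $\Omega$ by convexity, $\psi$ is locally Lipschitz, $\psi(0)=g(x)\cdot(y-x)$, $\psi(1)=g(y)\cdot(y-x)$, and the weak axiom (resp. the weak weak axiom) says precisely that $\psi(0)\le 0$ forces $\psi<0$ on $(0,1]$ (resp. $\psi\le 0$ on $[0,1]$), for all such pairs. The crucial observation is that whenever $\psi(s)=0$ one has $v\cdot g(x+sv)=0$, so (A2) (resp. (A1)) at the point $x+sv$ in the direction $v$ bounds the right upper Dini derivative: $D^{+}\psi(s)<0$ (resp. $\le 0$). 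Under (A2) this closes ii): if $\psi(t_{1})\ge 0$ for some $t_{1}\in(0,1]$, then---using $\psi(0)\le 0$, and if $\psi(0)=0$ the strict sign of $D^{+}\psi(0)$ to pass to a point just to the right---there is $a$ with $\psi(a)=0$ and $\psi\ge 0$ on a right neighbourhood of $a$, contradicting $D^{+}\psi(a)<0$. Under (A1), however, $D^{+}\psi(s)\le 0$ alone does not forbid $\psi$ from touching and leaving $0$, so more is needed.

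To finish i), argue by contradiction. Given a violation $g(x)\cdot(y-x)\le 0$, $g(y)\cdot(y-x)>0$, replace $x$ by $x_{\varepsilon}=x+\varepsilon g(x)\in\Omega$; for small $\varepsilon>0$ one has $g(x_{\varepsilon})\cdot(y-x_{\varepsilon})=g(x)\cdot(y-x)-\varepsilon|g(x)|^{2}+o(\varepsilon)<0$ while $g(y)\cdot(y-x_{\varepsilon})$ stays positive, so it suffices to rule out a \emph{strict} violation, i.e. a pair with $\psi(0)<0<\psi(1)$. Let $\tau$ be the last zero of $\psi$, so $\psi(\tau)=0$ and $\psi>0$ on $(\tau,1]$, and set $z=x_{\varepsilon}+\tau v$, $v=y-x_{\varepsilon}$. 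If $g$ is differentiable at $z$ with $v^{\top}Dg(z)\,v\neq 0$, then (A1) forces $\psi'(\tau)=v^{\top}Dg(z)\,v<0$, contradicting $\psi>0$ just to the right of $\tau$. The remaining degenerate case---$z$ a point of non-differentiability, or $v^{\top}Dg(z)\,v=0$ persisting along the zero set---is where the full strength of (A1) (every point, every admissible direction, and the one-sided $\limsup$ rather than an honest derivative) must be exploited: I would move the strict violation slightly so that the last zero of the new $\psi$ is a differentiability point with $v^{\top}Dg\,v\neq 0$, using Rademacher's theorem to see that the degenerate set cannot obstruct every such perturbation. I expect this to be the main obstacle of the proof.

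For iii), the first step is to recognize (B), at a point where $g$ is differentiable, as the vanishing of $\omega\wedge d\omega$ for the $1$-form $\omega=\sum_{i}g_{i}\,dx_{i}$---equivalently, the Frobenius integrability condition for the hyperplane field $x\mapsto\{v:v\cdot g(x)=0\}$. For ``Ville $\Rightarrow$ (B) a.e.'' I argue contrapositively: if (B) fails on a set of positive measure, pick a Lebesgue density point $x_{0}$ of it at which $g$ is differentiable; since $\omega\wedge d\omega\neq 0$ there, the $1$-jet of $\omega$ at $x_{0}$ is a contact form on a suitable $3$-plane, and one constructs an arbitrarily small closed curve $\gamma$ near $x_{0}$ with $g(\gamma(t))\cdot\dot\gamma(t)>0$ throughout---a Ville curve---by a Darboux normal form together with the ``three-sided tower'' device of the Remarks on Theorem~\ref{THM1} and Step~1 of the proof of Lemma~\ref{LEM5}; here the deviation of $g$ from its affine model on a ball of radius $\delta$ is $o(\delta)$, hence negligible against the $\Theta(\delta^{2})$ model value of $g\cdot\dot\gamma$, so positivity survives. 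For ``(B) a.e. $\Rightarrow$ Ville'' I would, in place of the classical Frobenius theorem (unavailable since $g$ is merely locally Lipschitz), build the integral manifolds of the hyperplane field directly, by flowing along the locally Lipschitz vector fields $V_{i}(x)=g_{n}(x)e_{i}-g_{i}(x)e_{n}$ that frame it (assuming $g_{n}\neq 0$ locally after relabeling); local Lipschitzness secures existence, uniqueness, and a $C^{1}$ leaf structure, and the almost-everywhere identity (B) enters as an absolutely continuous integrability condition guaranteeing that the flows commute and the leaves are well defined. Transporting a transverse coordinate along the leaves yields a function $u$ with property (F), and II) of Theorem~\ref{THM1} then gives Ville's axiom; making this leaf construction and the global gluing rigorous with (B) holding only almost everywhere is the technical core of this part.
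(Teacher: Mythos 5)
Your forward half of i) is exactly the paper's computation, and your direct one-dimensional argument for ii) is a correct and arguably cleaner route than the paper's (which derives ii) from Lemma~\ref{LEM3} and weak convexity of $\succsim^g$, whereas you go straight to the function $\psi(t)=g(x+tv)\cdot v$, use the strict sign of $D^+\psi$ at zeros, and only have to treat the endpoint $\psi(1)=0$ separately by applying (A2) at $y$ in the direction $-v$). Your sketch of iii) is in the right spirit: the paper also argues via property~(F) and an extension of Frobenius' theorem for locally Lipschitz vector fields (Fact~\ref{FACT6}), and your contact-geometry idea for Ville $\Rightarrow$ (B) a.e.\ is a genuine alternative, though both of your directions for iii) remain far from the level of rigor the paper achieves.

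The real problem is the converse of i), and you have correctly identified it but underestimated it. Under (A1) alone the function $\psi$ can touch zero and leave it sub-linearly: $\psi(\tau+s)=s^2$ for $s>0$ small has $\limsup_{s\downarrow 0}\psi(\tau+s)/s=0\le 0$, so it is fully consistent with (A1) at $z=x+\tau v$, yet $\psi>0$ to the right of $\tau$. Moreover this is not a degeneracy of non-differentiability or of $v^{\top}Dg(z)v=0$ alone that a Rademacher-type perturbation can escape: you would need to show that for an open set of nearby pairs $(x',y')$ the last zero of the perturbed $\psi$ is a non-degenerate differentiability point, and nothing in Rademacher's theorem controls where the set $\{\,z:\,g\text{ differentiable at }z,\ v^{\top}Dg(z)v<0\,\}$ sits relative to the zero set of $\psi$. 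A one-dimensional segment argument genuinely cannot close this; (A1) has no strict version to lean on, and the segment only visits $g$ along one line. The paper's proof of this direction therefore abandons segments entirely. It shows instead that (A1) everywhere implies $\succsim^g$ is weakly convex (which by Lemma~\ref{LEM3} is equivalent to the weak weak axiom): given a failure of weak convexity, it parametrizes the indifference curve $y(t)$ through the worst chord point $z^*$ via the flow $y(\cdot\,;z^*,y)$, writes the chord as $x(t)=y(t)+a(t)p^*$ with a ``gap'' function $a(t)\ge 0$ and $a(t^*)=a'(t^*)=0$, and then applies (A1) at the \emph{curve} points $y(t)$ (where $z(t)\cdot g(y(t))=0$ holds exactly by construction, not just at one point) to obtain one-sided Dini-derivative inequalities for $a'$. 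A maximum/minimum argument on $h(t)=a'(t)(t_1-t^*)-a'(t_1)(t-t^*)$ then forces $a'\le 0$ and hence $a\equiv 0$, a contradiction. This is the idea you are missing: to make a non-strict second-order condition yield a contradiction you must track a two-sided Dini quantity along the integral curve of the subjective-exchange-ratio field, not along the straight chord, precisely because the chord is where (A1) gives no traction.
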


Suppose that, in addition to the local Lipschitz property, $g_n(x)\neq 0$. Define $\bar{g}(x)=\frac{1}{g_n(x)}g(x)$, and for $i,j\in \{1,...,n-1\}$,
\[a_{ij}(x)=\frac{\partial \bar{g}_i}{\partial x_j}(x)-\frac{\partial \bar{g}_i}{\partial x_n}(x)\bar{g}_j(x).\]
The $(n-1)\times (n-1)$ matrix-valued function $A_g(x)=(a_{ij}(x))_{i,j=1}^{n-1}$ is called the {\bf Antonelli matrix} of $g$. We can show the following proposition.

\begin{prop}\label{PROP1}
Suppose that $g:\Omega\to \mathbb{R}^n_+\setminus \{0\}$ is locally Lipschitz. Then, for each $x\in \Omega$ such that $g$ is differentiable at $x$ and $g_n(x)\neq 0$, the following results hold.
\begin{enumerate}[i)]
\item $g$ satisfies condition (A1) at $x$ if and only if $A_g(x)$ is negative semi-definite.

\item $g$ satisfies condition (A2) at $x$ if and only if $A_g(x)$ is negative definite.

\item $g$ satisfies condition (B) at $x$ if and only if $A_g(x)$ is symmetric.
\end{enumerate}
\end{prop}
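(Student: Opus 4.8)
The plan is to reduce all three parts to one explicit identity relating the Jacobian $Dg(x)=(\partial g_i/\partial x_j(x))$ to the Antonelli matrix, and then read off i)--iii). Fix $x$ with $g$ differentiable at $x$ and $g_n(x)>0$ (positivity holds because $g$ maps into $\mathbb{R}^n_+\setminus\{0\}$ and $g_n(x)\neq 0$); write $\bar g=\bar g(x)$ and $D=Dg(x)$. Since $g$ is differentiable at $x$, for every $v$ the limit $\lim_{t\downarrow 0}\frac1t\,v\cdot(g(x+tv)-g(x))$ exists and equals $Q(v):=v^{\top}Dv=\sum_{i,j}v_iv_j\,\partial g_i/\partial x_j(x)$. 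Hence at this point (A1) asserts exactly that $Q(v)\le 0$ whenever $g(x)\cdot v=0$, and (A2) that $Q(v)<0$ whenever $g(x)\cdot v=0$ and $v\neq 0$.

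First I would introduce the basis $e^{(i)}=e_i-\bar g_i e_n$, $i=1,\dots,n-1$, of the hyperplane $H=\{v\in\mathbb{R}^n:g(x)\cdot v=0\}$ (each $e^{(i)}$ lies in $H$ since $g_i-\bar g_i g_n=0$, and they are manifestly independent), so that $w\mapsto\sum_{i=1}^{n-1}w_ie^{(i)}$ is a linear bijection $\mathbb{R}^{n-1}\to H$ sending $w\neq 0$ to $v\neq 0$. A direct expansion gives
\[
(e^{(i)})^{\top}D\,e^{(j)}=\frac{\partial g_i}{\partial x_j}-\bar g_j\frac{\partial g_i}{\partial x_n}-\bar g_i\frac{\partial g_n}{\partial x_j}+\bar g_i\bar g_j\frac{\partial g_n}{\partial x_n},
\]
and differentiating $\bar g_i=g_i/g_n$ shows the right-hand side equals $g_n(x)\,a_{ij}(x)$. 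Therefore $Q\bigl(\sum_iw_ie^{(i)}\bigr)=g_n(x)\,w^{\top}A_g(x)\,w$ for all $w$. Since $g_n(x)>0$, the restriction of $Q$ to $H$ is negative semi-definite (resp.\ negative definite) if and only if $A_g(x)$ is; combined with the previous paragraph this gives i) and ii).

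For iii) I would compute the antisymmetric part of the same identity: clearing denominators in the formula for $g_n(x)a_{ij}(x)$ yields
\[
g_n(x)^2\bigl(a_{ij}(x)-a_{ji}(x)\bigr)=\pm\Bigl[g_i\Bigl(\tfrac{\partial g_j}{\partial x_n}-\tfrac{\partial g_n}{\partial x_j}\Bigr)+g_j\Bigl(\tfrac{\partial g_n}{\partial x_i}-\tfrac{\partial g_i}{\partial x_n}\Bigr)+g_n\Bigl(\tfrac{\partial g_i}{\partial x_j}-\tfrac{\partial g_j}{\partial x_i}\Bigr)\Bigr],
\]
which, up to sign, is the left-hand side of (B) for the triple $(i,j,n)$. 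So $A_g(x)$ is symmetric if and only if (B) holds at $x$ for every triple whose last index is $n$. This is trivially implied by (B); the real content is the converse, and for it I would pass to differential forms. Put $\omega=\sum_kg_k\,dx_k$; then, ranging over all triples, the left-hand sides of (B) are (up to sign) the components of the $3$-form $\eta:=\omega\wedge d\omega$, so ``(B) at $x$'' means $\eta(x)=0$. Since $\omega\wedge\eta=\omega\wedge\omega\wedge d\omega=0$ and $\omega(x)\neq 0$, elementary exterior algebra gives $\eta(x)=\omega(x)\wedge\beta$ for some $2$-form $\beta$. Expanding $\eta(x)(e_i,e_j,e_n)=\omega(e_i)\beta(e_j,e_n)-\omega(e_j)\beta(e_i,e_n)+\omega(e_n)\beta(e_i,e_j)$ and using that this vanishes for all $i,j$ together with $\omega(e_n)=g_n(x)\neq 0$, one finds $\beta=-\frac{1}{g_n(x)}\,\omega(x)\wedge\mu$ with $\mu:=\beta(\cdot,e_n)$, whence $\eta(x)=-\frac{1}{g_n(x)}\,\omega(x)\wedge\omega(x)\wedge\mu=0$, i.e.\ (B) holds.

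The routine parts are the two bracket expansions; the one step that needs an idea rather than a calculation — and the main obstacle — is the last one, namely that a $3$-form annihilated by $\omega$ and vanishing on all triples containing a fixed vector $e_n\notin\ker\omega$ must vanish identically. (A reader who prefers to avoid forms can instead solve the $(i,j,n)$-instances of (B) for $\partial g_q/\partial x_p-\partial g_p/\partial x_q$ with $p,q\le n-1$ in terms of the analogous quantities with one index equal to $n$, substitute into the $(i,j,k)$-instance with $k\le n-1$, and verify the cancellation directly.)
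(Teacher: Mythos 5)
Your proof is correct, and it takes a noticeably different route from the paper's at two points. For i) and ii), the paper first invokes its Lemma \ref{LEM6} (conditions (A1), (A2), (B) are invariant under multiplication by a positive locally Lipschitz scalar) to normalize $g_n\equiv 1$, writes $Dg(x)$ in block form, and then reads off the restriction of $v^{\top}Dg(x)v$ to the tangent hyperplane; you instead skip the normalization, construct the basis $e^{(i)}=e_i-\bar g_ie_n$ of $H=\ker g(x)$ directly, and verify $(e^{(i)})^{\top}Dg(x)e^{(j)}=g_n(x)a_{ij}(x)$ by differentiating $\bar g_i=g_i/g_n$. The two computations are essentially the same change of variables, but yours avoids the normalization lemma and works with the raw $g$ throughout, at the cost of a slightly longer bracket expansion.

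The genuine divergence is in iii). The paper shows that the $(i,j,n)$-instances of condition (B) imply all $(i,j,k)$-instances by the algebraic trick of multiplying those three relations by $g_k$, $g_i$, $g_j$, summing, and factoring out $g_n\neq 0$; you prove the same reduction via exterior algebra, identifying (B) at $x$ with $(\omega\wedge d\omega)(x)=0$ for $\omega=\sum_k g_k\,dx_k$, using $\omega\wedge(\omega\wedge d\omega)=0$ to write $\eta=\omega\wedge\beta$, and then using the $(\cdot,\cdot,e_n)$-vanishing together with $\omega(e_n)=g_n\neq 0$ to force $\beta$ into the ideal generated by $\omega$. Both are correct; the paper's argument is entirely elementary and self-contained, whereas your exterior-algebra route is more conceptual and makes the ``Frobenius'' character of condition (B) transparent. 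You correctly identify the reduction step as the only part needing an idea, and your parenthetical remark describing the elementary alternative is in fact exactly what the paper does. One minor stylistic caution: $\beta$ with $\eta=\omega\wedge\beta$ is not unique, so the clean way to phrase it is that any such $\beta$, once constrained by the $(i,j,n)$-vanishing, necessarily lies in $(\omega)$, from which $\eta=0$ follows; your write-up implicitly does this, but it is worth stating explicitly.
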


Using these results, we can evaluate the strength of Ville's axiom. Let $\mathscr{G}$ be the set of all locally Lipschitz function $g:\Omega\to \mathbb{R}^n_+\setminus \{0\}$ such that $\sum_{i=1}^ng_i(x)=1$ for all $x\in \Omega$.

\begin{cor}\label{COR1}
If $n=2$, then every $g\in \mathscr{G}$ satisfies Ville's axiom. If $n\ge 3$, the set of all $g\in \mathscr{G}$ that satisfies Ville's axiom is nowhere dense with respect to any topology of $\mathscr{G}$ such that the function $(1-t)g_1+tg_2$ is continuous in $t$.
\end{cor}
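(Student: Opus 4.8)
The plan is to route everything through Theorem~\ref{THM3} iii), which identifies Ville's axiom with condition (B) holding almost everywhere, and to write $\mathscr{V}$ for the set of $g\in\mathscr{G}$ satisfying Ville's axiom. For $n=2$ the claim is immediate: any index triple $i,j,k\in\{1,2\}$ has a repeated index, and when, say, $i=j$ the left-hand side of (B) collapses,
\[
g_i(\partial_k g_i-\partial_i g_k)+g_i(\partial_i g_k-\partial_k g_i)+g_k(\partial_i g_i-\partial_i g_i)=0,
\]
and similarly for the other coincidences; hence (B) holds at every point of differentiability of $g$, so a.e.\ by Rademacher's theorem, and Theorem~\ref{THM3} iii) gives $\mathscr{V}=\mathscr{G}$. (Equivalently, the Antonelli matrix $A_g$ is $1\times1$, hence symmetric, so Proposition~\ref{PROP1} iii) applies.)

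For $n\ge 3$ I would show $\mathscr{V}$ is closed and has empty interior, from which nowhere density follows. The first subtask is to exhibit one element $g^*\in\mathscr{G}$ outside $\mathscr{V}$. Put $\gamma(x)=(e^{x_2},e^{x_3},e^{x_1},1,\dots,1)\in\mathbb{R}^n$ (with no trailing ones if $n=3$) and let $g^*=\big(\textstyle\sum_i\gamma_i\big)^{-1}\gamma$; then $g^*$ is smooth, strictly positive, and lies on the simplex, so $g^*\in\mathscr{G}$. Since positive rescaling of a vector field does not change the sign of $g(x(t))\cdot\dot{x}(t)$ along a curve, $g^*$ and $\gamma$ have exactly the same Ville curves, and for $\gamma$ the left-hand side of (B) for the triple $(1,2,3)$ equals $e^{x_2+x_3}+e^{x_1+x_3}+e^{x_1+x_2}>0$ everywhere; thus (B) fails on a full-measure set, and Theorem~\ref{THM3} iii) gives $g^*\notin\mathscr{V}$.

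To see that $\mathscr{V}$ has empty interior, fix $g_1\in\mathscr{V}$ and form the segment $g_t=(1-t)g_1+tg^*$. Each $g_t$ lies in $\mathscr{G}$ (convex combinations preserve local Lipschitz continuity, nonnegativity, and the normalization $\sum_i(g_t)_i\equiv1$), and $t\mapsto g_t$ is continuous by the hypothesis on the topology. At a.e.\ $x$ both $g_1$ and $g^*$ are differentiable, and there the left-hand side of (B) for $g_t$, for a fixed index triple, is a quadratic polynomial $Q_x(t)=\alpha(x)t^2+\beta(x)t+c(x)$, since each of its three summands is a product of two affine-in-$t$ factors. Because $g_1\in\mathscr{V}$ we have $c(x)=Q_x(0)=0$ for a.e.\ $x$, and because $g^*\notin\mathscr{V}$ the value $Q_x(1)=\alpha(x)+\beta(x)$ is nonzero on a positive-measure set $E$ for the triple $(1,2,3)$. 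If $g_t\in\mathscr{V}$ for two distinct $t_1,t_2\in(0,1]$, then (B) a.e.\ forces $\alpha(x)t_j+\beta(x)=0$ a.e.\ on $E$ for $j=1,2$, hence $\alpha(x)=\beta(x)=0$ a.e.\ on $E$, contradicting $\alpha+\beta\neq0$ on $E$. So at most one $t\in(0,1]$ has $g_t\in\mathscr{V}$; choosing a sequence $t_n\downarrow0$ that avoids it, $g_{t_n}\notin\mathscr{V}$ while $g_{t_n}\to g_1$, so every neighborhood of $g_1$ meets $\mathscr{G}\setminus\mathscr{V}$.

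Closedness of $\mathscr{V}$ is the routine part: if $g\notin\mathscr{V}$, fix a Ville curve $x:[0,T]\to\Omega$ with compact image $K\subset\Omega$; by continuity and compactness there are $\varepsilon>0,\ M<\infty$ with $g(x(t))\cdot\dot{x}(t)\ge\varepsilon$ and $|\dot{x}(t)|\le M$ on the relevant set of $t$, so any $h$ with $\sup_K|h-g|<\varepsilon/(2M)$ still satisfies $h(x(t))\cdot\dot{x}(t)>0$, i.e.\ $h\notin\mathscr{V}$; thus $\mathscr{G}\setminus\mathscr{V}$ contains a uniform-on-$K$ neighborhood of $g$ and is therefore open in any topology at least as fine as local uniform convergence (which covers the standard function-space topologies). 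Combining the two parts, $\mathscr{V}$ is closed with empty interior, hence nowhere dense. I expect the real obstacle to be the empty-interior step: keeping the segment inside $\mathscr{G}$ is precisely why the constraint $\sum_i g_i\equiv1$ is imposed, one must produce a legitimate $g^*\notin\mathscr{V}$ valid for every $n\ge3$ rather than just $n=3$, and the quadratic-in-$t$ bookkeeping together with the reduction via Theorem~\ref{THM3} iii) is where the argument genuinely lives.
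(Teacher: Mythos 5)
Your proof is correct and runs along the same main line as the paper's: reduce Ville's axiom to condition~(B) via Theorem~\ref{THM3}(iii), exhibit one element of $\mathscr{G}$ violating~(B), and push an arbitrary Ville-satisfying $g$ off the set along the segment toward it. You handle two points more carefully. The paper invokes Gale~(1960) for a $g^*$ with non-symmetric Antonelli matrix and remarks that his $n=3$ construction extends to $n\ge 4$; you instead build one explicitly from $\gamma(x)=(e^{x_2},e^{x_3},e^{x_1},1,\dots,1)$, which is cleaner and uniform in $n$. More importantly, the paper flatly asserts that $(1-t)g+tg^+$ violates Ville's axiom for \emph{every} $t\in(0,1]$; as you observe, the left side of~(B) along the segment is quadratic in $t$ and vanishes at $t=0$, so a priori there could be exactly one exceptional $t_0\in(0,1]$ at which~(B) holds almost everywhere --- your ``at most one $t$'' bookkeeping and the choice of $t_n\downarrow 0$ avoiding it repairs this gap, which the paper's proof glosses over. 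Finally, you add a closedness argument to upgrade ``dense complement'' to ``nowhere dense''; as you yourself note, that step needs the topology to refine local-uniform convergence, which the bare segment-continuity hypothesis does not guarantee. The paper's own proof establishes only density of the complement (empty interior), so your version if anything surfaces a small imprecision in the corollary's statement rather than introducing one.
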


Hence, we conclude that Ville's axiom is, at least mathematically, strong whenever $n\ge 3$. In contrast, we could not derive a similar result on the weak weak axiom. The reason is demonstrated in Proposition \ref{PROP1}. Ville's axiom corresponds to the symmetry of the Antonelli matrix. If $n=2$, then the Antonelli matrix is $1\times 1$ matrix, which is automatically symmetric. If $n\ge 3$, however, the Antonelli matrix is $(n-1)\times (n-1)$ matrix, and in the space of such matrices, the set of all symmetric matrices is a null set. In contrast, the weak weak axiom corresponds to the negative semi-definiteness of the Antonelli matrix, and even when $n\ge 3$, the set of negative semi-definite matrices is not null. Therefore, the set of $g$ that satisfies the weak weak axiom may have a nonempty interior.

Of course, the fact that an axiom is strong does not immediately mean that it should not be adopted. However, to recognize the fact that Ville's axiom is mathematically quite strong is important. Whenever we model a human through utility maximization, we are placing this strong assumption.

\subsection{Embedding Subjective Exchange Ratios into the Space of Preferences}
In this paper, we discussed the theory of preference relations and the theory of subjective values separately. In fact, however, for any locally Lipschitz function $g$, we can construct a naturally corresponding preference relation $\succsim^g$. The construction is one-to-one and continuous, and, moreover, unique in a certain sense. Although this result will be verified and heavily used as a lemma in the proofs of our main results, we here describe this result itself as a theorem.

We define an additional condition on binary relations. Suppose that $\succsim$ is a binary relation on $\Omega$. Then, it is said to be {\bf p-transitive} if for any $x,y,z\in \Omega$, $\dim(\mbox{span}\{x,y,z\})\le 2$, $x\succsim y$, and $y\succsim z$ imply $x\succsim z$. Any transitive preference is p-transitive, and the converse is true only when $n=2$.

Now, suppose that $g:\Omega\to \mathbb{R}^n_+\setminus \{0\}$ is locally Lipschitz, and choose $x,v\in \Omega$ and consider the following differential equation:
\[\dot{y}(t)=(g(y(t))\cdot x)v-(g(y(t))\cdot v)x,\ y(0)=x.\]
Let $y(t;x,v)$ be the solution function of the above equation. We will prove in the proof section that there exists $t(x,v)\ge 0$ such that $y(t(x,v);x,v)$ is proportional to $v$. Define
\[x\succsim^gv\Leftrightarrow y(t(x,v);x,v)\ge v.\]
Then, the following theorem holds.

\begin{thm}\label{THM4}
For any locally Lipschitz function $g:\Omega\to \mathbb{R}^n_+\setminus \{0\}$, $\succsim^g$ is a complete, p-transitive, continuous, and monotone binary relation on $\Omega$, and the following results hold.
\begin{enumerate}[I)]
\item If $f^g=f^{\succsim}$ for some complete, p-transitive, and continuous binary relation, then $\succsim=\succsim^g$.

\item The following three conditions are equivalent.
\begin{enumerate}[i)]
\item The function $g$ satisfies the weak weak axiom.

\item $\succsim^g$ is weakly convex.

\item $f^g=f^{\succsim^g}$.
\end{enumerate}

\item The following three conditions are equivalent.
\begin{enumerate}[i)]
\item The function $g$ satisfies Ville's axiom.

\item $\succsim^g$ is transitive.

\item $u^g_v$ represents $\succsim^g$, where $u^g_v$ is defined in II) of Theorem \ref{THM1}.
\end{enumerate}

\item If $h:\Omega\to \mathbb{R}^n_+\setminus\{0\}$ is another locally Lipschitz function, then $\succsim^g=\succsim^h$ if and only if $h(x)$ is proportional to $g(x)$ everywhere.

\item The mapping $g\mapsto \succsim^g$ is continuous on the pair of the topology of compact convergence and the closed convergence topology.
\end{enumerate}
\end{thm}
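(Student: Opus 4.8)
The plan is to handle the construction in three stages: first show that the time $t(x,v)$ exists and depends continuously on the data, then read off the four unconditional properties of $\succsim^g$ from the geometry of the defining equation, and finally derive I)--V) from the earlier theorems together with one genuine lemma. For the first stage, fix $x,v\in\Omega$: if $v$ is proportional to $x$ the right-hand side vanishes at $x$, so $y(\cdot;x,v)\equiv x$ and $t(x,v)=0$ works. Otherwise write $y(t)=\alpha(t)x+\beta(t)v$, which is legitimate because the right-hand side always lies in $\mathrm{span}\{x,v\}$; in these coordinates $\dot\alpha=-(g(y)\cdot v)$ and $\dot\beta=g(y)\cdot x$, both of constant sign since $g(y)\gneq 0$ and $x,v\gg 0$. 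Because $x,v\gg 0$ there are constants $0<c\le C$ with $cv_i\le x_i\le Cv_i$ for every $i$, hence $c(g(y)\cdot v)\le g(y)\cdot x\le C(g(y)\cdot v)$, so $d\beta/d(-\alpha)\in[c,C]$ along the orbit; thus as $\alpha$ decreases from $1$ the coefficient $\beta$ stays in $[0,C]$, the orbit remains in a fixed compact subset of $\Omega$, and there $g(y)\cdot v$ is bounded away from $0$ and $\infty$. Consequently $\alpha$ reaches $0$ in finite time $t(x,v)$ at a point $y(t(x,v);x,v)=\beta^\ast v$ with $\beta^\ast\in[c,C]$. Since these bounds depend only on $x,v$, standard ODE estimates (Gr\"onwall, needing only the limiting field's local Lipschitz constant) give that $(x,v)\mapsto(t(x,v),\beta^\ast)$ is continuous and, moreover, jointly continuous when $g$ varies under compact convergence; this last point is what V) needs.

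For the second stage, observe that $x\succsim^g v\Leftrightarrow\beta^\ast\ge 1$ and that reflexivity is immediate. The structural fact is that inside any plane $V$ meeting $\Omega$, every field $y\mapsto(g(y)\cdot a)b-(g(y)\cdot b)a$ with $a,b\in V$ is tangent to the wedge $W=V\cap\Omega$ and orthogonal to $g$, hence all such fields generate the same Lipschitz line field $\mathcal{L}(w)=V\cap g(w)^{\perp}$ on $W$, which is genuinely one-dimensional because $g(w)\gneq 0$ cannot annihilate a positive vector. The integral curves of $\mathcal{L}$ foliate the simply connected $W$; in the coordinates above they satisfy $\dot\alpha<0<\dot\beta$, so they never close up and meet each ray through the origin at most once. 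Using the ray $\{s(x+v):s>0\}$ as a global transversal and defining a continuous $\ell:W\to\mathbb{R}$ that is constant on leaves and strictly increasing along every ray from the origin, one checks that $a\succsim^g b\Leftrightarrow\ell(a)\ge\ell(b)$ for all $a,b\in W$. Completeness, monotonicity, and p-transitivity (any three points of span at most two lie in one such $W$) follow at once, and continuity of $\succsim^g$ as a relation follows from the first stage: $(x_k,v_k)\to(x,v)$ together with $\beta^\ast_k\ge 1$ force $\beta^\ast_k v_k\to\beta^\ast v\ge v$.

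The conditional parts rest on one lemma, which I expect to be the main obstacle: each $\mathcal{L}$-leaf is an indifference curve of any representing preference. If $f^g=f^{\succsim}$ for complete, p-transitive, continuous $\succsim$, then for each $y$ on a leaf $g(y)$ supports $y$, so $y$ is $\succsim$-maximal on $\{w:g(y)\cdot w\le g(y)\cdot y\}$; since $\succsim$ restricted to the ambient plane is a continuous weak order (by p-transitivity) and the leaf is everywhere tangent to the supporting hyperplane, a revealed-preference limiting argument shows that any two points of a leaf are indifferent and that $\succsim$ is strictly increasing along rays. Hence $x\sim\beta^\ast v$ and $\beta^\ast v\succsim v\Leftrightarrow\beta^\ast\ge 1$, so $x\succsim v\Leftrightarrow x\succsim^g v$, which is I). Part II) then follows from I) and I) of Theorem \ref{THM1}: under the weak weak axiom the complete relation satisfying the LNST condition produced there is p-transitive and continuous, hence equals $\succsim^g$, giving $f^g=f^{\succsim^g}$; and weak convexity of $\succsim^g$ amounts to quasi-concavity of the local potential $\ell$, which by Theorem \ref{THM3} i) and Proposition \ref{PROP1} i) is equivalent to negative semidefiniteness of the Antonelli matrix, i.e. to the weak weak axiom.

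For III), if Ville's axiom holds then Theorem \ref{THM1} II) supplies $u^g_v$ with property (F); its level sets are $C^1$ hypersurfaces with tangent space $g(x)^{\perp}$ at each $x$, so each $\mathcal{L}$-leaf lies in a level set, whence $u^g_v(x)=u^g_v(\beta^\ast v)=\beta^\ast$ and $x\succsim^g v\Leftrightarrow u^g_v(x)\ge u^g_v(v)$; thus $\succsim^g$ is the weak order represented by $u^g_v$. Conversely, if Ville's axiom fails the three-segment Ville curve from the remark after Theorem \ref{THM1} yields $x,y,z$ with $y\succ^g x$, $z\succ^g y$, $ax\succ^g z$ (moving along a curve with $g\cdot\dot x>0$ strictly raises the relevant $\ell$) while $x\succ^g ax$ by monotonicity, contradicting transitivity. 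Part IV) is direct: replacing $g$ by $\lambda g$ with $\lambda>0$ multiplies the defining field by $\lambda(y)>0$, leaving every orbit and endpoint unchanged; conversely $\succsim^g=\succsim^h$ forces the leaves through each point inside every plane to coincide, hence $g(x)^{\perp}=h(x)^{\perp}$, so $h(x)$ is proportional to $g(x)$. Finally V) follows from the joint continuity in $g$: it gives $\limsup_k\succsim^{g_k}\subseteq\succsim^g$ directly, while the reverse inclusion at a merely indifferent pair $x\sim^g v$ is obtained by approximating with $((1+\tfrac1k)x,v)\in\succ^g$ and a diagonal argument, since strict preference is stable under small perturbations of the points and of $g$.
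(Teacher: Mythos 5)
Your construction of $t(x,v)$ via the $\alpha,\beta$ coordinates in $\mathrm{span}\{x,v\}$ is sound and equivalent to the paper's machinery (Lemma \ref{LEM1} and Steps 1--9 of Lemma \ref{LEM2}), and your sketches for IV) and V) capture the right ideas, if compressed. The problems lie in I)--III). For I), you defer the essential step to "a revealed-preference limiting argument"; but that limiting argument \emph{is} the content of I), and the paper carries it out explicitly through the Euler polygonal approximation $x^k_i$ of the indifference curve: at each vertex $g(x_i^k)\cdot(x_{i+1}^k-x_i^k)=0$, so $x_i^k\in f^{\succsim}(g(x_i^k),g(x_i^k)\cdot x_i^k)$ forces $x_i^k\succsim x_{i+1}^k$; p-transitivity chained along the polygon (all vertices lie in the 2-plane), uniform convergence to $y(\cdot;x,v)$, and continuity of $\succsim$ then give leaf-indifference and, with a strict step for points strictly inside the budget set, the identity $\succsim=\succsim^g$. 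You cannot simply assert that such an argument exists.

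For II), the chain ``weak convexity $\Leftrightarrow$ quasi-concavity of the planar potential $\ell$ $\Leftrightarrow$ negative semi-definiteness of the Antonelli matrix $\Leftrightarrow$ weak weak axiom'' is both circular and ill-posed: the Antonelli matrix is an $(n-1)\times(n-1)$ object defined only where $g$ is differentiable with $g_n\neq 0$, not the Hessian of your two-dimensional $\ell$, and the paper's proof of Theorem \ref{THM3}(i) itself \emph{invokes} Lemma \ref{LEM3}, which is exactly the statement of II). The correct route is the direct one in Lemma \ref{LEM3}, Step 5, showing the Euler polygon is convex toward the origin under the weak weak axiom. The reverse direction of III) has a similar circularity: the remark after Theorem \ref{THM1} and its footnote both trace back to Step 1 of Lemma \ref{LEM5}, which constructs the three-segment Ville curve \emph{from} an intransitivity witness, so the applicability of that remark to an arbitrary failure of Ville's axiom already presupposes the equivalence you are trying to prove. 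What is actually needed is the nontrivial contrapositive of Step 4 of Lemma \ref{LEM5}: transitivity of $\succsim^g$ implies that $u^g_v$ satisfies property (F), whence Ville's axiom holds by Lemma \ref{LEM4}.
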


\noindent
{\bf Remarks on Theorem \ref{THM4}}. As I)-III) implies, $\succsim^g$ naturally corresponds to $g$. In particular, by II), the weak weak axiom of $g$ corresponds to the weak convexity of $\succsim^g$, and by III), Ville's axiom of $g$ corresponds to the transitivity of $\succsim^g$, respectively. Thus, some rationality conditions of $g$ are equivalent to that of $\succsim^g$. In this sense, $\succsim^g$ is a natural preference representation of $g$.

We believe that $g_i(x)$ represents the subjective value of the $i$-th commodity. However, this value has no unit. Therefore, not the absolute level of the value $g_i(x)$, but its ratio is most important, because it represents the subjective exchange ratio. In this sense, $h(x)$ in IV) expresses exactly the same subjective exchange ratio as $g(x)$. Thus, economically speaking, it is natural that $g$ and $h$ correspond to the same preference. On the other hand, if $h(x)$ and $g(x)$ express different subjective exchange ratios, it is natural that they correspond to different preferences. IV) of Theorem \ref{THM4} implies that these natural requirements hold. Furthermore, V) expresses that if the difference between two subjective exchange ratios is small, then corresponding preferences also have only a small difference. In this sense, we can say that $\succsim^g$ is the preference expression that naturally corresponds to $g$.

\section{Comparison with Related Literature}
The research topic of this paper was already known as the integrability problem at the end of the 19th century. Antonelli (1886) was the first monograph to tackle this problem, and in this paper, there is a reference to condition (B). Pareto (1906) also addressed this problem, but Volterra was critical of Pareto's discussion of integrability in his review (Volterra, 1906). Hence, Pareto tried to develop a consumer theory without condition (B) in the French version of his book (Pareto, 1909). Pareto argued that condition (B) is related to the order of consumption. This idea is later implicitly criticized by Samuelson (1950). Suda (2007) surveyed these arguments in detail.

The weak axiom and the weak weak axiom are paraphrases of the corresponding conditions in the demand function, respectively. The former was first treated by Samuelson (1938), and it was shown that the Slutsky matrix is negative semi-definite under this condition. The latter was first presented by Kihlstrom et al. (1976), and it was shown that this condition is equivalent to the negative semi-definiteness of the Slutsky matrix under the assumption that the demand function is differentiable. Samuelson (1950) showed that, if both $g$ and $f^g$ are differentiable, then the Antonelli matrix is the inverse of a submatrix of the Slutsky matrix. Thus, condition (A2) is equivalent to the negative semi-definiteness of the Slutsky matrix, and condition (B) is equivalent to the symmetry of the Slutsky matrix. Condition (A1) does not have a corresponding property of the Slutsky matrix, since it is worthwhile only when $f^g$ is not differentiable.\footnote{Indeed, if $g$ and $f^g$ are differentiable and $g$ satisfies condition (B), then $A_g(x)$ is symmetric and invertible, and thus it is negative semi-definite if and only if it is negative definite. Hence, condition (A1) is equivalent to condition (A2).}

Ville's axiom was introduced by Ville (1946). Hurwicz and Richter (1979) showed that if $g$ is continuously differentiable, then this axiom is equivalent to condition (B). However, Ville's axiom is more essential for our results because we can show Theorem \ref{THM1} with no use of condition (B). In the proof section, we show that if $g$ does not satisfy Ville's axiom, then there exists a Ville curve such that there exists an increasing finite sequence $t_0,t_1,t_2,t_3$ such that $t_0=0$, $x(t_3)=ax(t_0)$ for some $a<1$, and $x(t)\in \mbox{span}\{x(t_i),x(t_{i+1})\}$ when $t\in [t_i,t_{i+1}]$. Samuelson (1950) described this as what happens to a consumer who violates condition (B), and he called such an individual ``easily cheated''. Our interpretation of Ville's axiom follows this, and a consumer who satisfies Ville's axiom is seen as ``hardly cheated'' in this sense.

Hurwicz (1971) explained that integrability theory is the theory that finds conditions for demand correspondence to be associated with utility maximization. This explanation, however, is problematic in two ways. First, if this explanation is correct, then revealed preference theory is a part of integrability theory. However, most researchers would disagree with this view. Second, although integrability theory is a theory that has existed since the 19th century, it is unlikely that the concept of ``demand correspondence'' existed when Antonelli wrote his monograph. Therefore, integrability theory in Hurwicz's explanation is at least separated from the classical integrability theory.

Hurwicz and Uzawa (1971) is considered to be the leading modern treatise on integrability theory. In their introduction, they classify integrability theory into direct and indirect approaches. Most of the traditional work on integrability theory is classified as the indirect approach in their classification. On the other hand, Hurwicz--Uzawa theory is a direct approach, which characterizes utility maximization in terms of Slutsky matrix. The fact that it deals with Slutsky matrix means that the most important object of analysis in this theory is the demand function, and the subjective value function does not appear there. In this sense, the research objective of their theory is different from that of integrability theory in the 19th century.

Debreu (1972), on the other hand, considered a method to obtain the corresponding twice continuously differentiable utility function $u$ starting from a continuously differentiable subjective value function $g$ that satisfies condition (B), and used it to discuss the conditions for the demand function to become differentiable. The reason why he wanted such a result is as follows. When he discussed the discreteness of the set of equilibrium prices in Debreu (1970), he assumed the differentiability of the demand function. However, Katzner (1968) found an example where the demand function is not differentiable even though the corresponding utility function is smooth, and thus, Debreu needed a condition for a demand function to become differentiable. Debreu's condition for the demand function to be differentiable can be verified to be correct by a method that has nothing to do with the integrability problem. On the other hand, the part about finding $u$ from $g$ is known to contain some error. In his corrigendum (Debreu, 1976), he explained that, contrary to his original claim, there are cases where $g$ is continuously differentiable and it has no corresponding twice continuously differentiable $u$, but he did not explain which part of the proof is problematic. We too cannot prove that $u^g_v$ in Theorem \ref{THM1} is $C^{k+1}$, even if $g$ is $C^k$.

The construction of $u^g_v$ in Theorem \ref{THM1} has already been shown by Hosoya (2013), provided that $g$ is continuously differentiable and $g(x)\in \mathbb{R}^n_{++}$. However, that proof is considerably different from the proof in this paper. In the first place, neither the weak weak axiom nor Ville's axiom are treated in Hosoya (2013), and what are treated instead are conditions (A1) and (B). The proofs are also radically different. Hosoya (2013) used condition (A1) to prove the quasi-concavity of $u^g_v$, but what was discussed there was essentially the characterization theorem by differentiation of the quasi-concavity presented in Otani (1983). In contrast, in this paper, we mainly use the fact that the approximate polygonal line of the indifference curve constructed by Euler's method is convex toward the origin under the weak weak axiom. These are two completely different arguments.

Moreover, because Ville's axiom treated in this paper is too strong, it is not equivalent to condition (B) if the range of the vector field $g$ is not necessarily contained in $\mathbb{R}^n_+$. A counterexample to this is given in Debreu (1972). In this connection, we mention the known result related to condition (B). If $g$ is $C^1$, then condition (B) is known as a necessary and sufficient condition for the local existence of a real-valued function $u$ whose gradient vector field is positively proportional to $g$ at each point, and this result is called Frobenius' theorem. This theorem was already known in the 19th century, and was essentially important for results in Debreu (1972) and Hosoya (2013). Although Frobenius' theorem only treats the continuously differentiable vector fields, when $g$ is locally Lipschitz, then a similar result is provided in Hosoya (2021). We use this extension of Frobenius' theorem in the proof of Theorem \ref{THM3}. However, this theorem is not used in the proof of Theorems \ref{THM1} and \ref{THM2}. In this sense, the proof of our main results are completely different from that in Debreu (1972) or Hosoya (2013).

Although we imposed the local Lipschitz property for $g$, readers may think that only the continuity of $g$ should be imposed. However, we know that if $g$ is not locally Lipschitz, then there may be multiple candidates for the indifference curve passing through $x$. As a result, I) of Theorem \ref{THM4} can never be derived, nor can Theorem \ref{THM1} be proved in the first place. An actual case in which I) of Theorem \ref{THM4} cannot be derived was provided in Mas-Colell (1977).

In this connection, we mention the uniqueness theorem provided in Hosoya (2020). This paper proved that under the assumption that there exists a weak order $\succsim$ such that $f=f^{\succsim}$, if $f$ satisfies the {\bf income-Lipschitzian property}, then there uniquely exists an upper semi-continuous weak order $\succsim^*$ such that $f=f^{\succsim^*}$. On the other hand, I) of Theorem \ref{THM4} do not assume anything on $f^g$ and derive the uniqueness of the continuous weak order $\succsim^g$ that satisfies $f^g=f^{\succsim^g}$ from the local Lipschitz property of $g$. These results may possibly be independent. Unfortunately, we do not know if there exists a function $g$ such that $f^g$ does not satisfy the income-Lipschitzian requirement but $g$ is locally Lipschitz. Conversely, there is an example of $f^u$ such that there is no single-valued continuous function $g$ such that $f^g=f^u$, but it is income-Lipschitzian: see example of Hurwicz and Uzawa (1971). Therefore, we at least know that for this $f^u$, the result of Hosoya (2020) is applicable, while I) of Theorem \ref{THM4} is not applicable.

In connection with Theorem \ref{THM2}, it is unknown whether the same result can be obtained by assuming that $p\cdot x_0=m$, $p\cdot h(x)=0$ for all $x\in \Delta(p,m)$ such that $p\cdot x=m$, and the solution to (\ref{IMP}) always satisfies $p\cdot x(t)=m$. In this case, a similar result can still be derived if $g$ satisfies Ville's axiom. In other words, if Ville's axiom is satisfied, then both stabilities hold if and only if $g$ satisfies the weak weak axiom. However, it is unknown whether there exists a Ville curve that always satisfies $p\cdot x(t)=m$ if $g$ violates Ville's axiom. This is one of the open problems related to this paper. Perhaps the theory of de Rham cohomology on manifolds may be useful. See chapter 4 of Guillemin and Pollack (1976) for more detailed arguments.

For Theorem \ref{THM3}, it is known that if there exists a twice continuously differentiable function $u$ such that $g=\nabla u$, then condition (B) holds. Moreover, in this case, condition (A1) is equivalent to $v^TD^2u(x)v\le 0$ whenever $\nabla u(x)\cdot v=0$, and condition (A2) is equivalent to $v^TD^2v<0$ whenever $\nabla u(x)\cdot v=0$ and $v\neq 0$. The former is called the bordered Hessian condition, and the latter is called the strict bordered Hessian condition. When $u$ is twice continuously differentiable and $\nabla u(x)\neq 0$ for all $x$, the quasi-concavity of $u$ is equivalent to the bordered Hessian condition (Otani, 1983), and the continuous differentiability of $f^u$ is equivalent to the strict bordered Hessian condition (Debreu, 1972). Recently, Hosoya (2022) showed that when $u$ and $g$ are continuously differentiable and $\nabla u$ is positively proportional to $g$, condition (A1) is also equivalent to the quasi-concavity of $u$, and $u$ is strictly quasi-concave if $g$ satisfies condition (A2). However, no such result was known when $g$ is not differentiable, and thus Theorem \ref{THM3} is a new result.

The continuity of the mapping $g\mapsto \succsim^g$ has been proved in Hosoya (2015) under the assumption that $g$ is continuously differentiable. However, in this paper, $g$ is not necessarily continuously differentiable, and thus this result is a new result.

Finally, we discuss the relationship between our axioms and axioms of revealed preference. Richter (1966) showed that $f^g=f^{\succsim}$ for some weak order $\succsim$ if and only if $f^g$ satisfies the congruence axiom of revealed preference. By Theorem \ref{THM1}, we see that $f^g$ satisfies the congruence axiom of revealed preference if and only if $g$ satisfies both the weak weak axiom and Ville's axiom. Also, by IV) of Theorem \ref{THM1} and an easy calculation, we can show that $g$ satisfies the weak axiom if and only if $f^g$ is single-valued and satisfies the weak axiom of revealed preference. On the other hand, it is not easy to characterize the case where $g$ satisfies the weak weak axiom but not the weak axiom. In this case, we can easily show that there exists an indifference curve of $\succsim^g$ in Theorem \ref{THM4} that contains a line segment. However, since $\succsim^g$ is not necessarily transitive, we cannot show that $f^g=f^{\succsim^g}$ is not single-valued. If $f^g$ is single-valued, then $f^g$ satisfies the weak weak axiom of revealed preference presented in Kihlstrom et al. (1976). If $f^g$ is not single-valued, then it satisfies the weak axiom of revealed preference as a multi-valued choice function.

\section{Conclusion}
In this paper, we develop a consumer theory based on subjective exchange ratios and derive axioms for its consistency with the utility maximization hypothesis. We also confirmed that the conditions for a consumer who can only see subjective values under which he/she can find a transaction stopping point are the same axioms to be consistent with the utility maximization hypothesis.

In addition to these results, we derived equivalence conditions of axioms for considering their mathematical strength. Based on these equivalence condition, we confirmed that Ville's axiom is very strong when the number of commodities is more than two. We also discussed how to express our theory of subjective exchange ratios in terms of binary relations.

These results allow us to adequately explain the additional conditions implicitly imposed on consumer behavior by the utility maximization hypothesis. Although some open problems remain, we believe that this paper has allowed us to better understand the place of the utility maximization hypothesis in the theory of consumer behavior.

\section{Proofs}
\subsection{Mathematical Preliminary}
We repeatedly use the property of the solution function of differential equations, Lipschitz analysis, and differential topology in the proof section. In this subsection, we introduce several important facts without proof.

First, we explain some knowledge of ordinary differential equations (ODEs). Consider the following ODE:
\begin{equation}\label{eq6.00}
\dot{x}(t)=h(t,x(t)),\ x(t_0)=x^*,
\end{equation}
where $h:U\to \mathbb{R}^N$ and $U\subset \mathbb{R}\times\mathbb{R}^N$ is open. We call a subset $I$ of $\mathbb{R}$ an {\bf interval} if it is a convex set containing at least two points. We say that a function $x:I\to \mathbb{R}^N$ is a {\bf solution} to (\ref{eq6.00}) if and only if 1) $I$ is an interval containing $t_0$, 2) $x(t_0)=x^*$, 3) $x$ is continuously differentiable, 4) the graph of $x$ is included in $U$, and 5) $\dot{x}(t)=h(t,x(t))$ for every $t\in I$. Let $x:I\to \mathbb{R}^N$ and $y:J\to \mathbb{R}^N$ be two solutions. Then, we say that $x$ is an {\bf extension} of $y$ if $J\subset I$ and $y(t)=x(t)$ for all $t\in J$. A solution $x:I\to \mathbb{R}^N$ is called a {\bf nonextendable solution} if there is no extension except $x$ itself.

\begin{fact}\label{FACT1}
Suppose that $h$ is locally Lipschitz. Then, for every interval $I$ including $t_0$, there exists at most one solution to (\ref{eq6.00}) defined on $I$. In particular, there exists a unique nonextendable solution $x:I\to \mathbb{R}^N$ to (\ref{eq6.00}), where $I$ is open and $x(t)$ is continuously differentiable. Moreover, for every compact set $C\subset U$, there exist $t_1,t_2\in I$ such that if $t\in I$ and either $t<t_1$ or $t_2<t$, then $(t,x(t))\notin C$.\footnote{For a proof, see Theorems 1.1 and 3.1 in chapter 2 of Hartman (1997).}
\end{fact}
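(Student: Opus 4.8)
The plan is to deduce this from the classical Picard--Lindel\"of theorem together with a maximal-solution (``blow-up'') argument. As a preliminary I would record the standard fact that a locally Lipschitz $h$ is in fact Lipschitz on every compact subset $K$ of its open domain $U$: if not, one can pick $(s_k,\xi_k),(t_k,\eta_k)\in K$ with $|h(s_k,\xi_k)-h(t_k,\eta_k)|>k|(s_k,\xi_k)-(t_k,\eta_k)|$; since $h$ is bounded on $K$ the two points must collapse, and passing to a subsequence converging to a common limit in $K$ contradicts the local Lipschitz constant there. Throughout, ``locally Lipschitz'' on $U\subset\mathbb{R}\times\mathbb{R}^N$ entails continuity and (local) Lipschitz dependence on the state variable, which is all Picard--Lindel\"of needs.

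For uniqueness on a fixed interval $I\ni t_0$: let $x,y$ be two solutions and fix a compact subinterval $[a,b]\subset I$ with $t_0\in[a,b]$. The union of the two graphs over $[a,b]$ is a compact subset $K\subset U$, so $h$ is $L$-Lipschitz on $K$; writing both solutions in integral form $x(t)=x^{*}+\int_{t_0}^{t}h(s,x(s))\,ds$ and subtracting gives $|x(t)-y(t)|\le L\left|\int_{t_0}^{t}|x(s)-y(s)|\,ds\right|$, and Gr\"onwall's inequality forces $x=y$ on $[a,b]$, hence on $I$.

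For the nonextendable solution: local existence near $t_0$ is Picard--Lindel\"of (a contraction mapping on a small closed ball in $C([t_0-\delta,t_0+\delta],\mathbb{R}^N)$). Let $I$ be the union of all intervals containing $t_0$ on which a solution to (\ref{eq6.00}) exists; by the uniqueness just proved, these solutions agree on overlaps and glue to a single solution $x:I\to\mathbb{R}^N$, and the same uniqueness shows every solution is a restriction of $x$, so $x$ is the unique nonextendable one. $I$ is an interval containing $t_0$, and it is open: if, say, its right endpoint $\beta$ lay in $I$, one could solve the initial value problem starting at $(\beta,x(\beta))\in U$ and splice, producing a proper extension---contradiction; the left endpoint is symmetric. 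Continuous differentiability is immediate, since $\dot{x}(t)=h(t,x(t))$ has a right-hand side continuous in $t$.

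The only genuinely nontrivial part, and the main obstacle, is the escape-from-compacta property. Write $I=(\alpha,\beta)$. Since $C\subset\mathbb{R}\times\mathbb{R}^N$ is compact, its time-projection lies in some $[-T,T]$, so if $\beta=\infty$ we may take $t_2=T$ and if $\alpha=-\infty$ we may take $t_1=-T$; thus it suffices to handle a finite endpoint, say $\beta<\infty$. Suppose toward a contradiction that there is a sequence $t_k\uparrow\beta$ with $(t_k,x(t_k))\in C$. By compactness, after passing to a subsequence, $x(t_k)\to\xi$ with $(\beta,\xi)\in C\subset U$. Choose a closed ball $B$ around $(\beta,\xi)$ inside $U$ on which $|h|\le M$; a standard connectedness argument (once $(t_k,x(t_k))$ is close enough to $(\beta,\xi)$, the graph of $x$ over $[t_k,\beta)$ cannot leave $B$, since it starts near $(\beta,\xi)$ and moves at speed $\le M$) shows that for $t,t'$ near $\beta$ one has $|x(t)-x(t')|\le M|t-t'|$, so $x(t)$ is Cauchy as $t\uparrow\beta$ and converges to $\xi$. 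Setting $x(\beta)=\xi$ extends $x$ continuously to $(\alpha,\beta]$, and since $\dot{x}=h(\cdot,x)$ with continuous right-hand side, $x$ is $C^1$ there with $\dot{x}(\beta)=h(\beta,\xi)$; solving the initial value problem at $(\beta,\xi)\in U$ and splicing then contradicts nonextendability. Hence some $t_2<\beta$ works, and the left endpoint is treated identically.
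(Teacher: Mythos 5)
The paper does not prove Fact~\ref{FACT1}; it simply refers the reader to Theorems~1.1 and~3.1 in Chapter~2 of Hartman (1997). Your proposal supplies a self-contained proof, and it is correct: uniqueness via Gr\"onwall on compact subintervals, construction of the maximal solution by gluing over the union of solution intervals and checking that any endpoint lying in the domain would permit a splice (so the domain is open), and the escape-from-compacta property by the standard argument that a sequence $(t_k,x(t_k))\in C$ with $t_k\uparrow\beta<\infty$ would force $x(t)$ to converge to a limit $\xi$ with $(\beta,\xi)\in U$, allowing a forbidden extension. This is precisely the argument found in Hartman (and in most ODE texts), so it is the ``same route'' the paper implicitly relies on; your version just makes it explicit.

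Two very small points worth noting. First, your preliminary observation that a locally Lipschitz map is Lipschitz on compacta is, under the paper's own definition of ``locally Lipschitz'' (Section~6.1 and Fact~\ref{FACT7}), true by definition --- the paper defines local Lipschitzness via compact sets and records the equivalence with the pointwise version as Fact~\ref{FACT7} --- so that reduction is already built in. Second, when $\beta=\infty$ you take $t_2=T$ from the time-projection bound of $C$; strictly you also need $t_2\in I$, so one should take $t_2>\max\{T,t_0\}$, which lies in $I$ because $I$ is an interval containing $t_0$ that is unbounded above. These are cosmetic; the argument is sound.
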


Next, consider the following parametrized ODE:
\begin{equation}\label{eq6.01}
\dot{x}(t)=k(t,x(t),y),\ x(t_0)=z,
\end{equation}
where $k:U\to \mathbb{R}^N$ and $U\subset \mathbb{R}\times \mathbb{R}^N\times \mathbb{R}^M$ is open. We assume that $k$ is locally Lipschitz. Fix $(y,z)$ such that $(t_0,z,y)\in U$. Then, (\ref{eq6.01}) can be seen as (\ref{eq6.00}), where $h(t,x)=k(t,x,y)$ and $x^*=z$. Hence, we can define a nonextendable solution $x^{y,z}:I\to \mathbb{R}^N$ according to Fact \ref{FACT1}. We write $x(t;y,z)=x^{y,z}(t)$, and call this function $x:(t,y,z)\mapsto x(t;y,z)$ the {\bf solution function} of (\ref{eq6.01}).

\begin{fact}\label{FACT2}
Under the assumption that $k$ is locally Lipschitz, the domain of the solution function of (\ref{eq6.01}) is open, and the solution function is locally Lipschitz.\footnote{For a proof, See Hosoya (2024).}
\end{fact}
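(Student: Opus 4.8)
The plan is to prove Fact \ref{FACT2} by first reducing the parameter dependence to dependence on the initial value, and then running the classical Gronwall argument together with the escape-from-compacts clause of Fact \ref{FACT1}. The reduction is to absorb $y$ into the state: consider on $U$ the augmented ODE $\dot{x}(t)=k(t,x(t),Y(t))$, $\dot{Y}(t)=0$ with initial condition $(x(t_0),Y(t_0))=(z,y)$. Its right-hand side is locally Lipschitz, and by uniqueness its nonextendable solution is precisely $t\mapsto(x(t;y,z),y)$; hence both claims reduce to the statement for the solution function $(t,w)\mapsto x(t;w)$ of a system $\dot{x}=h(t,x)$, $x(t_0)=w$, with $h$ locally Lipschitz on an open $V\subset\mathbb{R}\times\mathbb{R}^N$.

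For openness of the domain, fix $(t_1,w_1)$ in the domain; say $t_1\ge t_0$ (the other case is symmetric). Since the interval of definition $I_{w_1}$ of the nonextendable solution $x^{w_1}$ is open, $[t_0,t_1+\delta_0]\subset I_{w_1}$ for some $\delta_0>0$, and $\Gamma=\{(t,x^{w_1}(t)):t\in[t_0,t_1+\delta_0]\}$ is a compact subset of $V$. Choose $\rho>0$ so that the closed tube $C=\{(t,\xi):t\in[t_0,t_1+\delta_0],\ |\xi-x^{w_1}(t)|\le\rho\}$ lies in $V$, and let $L,M$ be a Lipschitz constant and a bound for $h$ on $C$. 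I would then show by a maximality argument that if $|w-w_1|e^{L(t_1+\delta_0-t_0)}<\rho$, the solution $x^{w}$ cannot leave $C$ before time $t_1+\delta_0$: letting $\tau$ be the supremum of $t\in[t_0,t_1+\delta_0]$ for which the graph of $x^{w}$ over $[t_0,t]$ stays in $C$, the Gronwall estimate $|x^{w}(t)-x^{w_1}(t)|\le|w-w_1|e^{L(t-t_0)}$ on $[t_0,\tau]$ keeps $x^{w}$ strictly inside the tube, and the escape clause of Fact \ref{FACT1} then forces $\tau=t_1+\delta_0$. Consequently $(t,w)$ lies in the domain for all $t\in(t_1-\delta_0,t_1+\delta_0)$ and all such $w$, i.e. a full neighborhood of $(t_1,w_1)$ is in the domain.

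For the Lipschitz property, with $C$, $L$, $M$ as above, take $w,w'$ such that both $x^{w}$ and $x^{w'}$ remain in $C$ on $[t_0,t]$. Subtracting the integral forms of the two equations and using the Lipschitz bound for $h$ gives $|x^{w}(t)-x^{w'}(t)|\le|w-w'|+L\int_{t_0}^{t}|x^{w}(s)-x^{w'}(s)|\,ds$ (with $\int_t^{t_0}$ if $t<t_0$), hence $|x^{w}(t)-x^{w'}(t)|\le|w-w'|e^{L|t-t_0|}$ by Gronwall — a Lipschitz estimate in the initial value, uniform for $t$ in the compact interval. Combining this with $|x^{w}(t)-x^{w}(t')|\le M|t-t'|$ (from $|\dot{x}|=|h|\le M$ on $C$) yields, via the triangle inequality, a joint local Lipschitz estimate for $(t,w)\mapsto x(t;w)$ on a neighborhood of any point of its domain; undoing the augmentation gives the local Lipschitz property of the original solution function.

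The main obstacle is the continuation/maximality step in the openness argument: one cannot apply Gronwall on all of $[t_0,t_1+\delta_0]$ until one knows $x^{w}$ is defined there, yet it is the Gronwall bound that prevents $x^{w}$ from escaping the tube and thus keeps it alive — so being defined and satisfying the estimate must be bootstrapped together, using connectedness of the interval and the compact-escape property. Some further care is needed to handle the forward ($t>t_0$) and backward ($t<t_0$) directions separately and glue the one-sided neighborhoods, and to keep the Lipschitz estimates uniform when the intervals of definition of $x^{w}$ and $x^{w'}$ do not coincide, which is again managed by shrinking to a common compact interval around $t_0$.
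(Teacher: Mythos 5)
The paper does not give an in-text proof of Fact~\ref{FACT2}; it cites Hosoya (2024) in a footnote, so there is nothing in the paper to compare your argument against. Your proof is correct and complete: the augmentation $\dot{Y}=0$ reduces parameter dependence to initial-condition dependence via the uniqueness of Fact~\ref{FACT1}; the tube/Gronwall/compact-escape bootstrap (with the circularity you flag resolved by taking the supremum of the sub-interval over which the trajectory stays in the tube, and invoking the escape clause to rule out premature blow-up) correctly establishes openness of the domain; and combining the Gronwall estimate in the initial value with the speed bound $\lVert\dot x\rVert\le M$ gives the joint local Lipschitz property, which is exactly the standard route one would expect the cited reference to follow.
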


\begin{fact}\label{FACT3}
If $k$ is $C^k$, then the solution function of (\ref{eq6.01}) is $C^k$.\footnote{For a proof, see Section 5.3 of Hartman (1997).}
\end{fact}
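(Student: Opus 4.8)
The plan is to reduce the statement to smooth dependence of ODE solutions on their initial values and then proceed by induction on $k$ using the variational equation. First I would absorb the parameter $y$ into the initial condition: writing $w=(x,y)\in\mathbb{R}^{N+M}$ and $\tilde{k}(t,w)=(k(t,x,y),0)$, equation (\ref{eq6.01}) becomes the unparametrized ODE $\dot{w}=\tilde{k}(t,w)$, $w(t_0)=(z,y)$, whose right-hand side is $C^k$ on the open set $U$ (now viewed inside $\mathbb{R}\times\mathbb{R}^{N+M}$), and the first $N$ coordinates of its nonextendable solution recover $x(t;y,z)$. Hence it suffices to prove: if $h:V\to\mathbb{R}^d$ is $C^k$ on an open $V\subset\mathbb{R}\times\mathbb{R}^d$, then the solution function $(s,\xi)\mapsto X(s;\xi)$ of $\dot{X}=h(s,X)$, $X(t_0)=\xi$, is $C^k$ on its domain, which is open by Fact \ref{FACT2} (every $C^k$ map is locally Lipschitz, so Facts \ref{FACT1}--\ref{FACT2} apply). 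Since being $C^k$ is a local property, I may restrict to a compact $s$-interval $J$ and a small ball of initial values around a fixed $\xi_0$, on which $X$ exists, is bounded, and is already known to be locally Lipschitz in $(s,\xi)$ by Fact \ref{FACT2}.

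For the base case $k=1$, the candidate for $\partial X/\partial\xi$ is the matrix solution $\Phi(s;\xi)$ of the linear \emph{variational equation}
\[\dot{\Phi}(s)=D_xh(s,X(s;\xi))\,\Phi(s),\qquad \Phi(t_0)=I,\]
whose coefficient is continuous in $(s,\xi)$ (compose the continuous $D_xh$ with the continuous $X$ of Fact \ref{FACT2}); being linear, this equation has a unique solution depending continuously on $(s,\xi)$. To identify $\Phi$ with $\partial X/\partial\xi$, subtract the integral identities $X(s;\xi')=\xi'+\int_{t_0}^sh(r,X(r;\xi'))\,dr$ at $\xi'=\xi+\delta$ and $\xi'=\xi$, expand $h$ to first order in its space argument, and estimate the error $e(s)=X(s;\xi+\delta)-X(s;\xi)-\Phi(s;\xi)\delta$ by Gronwall's inequality, using the uniform Lipschitz bound $|X(r;\xi+\delta)-X(r;\xi)|\le L|\delta|$ from Fact \ref{FACT2} together with uniform continuity of $D_xh$ on compacta to control the Taylor remainder. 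This yields $\sup_{s\in J}|e(s)|=o(|\delta|)$ as $\delta\to0$, so $\partial X/\partial\xi=\Phi$ exists and is continuous; since $\partial X/\partial s=h(s,X(s;\xi))$ is also continuous, $X\in C^1$.

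For the inductive step, assume the claim holds for $C^{k-1}$ right-hand sides and let $h\in C^k$ with $k\ge2$. View the variational equation together with the original equation as a single parametrized ODE for $(X,\Phi)$, with parameter $\xi$ entering only through the initial value, and with right-hand side $(h(s,X),\,D_xh(s,X)\Phi)$, which is $C^{k-1}$ (hence locally Lipschitz). By the inductive hypothesis the solution function $(s,\xi)\mapsto(X(s;\xi),\Phi(s;\xi))$ is $C^{k-1}$; in particular $\partial X/\partial\xi=\Phi$ is $C^{k-1}$ jointly in $(s,\xi)$, and $\partial X/\partial s=h(s,X(s;\xi))$ is a composition of the $C^k$ map $h$ with a $C^{k-1}$ map, hence $C^{k-1}$. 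Since every first partial derivative of $X$ is $C^{k-1}$, $X$ is $C^k$; unwinding the reduction of the first paragraph completes the proof.

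I expect the real work to lie in the $k=1$ step --- making the Taylor-plus-Gronwall estimate uniform in $s$ and in the direction of $\delta$ --- and in the bookkeeping of the inductive step, namely checking that the variational system genuinely has a right-hand side exactly one derivative less regular than $h$, that the parameter $\xi$ affects only its initial value, and that ``all first partials are $C^{k-1}$'' upgrades $X$ to $C^k$ jointly. An alternative that packages these estimates differently is to define $F(t_0,z,y,x)(t)=x(t)-z-\int_{t_0}^tk(s,x(s),y)\,ds$ as a map between Banach spaces of continuous functions, observe that $D_xF$ at a solution is the identity minus a Volterra integral operator and hence invertible, and invoke the implicit function theorem; but proving that this $F$ is itself $C^k$ requires the smoothness of composition (Nemytskii) operators on spaces of continuous functions, which is of comparable difficulty to the direct argument above.
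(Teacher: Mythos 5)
Your proposal is correct and is essentially the standard argument that the paper delegates to Hartman (Section 5.3): absorb the parameter into the initial value, establish the $C^1$ case via the variational equation together with a Gronwall estimate, and then bootstrap to $C^k$ by induction, applying the hypothesis to the coupled system for $(X,\Phi)$ whose right-hand side is one order less smooth. The paper gives no proof of its own for this fact, and your reconstruction matches the cited reference's approach, including the identification $\partial X/\partial\xi=\Phi$ from the base case that the inductive step silently relies on.
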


Next, we extend the smoothness of functions. Let $f:X\to \mathbb{R}^M$, where $X\subset \mathbb{R}^N$. We say that $f$ is $C^k$ if and only if, for each $x\in X$, there exist an open neighborhood $U$ of $x$ in $\mathbb{R}^N$ and a $C^k$ function $F:U\to \mathbb{R}^M$ such that $F(x)=f(x)$ for all $x\in U\cap X$. If $f:X\to Y$ is a bijection and both $f$ and $f^{-1}$ are $C^k$, then we call $f$ a $C^k$ {\bf diffeomorphism}. A set $X\subset \mathbb{R}^N$ and $Y\subset \mathbb{R}^M$ are said to be $C^k$ {\bf diffeomorphic} if there exists a $C^k$ diffeomorphism $f:X\to Y$.

A nonempty set $X\subset \mathbb{R}^N$ is called an {\bf $L$ dimensional $C^k$ manifold} if, for every $x\in X$, there exists a neighborhood $U$ of $x$ in $X$ that is $C^k$ diffeomorphic to an open set $V\subset \mathbb{R}^L$. A corresponding diffeomorphism $\Phi:V\to U$ is called a {\bf local parametrization} around $x$. If $\Phi:V\to U$ is a $C^k$ local parametrization around $x$ and $\Phi(z)=x$, then $T_x(X)\equiv \{D\phi(z)v|v\in \mathbb{R}^L\}$ is called the {\bf tangent space} of $X$ at $x$. The following result holds.

\begin{fact}\label{FACT4}
If $X$ is an $L$ dimensional $C^k$ manifold, then any tangent space of $X$ is an $L$ dimensional linear space of $\mathbb{R}^N$ that is independently determined from the choice of the local parametrization. Let $C^1(x,X)$ be the set of all $C^1$ functions $c:I\to X$, where $I$ is an open interval including $0$ and $c(0)=x$. Then, $T_x(X)=\{\dot{c}(0)|c(t)\in C^1(x,X)\}$.
\end{fact}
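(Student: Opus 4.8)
The plan is to verify the three assertions of the statement in order, using only the definitions of $C^k$ map and $C^k$ manifold given above together with the chain rule. First, for the claim that each tangent space is an $L$-dimensional linear subspace of $\mathbb{R}^N$: fix a $C^k$ local parametrization $\Phi:V\to U$ around $x$ with $\Phi(z)=x$. By definition $T_x(X)=D\Phi(z)(\mathbb{R}^L)$ is the image of a linear map, hence a linear subspace; it remains to check that $D\Phi(z)$ has rank $L$. Since $\Phi$ is a $C^k$ diffeomorphism, $\Phi^{-1}:U\to V$ is $C^k$, so there are an open neighborhood $O$ of $x$ in $\mathbb{R}^N$ and a $C^k$ map $\Psi:O\to\mathbb{R}^L$ with $\Psi=\Phi^{-1}$ on $O\cap U$. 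Replacing $V$ by $\Phi^{-1}(O\cap U)$ (open in $\mathbb{R}^L$ by continuity of $\Phi$) we have $\Psi\circ\Phi=\mathrm{id}_V$ on $V$; differentiating at $z$ yields $D\Psi(x)\,D\Phi(z)=I_L$, so $D\Phi(z)$ is injective and $\dim T_x(X)=L$.

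Second, for independence of the parametrization: given two $C^k$ local parametrizations $\Phi_i:V_i\to U_i$ with $\Phi_i(z_i)=x$ for $i=1,2$, the set $W:=\Phi_1^{-1}(U_1\cap U_2)$ is open in $\mathbb{R}^L$ because $\Phi_1$ is a homeomorphism onto $U_1$ and $U_1\cap U_2$ is open in $X$. On $W$ the transition map $\tau:=\Phi_2^{-1}\circ\Phi_1$ is $C^k$ (compose $\Phi_1$ with a local $C^k$ extension of $\Phi_2^{-1}$), it sends $z_1$ to $z_2$, and it is a $C^k$ diffeomorphism between open subsets of $\mathbb{R}^L$, so $D\tau(z_1)$ is an invertible $L\times L$ matrix by the argument of the previous paragraph. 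From $\Phi_1=\Phi_2\circ\tau$ near $z_1$ the chain rule gives $D\Phi_1(z_1)=D\Phi_2(z_2)\,D\tau(z_1)$, whence the two images in $\mathbb{R}^N$ coincide; that is, the tangent space is well defined.

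Third, for the curve description $T_x(X)=\{\dot c(0)\mid c\in C^1(x,X)\}$. For the inclusion $\subset$: given $w=D\Phi(z)v\in T_x(X)$, the curve $c(t)=\Phi(z+tv)$, defined for $t$ in a small interval around $0$ with $z+tv\in V$, lies in $U\subset X$, satisfies $c(0)=x$, is $C^1$, and $\dot c(0)=D\Phi(z)v=w$. For the reverse inclusion: given $c\in C^1(x,X)$, pick a $C^k$ local parametrization $\Phi:V\to U$ around $x$ with $\Phi(z)=x$ and a $C^k$ extension $\Psi$ of $\Phi^{-1}$ on an open neighborhood $O$ of $x$ in $\mathbb{R}^N$; by continuity of $c$ there is an open interval $I'\ni 0$ with $c(I')\subset U\cap O$, so $\gamma(t):=\Psi(c(t))=\Phi^{-1}(c(t))$ is a $C^1$ curve in $V$ with $\gamma(0)=z$ and $c(t)=\Phi(\gamma(t))$ on $I'$; differentiating at $0$ gives $\dot c(0)=D\Phi(z)\,\dot\gamma(0)\in T_x(X)$.

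I do not expect a genuine obstacle here, since this is a standard fact of differential topology; the only step requiring care is bookkeeping around the paper's ``$C^k$ on an arbitrary subset'' convention. Specifically, one must check that a $C^k$ diffeomorphism from an open subset of $\mathbb{R}^L$ onto an open subset of $X$ has injective derivative, that overlaps of charts pull back to open subsets of $\mathbb{R}^L$, and that the transition maps and the compositions $\Psi\circ c$ are $C^k$ in the ordinary sense — all of which reduce, via the local $C^k$ extensions supplied by the definition, to the usual chain rule and inverse-function bookkeeping in Euclidean space.
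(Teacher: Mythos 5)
Your proof is correct. The paper states Fact \ref{FACT4} without proof (it is listed among standard facts ``introduced \dots without proof'' at the top of Section 6.1), so there is no proof in the paper to compare against; you have supplied the standard differential-topology argument. The three steps are all sound: the identity $D\Psi(x)\,D\Phi(z)=I_L$ (from a local $C^k$ extension $\Psi$ of $\Phi^{-1}$) gives injectivity of $D\Phi(z)$ and hence $\dim T_x(X)=L$; the chain rule applied to the transition map $\tau=\Phi_2^{-1}\circ\Phi_1$, whose derivative at $z_1$ is invertible by the same extension argument, shows the image is parametrization-independent; and the two inclusions for the curve characterization follow from $c(t)=\Phi(z+tv)$ for $\subset$ and from $\gamma=\Psi\circ c$ for $\supset$. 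You correctly identified the one point requiring care under this paper's conventions: since $U\subset X$ need not be open in $\mathbb{R}^N$, the inverse $\Phi^{-1}:U\to V$ is ``$C^k$'' only in the extended-domain sense defined in the paper, so all the differentiations of $\Phi^{-1}$ (and of $\Phi_2^{-1}$ in the transition-map step, and of $\Phi^{-1}\circ c$ in the curve step) must be routed through local $C^k$ extensions on open subsets of $\mathbb{R}^N$, which is exactly what you do.
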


The next fact is known as the {\bf $\varepsilon$-neighborhood theorem}.\footnote{For a proof, see Section 2.3 of Guillemin and Pollack (1974).}

\begin{fact}\label{FACT5}
Suppose that $X\subset \mathbb{R}^N$ is an $L$ dimensional compact $C^1$ manifold. Then, there exist $\varepsilon>0$ and a $C^1$ function $\pi:U\to X$ such that $\pi(x)=x$ for all $x\in X$, where $U=\{y\in \mathbb{R}^N|\exists x\in X\mbox{ s.t. }\|x-y\|<\varepsilon\}$.
\end{fact}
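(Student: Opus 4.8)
The plan is to run the standard tubular-neighborhood (i.e.\ $\varepsilon$-neighborhood) construction and then deal separately with the fact that $X$ is only $C^1$. First, using Fact~\ref{FACT4} to make sense of the $L$-dimensional tangent spaces, form the normal bundle
\[
N=\{(x,v)\in\mathbb{R}^N\times\mathbb{R}^N \mid x\in X,\ v\perp T_x(X)\},
\]
which contains the zero section $X_0=\{(x,0)\mid x\in X\}$, a copy of $X$. Consider $E:N\to\mathbb{R}^N$, $E(x,v)=x+v$. At a point $(x,0)$ the tangent space to $N$ is $T_x(X)\times T_x(X)^{\perp}$ and $DE_{(x,0)}(a,b)=a+b$, a linear isomorphism because $T_x(X)\oplus T_x(X)^{\perp}=\mathbb{R}^N$; hence by the inverse function theorem $E$ is a local diffeomorphism at every point of $X_0$.

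Next I would globalize using compactness. Since $E$ is a local diffeomorphism at each point of the compact set $X_0$ and is injective on $X_0$ (as $E(x,0)=x$), a routine argument shows $E$ is injective on some neighborhood of $X_0$, which by compactness of $X$ can be taken of the form $N_\varepsilon=\{(x,v)\in N\mid\|v\|<\varepsilon\}$; then $E$ restricts to a diffeomorphism of $N_\varepsilon$ onto an open set $\widetilde U\supseteq X$. Writing $p(x,v)=x$, set $\pi=p\circ(E|_{N_\varepsilon})^{-1}$, so that $\pi$ maps $\widetilde U$ into $X$, has the same regularity as $E$, and satisfies $\pi(x)=p(x,0)=x$ on $X$. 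Finally one checks that $\widetilde U$ contains a genuine $\varepsilon'$-tube $U=\{y\mid\exists x\in X,\ \|x-y\|<\varepsilon'\}$ for some $0<\varepsilon'\le\varepsilon$: given such $y$, a nearest point $x^{*}\in X$ exists by compactness, the first-order optimality condition applied along $C^1$ curves in $X$ (Fact~\ref{FACT4}) forces $y-x^{*}\perp T_{x^{*}}(X)$ with $\|y-x^{*}\|<\varepsilon'\le\varepsilon$, so $y=E(x^{*},y-x^{*})\in\widetilde U$; restricting $\pi$ to $U$ gives the desired map.

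The genuine difficulty is the regularity, and I expect this to be the main obstacle: the two paragraphs above are exactly the classical proof in the $C^\infty$ category, but for a merely $C^1$ manifold the map $x\mapsto T_x(X)^{\perp}$ is only continuous, so $N$ is a priori only a $C^0$ manifold, $E$ is only $C^0$, and the construction yields only a \emph{continuous} retraction $\pi$. Producing a $C^1$ retraction is where the work lies. Locally this is not hard: near a point $x_0$, transversality is an open condition, so a fixed linear complement of $T_{x_0}(X)$ stays complementary to $T_x(X)$ for nearby $x$, and writing $X$ as a graph over $T_{x_0}(X)$ in the corresponding coordinates yields a $C^1$ local retraction onto $X$; the subtle part is patching these, since a $C^1$ partition-of-unity combination of local retractions equals the identity on $X$ but need not take values in $X$. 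The cleanest way around this is to reduce to the smooth case: by Whitney's approximation theorem choose a $C^{\infty}$ submanifold $X'\subset\mathbb{R}^N$ together with a $C^1$-diffeomorphism $h:X\to X'$ that is $C^1$-close to the inclusion of $X$; apply the classical $\varepsilon$-neighborhood theorem to $X'$ to get a $C^{\infty}$ retraction $\pi':U'\to X'$ with $U'\supseteq X$; observe that $\phi:=h^{-1}\circ\pi'|_{X}:X\to X$ is then $C^1$-close to $\mathrm{id}_X$, hence a $C^1$-diffeomorphism with $C^1$ inverse; and put $\pi:=\phi^{-1}\circ h^{-1}\circ\pi'$, which is $C^1$ on $U'$, maps into $X$, and restricts to $\mathrm{id}_X$. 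Throughout, compactness of $X$ is what lets one pass from pointwise statements to a uniform $\varepsilon$ and to the closeness estimates needed for $\phi$ to be invertible.
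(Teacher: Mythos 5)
The paper gives no proof of Fact~\ref{FACT5}; it simply cites Section~2.3 of Guillemin and Pollack (1974). Your first two paragraphs reproduce precisely the construction they give there (normal bundle $N$, the map $E(x,v)=x+v$, local diffeomorphism near the zero section by the inverse function theorem, globalization by compactness, then verification that the image contains a uniform $\varepsilon$-tube), so on that portion you and the cited source agree. The interesting and valuable part of your proposal is the third paragraph: you correctly observe that Guillemin--Pollack work in the $C^{\infty}$ category, and that for a merely $C^1$ submanifold $X$ the assignment $x\mapsto T_x(X)^{\perp}$ is only continuous, so $N$ is only a $C^0$ manifold, $E$ is not differentiable off the zero section, and the construction yields only a \emph{continuous} retraction. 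The citation in the paper silently passes over this, even though the fact is actually invoked (in the proof of Theorem~\ref{THM2}) on a genuinely $C^1$, not $C^2$, curve. Your Whitney-smoothing remedy --- replace $X$ by a nearby $C^{\infty}$ submanifold $X'$ with a $C^1$ diffeomorphism $h:X\to X'$ close to the inclusion, take the smooth retraction $\pi'$ onto $X'$, form $\varphi=h^{-1}\circ\pi'|_X$ and then $\pi=\varphi^{-1}\circ h^{-1}\circ\pi'$ --- is correct. The only point that deserves to be tightened is the step ``$\varphi$ is $C^1$-close to $\mathrm{id}_X$, hence a diffeomorphism'': what you actually need, and what the construction delivers, is that $D\pi'_x$ restricted to $T_x(X)$ is an isomorphism onto $T_{\pi'(x)}(X')$ (true once tangent planes of $X$ and $X'$ are uniformly close), so $\varphi$ is a local $C^1$ diffeomorphism; combined with $C^0$-closeness to the identity (degree $1$) and compactness, this makes $\varphi$ a $C^1$ diffeomorphism without having to estimate the derivative of the nearest-point map against $Dh$. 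A marginally more direct repair, avoiding the Whitney embedding theorem, is to keep $X$ and instead $C^1$-approximate the $C^0$ orthogonal-complement bundle $x\mapsto T_x(X)^{\perp}$ by a $C^1$ (or $C^{\infty}$) sub-bundle $E_x$ of $X\times\mathbb{R}^N$ that remains complementary to $T_x(X)$; the tube map $(x,v)\mapsto x+v$, $v\in E_x$, is then $C^1$ and the original argument goes through verbatim. Either way, you have filled a real gap in the paper's citation rather than merely restating it.
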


We also mention an extension of a classical result called Frobenius' Theorem. Suppose that $N\ge 2$, and let a function $g:U\to \mathbb{R}^N\setminus \{0\}$ be given, where $U\subset \mathbb{R}^N$ is an open set. Consider the following {\bf total differential equation}:
\begin{equation}\label{TDE}
\nabla u(x)=\lambda(x)g(x).
\end{equation}
A pair of real-valued functions $(u,\lambda)$ defined on an open set $V\subset U$ is called a {\bf normal solution to (\ref{TDE}) on $V$} if and only if, 1) $u:V\to \mathbb{R}$ is a locally Lipschitz function such that the level set $u^{-1}(a)$ is either the empty set or an $N-1$ dimensional $C^1$ manifold, 2) $\lambda:V\to \mathbb{R}$ is positive, and 3) (\ref{TDE}) holds for almost every $x\in V$. If $V$ is an open neighborhood of $x^*$, then $(u,\lambda)$ is also called a normal solution to (\ref{TDE}) around $x^*$.

\begin{fact}\label{FACT6}
Suppose that $U\subset \mathbb{R}^N$ is open and $g:U\to \mathbb{R}^N\setminus \{0\}$ is locally Lipschitz. Then, $g$ satisfies condition (B) almost everywhere if and only if, for any $x^*\in U$, there exists a normal solution $(u,\lambda)$ around $x^*$. In this case, for a $C^1$ manifold $X=u^{-1}(a)$ and each $x\in X$, $g(x)$ is normal to the tangent space $T_x(X)$. Furthermore, if $g$ is $C^k$, there exists a normal solution $(u,\lambda)$ around $x^*$ such that $u$ is $C^k$.\footnote{For a proof, See Hosoya (2021).}
\end{fact}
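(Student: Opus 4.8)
The plan is to run the classical local construction of an integrating factor for the Pfaffian system $\sum_i g_i\,dx_i=0$, but with every ordinary differential equation step justified under only local Lipschitz hypotheses by Facts \ref{FACT1}--\ref{FACT3}. Fix $x^*\in U$. After permuting coordinates I may assume $g_N(x^*)\neq0$, so $g_N\neq0$ on a neighborhood; set $\bar g=g/g_N$. A direct cyclic cancellation shows that the left-hand side of condition (B) is multiplied by $g_N^2$ when $g$ is replaced by $\bar g$ (the terms carrying the derivatives of $1/g_N$ cancel), so $g$ satisfies (B) at a point iff $\bar g$ does, and $\bar g_N\equiv1$; I may also take $x^*=0$. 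For such a normalized $\bar g$, at any point of differentiability condition (B) is equivalent to the symmetry of the Antonelli-type matrix $a_{ij}=\partial_{x_j}\bar g_i-\bar g_j\,\partial_{x_N}\bar g_i$ for $i,j\in\{1,\dots,N-1\}$, by the same algebra as in Proposition \ref{PROP1} iii).

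For the implication ``condition (B) a.e.\ $\Rightarrow$ a normal solution exists'', I build a scalar function $w(x_1,\dots,x_{N-1};c)$ by integrating one coordinate at a time: let $w^{(1)}(x_1;c)$ solve $\partial_{x_1}w=-\bar g_1(x_1,0,\dots,0,w)$ with $w(0;c)=c$; inductively let $w^{(k)}(x_1,\dots,x_k;c)$ solve $\partial_{x_k}w=-\bar g_k(x_1,\dots,x_k,0,\dots,0,w)$ with $w^{(k)}\big|_{x_k=0}=w^{(k-1)}$; and put $w=w^{(N-1)}$. By Facts \ref{FACT1} and \ref{FACT2} each $w^{(k)}$ is a well-defined locally Lipschitz solution function near the origin, and the standard Gr\"onwall bound gives $\partial_c w\ge e^{-L}>0$, so $c\mapsto w(x';c)$ is strictly increasing. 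Define $u(x',x_N)$ to be the unique $c$ with $w(x';c)=x_N$ (replacing $u$ by $-u$ at the end if $g_N(x^*)<0$); then $u$ is locally Lipschitz, $u(0)=0$, and $u^{-1}(a)$ is the graph $\{x_N=w(x';a)\}$. If one establishes the gradient identity $\partial_{x_i}w(x';c)=-\bar g_i(x',w(x';c))$ for all $i\le N-1$, then differentiating $w(x';u(x))=x_N$ yields $\nabla u(x)=\tfrac{1}{g_N(x)\,\partial_c w}\,g(x)$ at a.e.\ $x$, so $(u,\lambda)$ with $\lambda=1/(g_N\,\partial_c w)>0$ is a normal solution; moreover that identity makes $-\bar g_i(x',w(x';a))$ continuous in $x'$, hence $w(\cdot;a)\in C^1$ and $u^{-1}(a)$ is a genuine $C^1$ manifold whose tangent space is spanned by the vectors $(e_i,-\bar g_i)$, which are annihilated by $\bar g$ since $\bar g_i-\bar g_N\bar g_i=0$; thus $g$ is normal to it. When $g$ is $C^k$, Fact \ref{FACT3} replaces Fact \ref{FACT2} and each $w^{(k)}$, hence $u$, is $C^k$.

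The gradient identity for $i=N-1$ is immediate from the defining ODE. For $i<N-1$ the classical device is to track the defects $D_i^{(k)}:=\partial_{x_i}w^{(k)}+\bar g_i(x_1,\dots,x_k,0,\dots,0,w^{(k)})$: one has $D_i^{(i)}\equiv0$ by construction, and assuming $D_i^{(k)}\equiv0$ one restricts $D_i^{(k+1)}$ to $x_{k+1}=0$ (where it equals $D_i^{(k)}=0$), differentiates in $x_{k+1}$, interchanges $\partial_{x_i}$ with $\partial_{x_{k+1}}$, substitutes the ODE for $\partial_{x_{k+1}}w^{(k+1)}$, and uses the Antonelli symmetry $a_{i,k+1}=a_{k+1,i}$ to collapse everything to the homogeneous linear ODE $\partial_{x_{k+1}}D_i^{(k+1)}=-(\partial_{x_N}\bar g_{k+1})\,D_i^{(k+1)}$ with zero initial value, whence $D_i^{(k+1)}\equiv0$ by the uniqueness clause of Fact \ref{FACT1}. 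The main obstacle is to make this step legitimate with only Lipschitz data. First, Fact \ref{FACT2} gives merely Lipschitz, not $C^1$, dependence of $w^{(k)}$ on the parameters $x_1,\dots,x_k$, so the second derivatives and interchanges of partials above are not literally available and must be replaced by an absolutely continuous, a.e.\ formulation (a Lipschitz function that is absolutely continuous along a.e.\ coordinate line and solves a homogeneous linear ODE with zero datum along those lines must vanish). Second, and more seriously, condition (B) is hypothesized only almost everywhere on $U$, whereas the defect computation evaluates it along the lower-dimensional slab $\{x_{k+2}=\dots=x_{N-1}=0\}$, where ``a.e.\ on $U$'' carries no information; the rigorous argument therefore cannot use the naive iterated construction with the hyperplanes $x_j=0$, but must either integrate over an additional parameter family of base slabs and recover the needed a.e.\ statement by Fubini, or work with a distributional (currents) formulation of $\omega\wedge d\omega=0$ for $\omega=\sum_i\bar g_i\,dx_i$ together with a regularization. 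This is precisely the delicate content of Hosoya (2021); granted it, everything above goes through.

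For the converse ``a normal solution around every point $\Rightarrow$ condition (B) a.e.'', fix $x^*$, normalize as before, and take a normal solution $(u,\lambda)$ on a neighborhood. Each $T_x(u^{-1}(a))$ is an $(N-1)$-plane contained in $g(x)^\perp$, hence equals it; since $g_N\neq0$ keeps $e_N$ transverse to every level set, near $x^*$ the level sets are $C^1$ graphs $x_N=w(x';a)$ foliating a neighborhood, and the tangency $g(x)\perp T_x(u^{-1}(a))$ forces $\partial_{x_i}w(x';a)=-\bar g_i(x',w(x';a))$ pointwise. The right-hand side is a locally Lipschitz function composed with a $C^1$ map, so $w(\cdot;a)$ is $C^1$ with locally Lipschitz gradient, hence twice differentiable a.e.; differentiating the identity at such points and using equality of the mixed second partials of $w$ gives $a_{ij}=a_{ji}$ for a.e.\ $(x',a)$, which transfers to a.e.\ $x$ because $(x',a)\mapsto(x',w(x';a))$ is bi-Lipschitz, i.e.\ condition (B) holds a.e.
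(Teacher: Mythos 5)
Note first that the paper itself does not prove Fact~\ref{FACT6}: it is stated as a citation, with the footnote ``For a proof, see Hosoya (2021).'' So there is no in-paper argument against which to compare your sketch; the relevant question is whether your sketch is self-contained, and it is not.

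Your normalization $\bar g=g/g_N$, the reduction of condition (B) to symmetry of the matrix $a_{ij}=\partial_{x_j}\bar g_i-\bar g_j\,\partial_{x_N}\bar g_i$, and the coordinate-by-coordinate construction of $w$ are exactly the classical Frobenius scaffolding, and your use of Facts~\ref{FACT1}--\ref{FACT3} to get existence, uniqueness, Lipschitz parameter dependence and strict monotonicity in $c$ is fine. The genuine gap is precisely the one you name yourself at the end of the middle paragraph: under merely Lipschitz hypotheses the defect-propagation argument for the gradient identity $\partial_{x_i}w=-\bar g_i(x',w)$ needs (i) an a.e.\ interchange of partials for a function that Fact~\ref{FACT2} only delivers as Lipschitz in the parameters, and (ii) the validity of the Antonelli symmetry along a fixed codimension-$\ge1$ slab $\{x_{k+2}=\dots=x_{N-1}=0\}$, on which ``a.e.\ on $U$'' says nothing. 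You gesture at a Fubini-over-base-slabs repair and a currents formulation, but you carry out neither. Since these two issues are exactly the non-classical content that distinguishes this Fact from the textbook $C^1$ Frobenius theorem, deferring them to Hosoya (2021) leaves the proof attempt in the same state as the paper's bare citation rather than replacing it. In the converse direction you also pass over two smaller points without justification: that the level sets of a normal solution actually foliate a full neighborhood (nothing in the definition of a normal solution asserts that $u$ is increasing or even non-constant, only that $\nabla u=\lambda g$ a.e.\ with $\lambda>0$, and that the nonempty level sets are $C^1$ manifolds), and that the change of variables $(x',a)\mapsto(x',w(x';a))$ is Lipschitz, which is what you need to transport an a.e.\ statement back to a.e.\ $x$ in $U$. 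None of this makes the overall outline wrong, but as written the argument is incomplete at its technical core.
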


Now, recall the definition of a locally Lipschitz function. Let $f:U\to \mathbb{R}^N$ be some function, where $U\subset \mathbb{R}^M$ is open. This function is said to be {\bf locally Lipschitz} if, for every compact set $C\subset U$, there exists $L>0$ such that for every $x,y\in C$,
\[\|f(x)-f(y)\|\le L\|x-y\|.\]
Then, the following fact holds.\footnote{For a proof, see Hosoya (2024).}

\begin{fact}\label{FACT7}
Let $f:U\to \mathbb{R}^N$, where $U\subset \mathbb{R}^M$ is open. Then, $f$ is locally Lipschitz if and only if, for every $x\in U$, there exists $r>0$ and $L>0$ such that if $y,z\in U,\ \|y-x\|\le r,$ and $\|z-x\|\le r$, then
\[\|f(y)-f(z)\|\le L\|y-z\|.\]
\end{fact}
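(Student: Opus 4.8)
The plan is to prove the two implications separately; the forward implication is essentially immediate, and the reverse one is a finite-subcover argument.

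For the ``only if'' direction, I would fix $x\in U$ and use that $U$ is open to find $r>0$ with the closed ball $\overline{B}(x,r)=\{y\mid \|y-x\|\le r\}$ contained in $U$. Since $\overline{B}(x,r)$ is compact, applying the definition of locally Lipschitz to $C=\overline{B}(x,r)$ produces a constant $L>0$ that bounds the Lipschitz ratio for all $y,z$ in this ball, which is precisely the asserted local estimate.

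For the ``if'' direction, I would fix a compact set $C\subset U$ and, for each $x\in C$, invoke the pointwise hypothesis to obtain radii $r_x>0$ and constants $L_x>0$ with the required bound on pairs in $U\cap\overline{B}(x,r_x)$. A preliminary observation is that this hypothesis makes $f$ continuous, so $f(C)$ is compact and hence bounded; I set $M=\sup_{y,z\in C}\|f(y)-f(z)\|<\infty$. The open balls $\{B(x,r_x/2)\}_{x\in C}$ cover $C$, so a finite subcover exists, indexed by points $x_1,\dots,x_k$; I put $\delta=\tfrac12\min_i r_{x_i}>0$ and $L=\max_i L_{x_i}$. Then for arbitrary $y,z\in C$ I split into two cases. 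If $\|y-z\|<\delta$, I pick $i$ with $y\in B(x_i,r_{x_i}/2)$; then $\|z-x_i\|\le\|z-y\|+\|y-x_i\|<\delta+r_{x_i}/2\le r_{x_i}$, so $y,z\in U\cap\overline{B}(x_i,r_{x_i})$ and $\|f(y)-f(z)\|\le L_{x_i}\|y-z\|\le L\|y-z\|$. If $\|y-z\|\ge\delta$, then $\|f(y)-f(z)\|\le M\le(M/\delta)\|y-z\|$. Taking $\tilde L=\max\{L,M/\delta\}$ yields a uniform Lipschitz constant on $C$, completing the proof.

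The only mildly delicate point, which I would flag as the main obstacle although it is minor, is the large-distance case: since $C$ need not be connected, one cannot in general reduce a pair $y,z$ at distance $\ge\delta$ to a chain of short hops staying inside $C$, so instead one must use that the pointwise Lipschitz condition forces continuity of $f$ and hence boundedness of $f$ on the compact set $C$. Apart from that, the argument is a routine Lebesgue-number-style covering argument with no real difficulty.
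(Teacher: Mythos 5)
Your proof is correct. The paper does not prove Fact~\ref{FACT7} itself but merely cites Hosoya (2024), so there is no in-paper argument to compare against; your argument is the standard one (closed ball inside $U$ for the forward direction; a finite subcover of $C$ by half-radius balls, splitting into the short-distance case where a single ball applies and the long-distance case handled by boundedness of $f$ on $C$ for the reverse direction), and the point you flag --- that boundedness of $f$ on the compact set, rather than chaining, is what handles pairs at distance $\ge\delta$, since $C$ need not be connected --- is exactly the right way to close that case.
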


The next fact is known as Gronwall's inequality.\footnote{For a proof, see Lemma 1 of Hosoya (2024).}

\begin{fact}\label{FACT8}
Suppose that $g:[t_0,t_1]\to \mathbb{R}$ is continuous, and
\[g(t)\le \int_{t_0}^tAg(s)ds+B(t),\]
for almost every $t\in [t_0,t_1]$, where $A>0$ and $B(t)$ is an integrable function on $[t_0,t_1]$. Then, for every $t\in [t_0,t_1]$,
\[g(t)\le B(t)+A\int_{t_0}^te^{A(t-s)}B(s)ds.\]
In particular, if $B(t)=C(t-t_0)$ for some constant $C$, then for every $t\in [t_0,t_1]$,
\[g(t)\le \frac{C}{A}(e^{A(t-t_0)}-1).\]
\end{fact}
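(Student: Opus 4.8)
The plan is to run the classical integrating-factor argument for differential inequalities, keeping an eye on the almost-everywhere hypothesis. First I would introduce the accumulated quantity $R(t)=\int_{t_0}^tA\,g(s)\,ds$. Since $g$ is continuous on $[t_0,t_1]$, the fundamental theorem of calculus gives that $R$ is $C^1$ with $R(t_0)=0$ and $R'(t)=A\,g(t)$ for every $t$. The hypothesis, which reads $g(t)\le R(t)+B(t)$ for a.e.\ $t$, then translates into the differential inequality
\[
R'(t)=A\,g(t)\le A\,R(t)+A\,B(t)\qquad\text{for a.e. }t\in[t_0,t_1].
\]

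Next I would multiply by the integrating factor $e^{-At}$. Because $R$ is $C^1$, the map $t\mapsto e^{-At}R(t)$ is $C^1$, and its derivative equals $e^{-At}\big(R'(t)-A\,R(t)\big)$, which by the previous inequality is $\le A\,e^{-At}B(t)$ for a.e.\ $t$. Both sides here are integrable on $[t_0,t_1]$ (the left side is continuous, the right side is integrable because $B$ is), so, since a $C^1$ function is the integral of its derivative, integrating from $t_0$ to $t$ and using $R(t_0)=0$ yields
\[
e^{-At}R(t)\le \int_{t_0}^tA\,e^{-As}B(s)\,ds,
\]
that is, $R(t)\le A\int_{t_0}^te^{A(t-s)}B(s)\,ds$ for every $t\in[t_0,t_1]$ (both sides being continuous in $t$). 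Substituting this back into $g(t)\le R(t)+B(t)$ gives the first conclusion $g(t)\le B(t)+A\int_{t_0}^te^{A(t-s)}B(s)\,ds$.

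For the particular case $B(t)=C(t-t_0)$, it then remains to evaluate $A\int_{t_0}^te^{A(t-s)}C(s-t_0)\,ds$ explicitly. The substitution $u=s-t_0$ turns this into $AC\,e^{A(t-t_0)}\int_0^{t-t_0}e^{-Au}u\,du$, and one integration by parts gives $\int_0^{\tau}e^{-Au}u\,du=\frac{1}{A^2}-\frac{\tau}{A}e^{-A\tau}-\frac{1}{A^2}e^{-A\tau}$ with $\tau=t-t_0$. Plugging this in and simplifying, the term $C(t-t_0)$ coming from $B(t)$ cancels against part of the integral, and what survives is exactly $\frac{C}{A}\big(e^{A(t-t_0)}-1\big)$.

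The only delicate point — and the place requiring genuine care rather than computation — is the interface between the almost-everywhere hypothesis and the pointwise conclusion: the differential inequality for $R$ holds only a.e., so I must justify integrating it, which is legitimate precisely because $e^{-At}R(t)$ is $C^1$ (thanks to the continuity of $g$) and hence equals the integral of its own derivative, while $A\,e^{-As}B(s)$ is integrable. The resulting bound $R(t)\le A\int_{t_0}^te^{A(t-s)}B(s)\,ds$ then holds for every $t$, and adding $B(t)$ recovers the stated inequality wherever the hypothesis is in force; in the particular case $B(t)=C(t-t_0)$ both hypothesis and conclusion hold throughout $[t_0,t_1]$. Beyond this bookkeeping, no step of the argument poses an obstacle.
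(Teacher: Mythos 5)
Your argument is correct and is the standard integrating-factor proof of Gronwall's inequality; note that the paper itself does not prove Fact~\ref{FACT8} internally but cites Lemma~1 of Hosoya (2024), so there is no in-paper proof to compare against. You correctly isolate the one delicate point: the hypothesis holds only almost everywhere, but since $g$ is continuous the cumulative function $R(t)=\int_{t_0}^tAg(s)\,ds$ is $C^1$, hence $e^{-At}R(t)$ equals the integral of its own (continuous) derivative, and the a.e.\ differential inequality $\frac{d}{dt}\bigl(e^{-At}R(t)\bigr)\le Ae^{-At}B(s)$ can legitimately be integrated by monotonicity of the Lebesgue integral. The resulting bound $R(t)\le A\int_{t_0}^te^{A(t-s)}B(s)\,ds$ holds for every $t$; as you observe, adding $B(t)$ then gives the stated conclusion a.e.\ in general, and everywhere when $B$ is continuous, which covers the case $B(t)=C(t-t_0)$ actually invoked in the proof of Theorem~\ref{THM4} (the ``for every $t$'' phrasing in the general statement is slightly imprecise given that $B$ is only assumed integrable, but this imprecision lies in the statement, not your proof). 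Your integration-by-parts evaluation of $A\int_{t_0}^te^{A(t-s)}C(s-t_0)\,ds$ and the cancellation against the $C(t-t_0)$ term, yielding $\frac{C}{A}\bigl(e^{A(t-t_0)}-1\bigr)$, are also correct.
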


The next fact is proved in Lemma 3.1 of Hosoya (2021).

\begin{fact}\label{FACT9}
Suppose that $n\ge 2$, $U\subset \mathbb{R}^{n+1}$ is an open and convex set, $W\subset U$, and $U\setminus W$ is a null set. Moreover, suppose that $f:U\to \mathbb{R}^n$ is locally Lipschitz. Let $(x^*,u^*)\in U$, and consider the following differential equation
\begin{equation}\label{SH}
\dot{c}(t)=f((1-t)x^*+tx,c(t))\cdot (x-x^*),\ c(0)=w.
\end{equation}
Let $c(t;x,w)$ be the solution function of (\ref{SH}). Suppose that $x\in \mathbb{R}^n$, $x_{i^*}\neq x_{i^*}^*$, and the domain of the solution function $c(t;x,w)$ of (\ref{SH}) includes $[0,1]\times \bar{P}_1\times \bar{P}_2$, where $P_1$ is a bounded open neighborhood of $x$ and $P_2$ is a bounded open neighborhood of $u^*$, and $\bar{P}_j$ denotes the closure of $P_j$. For every $(t,\tilde{y},w)\subset \mathbb{R}^{n+1}$ such that $t\in [0,1]$, $y\in P_1$ for
\[y_i=\begin{cases}
\tilde{y}_i & \mbox{if }i<i^*,\\
x_i & \mbox{if }i=i^*,\\
\tilde{y}_{i-1} & \mbox{if }i>i^*,
\end{cases}\]
 and $w\in P_2$, define
\begin{equation}\label{XI}
\xi(t,\tilde{y},w)=((1-t)x^*+ty,c(t;y,w)).
\end{equation}
Then, $\xi^{-1}(U\setminus W)$ is also a null set.
\end{fact}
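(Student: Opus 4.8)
Proof proposal. The plan is to establish that $\xi$ is a locally Lipschitz, injective map whose Jacobian determinant is nonzero almost everywhere, and then to invoke the area formula for Lipschitz maps: such a map cannot pull the null set $U\setminus W$ back to a set of positive measure. Since $U\setminus W$ is null we may enlarge it to a Borel null set, so assume it is Borel. Let $D$ denote the domain of $\xi$, that is, the set of $(t,\tilde y,w)$ with $t\in[0,1]$, $y(\tilde y)\in P_1$, $w\in P_2$, and put $D^{\circ}=(0,1)\times\{\tilde y:y(\tilde y)\in P_1\}\times P_2$; since $D\setminus D^{\circ}\subset\{t\in\{0,1\}\}$ is null, it suffices to show $|\xi^{-1}(U\setminus W)\cap D^{\circ}|=0$. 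By Fact \ref{FACT2} the solution function $c$ is locally Lipschitz on its (open) domain, which contains the compact set $[0,1]\times\bar{P}_1\times\bar{P}_2$, and since $A(t,\tilde y):=(1-t)x^*+ty(\tilde y)$ is affine, $\xi$ is Lipschitz on every compact subset of $D^{\circ}$.

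First I would check that $\xi$ is injective on $D^{\circ}$. If $\xi(t,\tilde y,w)=\xi(t',\tilde y',w')$, equating the $i^*$-th coordinates of the $A$-parts gives $x^*_{i^*}+t(x_{i^*}-x^*_{i^*})=x^*_{i^*}+t'(x_{i^*}-x^*_{i^*})$, hence $t=t'$ since $x_{i^*}\neq x^*_{i^*}$; equating the remaining coordinates and using $t\neq 0$ gives $y(\tilde y)=y(\tilde y')$, hence $\tilde y=\tilde y'$; and $c(t;y(\tilde y),w)=c(t;y(\tilde y),w')$ together with forward-and-backward uniqueness for the scalar ODE (Fact \ref{FACT1}) forces $w=w'$. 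Next I would compute the Jacobian: since the first $n$ components of $\xi$ do not involve $w$, $D\xi$ is block lower triangular, so at every point where $c$ is differentiable $\det D\xi=\left(\det D_{(t,\tilde y)}A\right)\partial_w c$. The columns of $D_{(t,\tilde y)}A$ are $y(\tilde y)-x^*$ together with the vectors $t\,e_i$ for $i\neq i^*$, and expanding along these gives $\det D_{(t,\tilde y)}A=\pm t^{\,n-1}(x_{i^*}-x^*_{i^*})$, which is nonzero on $D^{\circ}$.

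The hard part will be to show $\partial_w c\neq 0$ almost everywhere on $D^{\circ}$; the obstruction is that $f$ is only locally Lipschitz, so $c$ need not be $C^1$ in $w$ and the classical variational equation is unavailable---and a Lipschitz injection of an interval can have derivative zero on a set of positive measure, so injectivity of $\xi$ alone would not suffice. I would argue slicewise. Fix $(t,y)$ with $t\in(0,1)$ and set $\phi:=c(t;y,\cdot)$; this is Lipschitz (a slice of $c$) and strictly increasing (solution curves of the scalar ODE cannot cross), and its inverse is again Lipschitz, which follows from Gronwall's inequality (Fact \ref{FACT8}) applied to two solutions run backward from time $t$ to time $0$. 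Applying the one-dimensional chain rule to $\phi^{-1}\circ\phi=\mathrm{id}$ at every $w$ where $\phi$ is differentiable and $\phi^{-1}$ is differentiable at $\phi(w)$---all but a null set of $w$, since $\phi^{-1}$ Lipschitz carries the null set of its own non-differentiability points to a null set---yields $\phi'(w)\neq 0$ for a.e.\ $w$. So for every such $(t,y)$ the derivative of $c$ in $w$ exists and is nonzero for a.e.\ $w$; by Fubini--Tonelli the set where this slicewise derivative fails to exist or vanishes is a null subset of $D^{\circ}$, and off the Rademacher null set it coincides with the genuine partial $\partial_w c$. Hence $\partial_w c$ exists and is nonzero a.e.\ on $D^{\circ}$, and together with the previous step $\det D\xi\neq 0$ a.e.\ on $D^{\circ}$.

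To finish, fix a compact $K\subset D^{\circ}$ and put $A_K:=\xi^{-1}(U\setminus W)\cap K$. The area formula for Lipschitz maps, applied to the (Lipschitz, injective) restriction $\xi|_K$ (after extending it to a Lipschitz map on $\mathbb{R}^{n+1}$ if one wishes), gives $\int_{A_K}|\det D\xi|\,dx=\int \#\left(\xi^{-1}\{z\}\cap A_K\right)\,dz\le |\xi(A_K)|\le |U\setminus W|=0$, and since $|\det D\xi|>0$ a.e.\ on $A_K$ this forces $|A_K|=0$. Exhausting $D^{\circ}$ by countably many such compacta gives $|\xi^{-1}(U\setminus W)\cap D^{\circ}|=0$, and adding back the null set $\{t\in\{0,1\}\}$ yields $|\xi^{-1}(U\setminus W)|=0$, as claimed.
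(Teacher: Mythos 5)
The paper does not contain its own proof of Fact~\ref{FACT9}; it cites it directly from Lemma~3.1 of Hosoya~(2021), so there is no in-paper argument to compare against. Taking your proposal on its own terms: it is correct, and the underlying idea -- that $\xi$ is an essentially non-degenerate Lipschitz change of coordinates, hence cannot pull a null set back to a set of positive measure -- is the right one.

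That said, you did more work than the statement requires. The decisive fact you establish midway through, namely that the backward flow is Lipschitz by Gronwall (Fact~\ref{FACT8}), already gives that $\xi$ is \emph{bi-}Lipschitz on every compact subset of $D^\circ$: from the $i^*$-th coordinate of the first block one recovers $t$ affinely, from the remaining coordinates one recovers $\tilde y$ with a Lipschitz constant controlled by $1/t$ (bounded on a compact where $t\ge\varepsilon$), and then $w=c(0;y,c(t;y,w))$ recovers $w$ Lipschitz-continuously from $(t,y,b)$ via the backward flow. Once $\xi|_K$ has a Lipschitz inverse, the claim is immediate -- $\xi|_K^{-1}$ maps the null set $U\setminus W$ into a null set -- with no need for the area formula, no appeal to Rademacher's theorem, no slicewise chain-rule argument, and no Fubini step. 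Your route through $\partial_w c\neq 0$ a.e.\ and the area formula is valid, but it re-derives (via Rademacher, chain rule, and Fubini) information that the Gronwall estimate gives you directly and uniformly. One small point worth flagging in your write-up: the step ``$\phi'(w)\neq 0$ a.e.'' does not actually need the differentiability of $\phi^{-1}$ at $\phi(w)$; if $\phi^{-1}$ is $M$-Lipschitz, then at any $w$ where $\phi$ is differentiable one has $|\phi'(w)|\ge 1/M$ directly from the difference quotient, which avoids the extra null-set bookkeeping about where $\phi^{-1}$ is differentiable.
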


\subsection{Proof of Fundamental Lemmas}
We first introduce several lemmas.

First, we need several definitions of symbols. First, suppose that $x,v\in \Omega$. Define\footnote{The symbols defined in this subsection depend on $x$ and $v$. If needed, we write these as functions (e.g. $P(x,v)$ instead of $P$) for clarifying the meaning.}
\[a_1=\frac{1}{\|x\|},\]
\[a_2=\begin{cases}
\frac{1}{\|v-(v\cdot a_1)a_1\|}(v-(v\cdot a_1)a_1) & \mbox{if }x\mbox{ is not proportional to}v,\\
0 & \mbox{otherwise}.
\end{cases}\]
Second, let $V=\mbox{span}\{x,v\}$, and define functions $P:\Omega\to V$ and $R:V\to V$ such that
\[Py=(y\cdot a_1)a_1+(y\cdot a_2)a_2,\]
\[Rw=(w\cdot a_1)a_2-(w\cdot a_2)a_1.\]
Third, define
\begin{align*}
v_1=&~\arg\min\{w\cdot a_1|w\in P\mathbb{R}^n_+,\|w\|=1,w\cdot a_2\ge 0\},\\
v_2=&~\arg\min\{w\cdot a_1|w\in P\mathbb{R}^n_+,\|w\|=1,w\cdot a_2\le 0\}.
\end{align*}
Fourth, define
\begin{align*}
y_1=&~\{s_1v|s_1\in\mathbb{R}\}\cap \{x+s_2Rv_1|s_2\in\mathbb{R}\},\\
y_2=&~\{s_3v|s_3\in\mathbb{R}\}\cap \{x+s_4Rv_2|s_4\in\mathbb{R}\}.
\end{align*}
Fifth, define
\[\Delta=\{w\in V|w\cdot Rv\le 0,\ w\cdot v_1\ge x\cdot v_1,\ w\cdot v_2\le x\cdot v_2\}.\]
Last, define
\[C=\|x\|\|v-(v\cdot a_1)a_1\|.\]
Then, we can show the following result.

\begin{lem}\label{LEM1}
All the above symbols are well-defined. Moreover, the following results hold.
\begin{enumerate}[i)]
\item If $x$ is not proportional to $v$, then $\{a_1,a_2\}$ is the orthonormal basis of $V$ derived from $x,v$ by the Gram-Schmidt method. Otherwise, then $a_2=0$. In both cases, $P$ is the orthogonal projection from $\mathbb{R}^n$ onto $V$.\footnote{As a result, $y\cdot w=Py\cdot w$ for any $y\in \mathbb{R}^n$ and $w\in V$.}

\item If $x$ is not proportional to $v$, then $R$ is the unique orthogonal transformation on $V$ such that $Ra_1=a_2$ and $Ra_2=-a_1$. Moreover, if $T$ is an orthogonal transformation on $V$ such that $w\cdot Tw=0$ for any $w\in V$, then we must have either $T=R$ or $T=-R=R^{-1}=R^3$.\footnote{In particular, if $z\in V\cap \Omega$ and $[x,z]\cap \{cv|c\in \mathbb{R}\}=\emptyset$, then $R(x,v)=R(z,v)$. Note that $[x,z]$ represents $\{(1-t)x+tz|t\in [0,1]\}$.} If $x$ is proportional to $v$, then $Rw\equiv 0$.

\item If $x$ is not proportional to $v$, then both $v_1$ and $v_2$ are continuous and single-valued at $(x,v)$. If $x$ is proportional to $v$, then $v_1=v_2=a_1$. In both cases, $P\mathbb{R}^n_+=\{c_1v_1+c_2v_2|c_1,c_2\ge 0\}$.

\item Both $y_1$ and $y_2$ are continuous and single-valued. Moreover, for any $(x,v)\in \Omega^2$, $y_1(x,v),\ y_2(x,v)\in \Omega$ and each $y_i(x,v)$ is proportional to $v$. In particular, if $x$ is proportional to $v$, then $y_1=y_2=x$.

\item $\Delta=(x+RP\mathbb{R}^n_+)\cap \{w\in V|w\cdot Rv\le 0\}=\mbox{co}\{x,y_1,y_2\}$.\footnote{By this result, we have that $\Delta$ is a compact subset of $\Omega$.}

\item $(y\cdot x)v-(y\cdot v)x=CRPy$ for any $y\in \mathbb{R}^n$.
\end{enumerate}
\end{lem}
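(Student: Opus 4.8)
The plan is to carry out everything inside the plane $V=\mbox{span}\{x,v\}$ (a line when $x$ is proportional to $v$), using $\{a_1,a_2\}$ as orthonormal coordinates, and to reduce every assertion to elementary geometry of planar convex cones and triangles. Parts i) and ii) are routine: $a_1$ is defined since $x\neq 0$, the vector $v-(v\cdot a_1)a_1$ vanishes exactly when $v$ is proportional to $x$, so $a_2$ is defined, and the Gram--Schmidt computation gives orthonormality; $P$ is then the standard orthogonal projection onto $V$, and $y-Py\perp V$ yields the footnote identity $y\cdot w=Py\cdot w$ for $w\in V$. For ii), a direct computation gives $Ra_1=a_2$, $Ra_2=-a_1$, so $R$ is the orthogonal $90^{\circ}$ rotation on $V$, uniquely determined by its values on the basis $\{a_1,a_2\}$, and $R^2w=-w$ for $w\in V$; for the characterization of orthogonal $T$ on $V$ with $w\cdot Tw\equiv 0$ I would invoke the classification of $O(2)$: a reflection fixes some $w\neq 0$ (so $w\cdot Tw=\|w\|^2>0$), hence $T$ must be a rotation $R_\theta$, and $w\cdot R_\theta w=\|w\|^2\cos\theta\equiv 0$ forces $\cos\theta=0$, i.e. $T\in\{R,R^{-1}\}=\{R,R^3\}=\{R,-R\}$. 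When $x$ is proportional to $v$, $a_2=0$, every $w\in V$ has $w\cdot a_2=0$, so $Rw\equiv 0$.

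The geometric heart is iii)--v), and the single structural fact that drives them all is: \emph{$P\mathbb{R}^n_+$ is a pointed closed convex cone in $V$ having $a_1$ in its relative interior.} Pointedness follows because $w,-w\in P\mathbb{R}^n_+$ with $w=Pu$, $-w=Pu'$, $u,u'\in\mathbb{R}^n_+$ forces $u+u'\perp V\ni a_1\gg 0$, hence $u+u'=0$ and $w=0$; the relative-interior claim follows since $a_1=Pa_1$ and $a_1+\varepsilon w\in\mathbb{R}^n_{++}$ for all $w\in V$ and small $\varepsilon>0$. Consequently, when $x$ is not proportional to $v$, in $\{a_1,a_2\}$-coordinates $P\mathbb{R}^n_+$ is an angular sector $\{r(\cos\theta,\sin\theta):r\ge 0,\ \theta\in[\theta_2,\theta_1]\}$ with $\theta_2<0<\theta_1$ and $\theta_1-\theta_2<\pi$; since $\cos$ is strictly decreasing on $[0,\theta_1]\subset[0,\pi)$ and strictly increasing on $[\theta_2,0]$, the two minimizations defining $v_1,v_2$ have unique solutions, namely the extreme rays at $\theta_1$ and $\theta_2$, and $P\mathbb{R}^n_+=\{c_1v_1+c_2v_2:c_1,c_2\ge 0\}$ by the standard description of a planar cone. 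Continuity of $v_1,v_2$ at such $(x,v)$ holds because $V$, $P$ and hence the cone vary continuously while the cone stays non-degenerate. When $x$ is proportional to $v$, $V=\mathbb{R}a_1$, and $a_1\gg 0$ forces $y\cdot a_1>0$ for $y\gneq 0$, so $P\mathbb{R}^n_+=\mathbb{R}_+a_1$ and $v_1=v_2=a_1$.

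For iv) I would first record the sign facts used throughout: writing $v_i=Pu_i$ with $u_i\in\mathbb{R}^n_+\setminus\{0\}$ gives $v_i\cdot v=u_i\cdot v>0$ and $v_i\cdot x=u_i\cdot x>0$ (as $x,v\gg 0$), while $x\cdot Rv=-\|x\|(v\cdot a_2)\le 0$ and $\|v_i\|=1$ forces $x\cdot v_i$ to stay bounded away from $0$ locally. The lines $\mathbb{R}v$ and $x+\mathbb{R}Rv_1$ are not parallel (else $v_1\perp v$, contradicting $v_1\cdot v>0$), so $y_1$ is a single point, continuous in the data; pairing $y_1=x+s_2Rv_1$ with $v_1\perp Rv_1$ gives $y_1\cdot v_1=x\cdot v_1>0$, and then $y_1=s_1v$ with $s_1=(x\cdot v_1)/(v\cdot v_1)>0$, so $y_1\in\mathbb{R}_{++}v\subset\Omega$ is proportional to $v$; the degenerate case has $Rv_1=0$, hence $y_1=x$, and continuity of $y_1$ across the degenerate locus (where $v_1$ itself need not be continuous) uses that $x\cdot v_1$ stays bounded away from $0$. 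The same argument gives $y_2$. For v), $w\in x+RP\mathbb{R}^n_+=x+\mbox{cone}\{Rv_1,Rv_2\}$ iff $(w-x)\cdot v_1\ge 0$ and $(w-x)\cdot v_2\le 0$ (the half-planes bounding the rotated sector, using $Rv_2\cdot v_1>0$, $Rv_1\cdot v_2<0$), which is exactly the pair of inequalities in the definition of $\Delta$; the three bounding lines $\mathbb{R}v$, $x+\mathbb{R}Rv_1$, $x+\mathbb{R}Rv_2$ meet pairwise in $y_1,y_2,x$, and the sector points strictly toward increasing $w\cdot Rv$ (since $Rv_i\cdot Rv=v_i\cdot v>0$) while $x\cdot Rv<0$, so $\Delta$ is bounded and hence equals $\mbox{co}\{x,y_1,y_2\}$; convexity of $\Omega$ then gives $\Delta\subset\Omega$, and the degenerate case collapses $\Delta$ to $\{x\}$. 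Finally vi) is a direct computation: by i) the left side equals $(Py\cdot x)v-(Py\cdot v)x$, and expanding in the $\{a_1,a_2\}$-basis everything cancels except $\|x\|(v\cdot a_2)$ times $(Py\cdot a_1)a_2-(Py\cdot a_2)a_1=R(Py)$, with $\|x\|(v\cdot a_2)=C$.

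\textbf{Main obstacle.} The delicate point is iii)--v): proving rigorously that $P\mathbb{R}^n_+$ is a non-degenerate pointed planar cone with $a_1$ in its relative interior, identifying $v_1,v_2$ with its two extreme rays, and then bookkeeping all the sign conditions \emph{uniformly across the degenerate locus $\{x\parallel v\}$}, where $a_2$ and $v_1,v_2$ jump discontinuously while $y_1,y_2$ and $\Delta$ must nevertheless vary continuously.
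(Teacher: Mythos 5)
Your proposal is correct in its overall structure and is genuinely different from the paper's own treatment: the paper does not prove the non-degenerate case at all but simply cites the proof of Theorem~1 of Hosoya (2013) for it, and only works out the degenerate case $x\parallel v$ by hand. Your self-contained route --- viewing $P\mathbb{R}^n_+$ as a pointed planar cone with $a_1$ in its relative interior, identifying $v_1,v_2$ as its extreme rays, reading off $\Delta$ as the triangle cut from the rotated sector $x+RP\mathbb{R}^n_+$ by the line $\mathbb{R}v$, and finishing vi) by a direct expansion in the $\{a_1,a_2\}$ basis --- buys a proof that does not depend on the cited reference, at the cost of having to manage the degenerate locus explicitly. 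Parts i), ii), v) and vi) are fully correct as sketched (the $O(2)$ classification in ii) and the cancellation in vi) check out), and the observation that $v_i\cdot v=u_i\cdot v>0$ and $v_i\cdot x>0$ via a preimage $u_i\in\mathbb{R}^n_+\setminus\{0\}$ is exactly the right way to get all the sign facts in iv) and v).

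There is one genuine gap, precisely at the spot you flag as the ``main obstacle.'' For continuity of $y_1,y_2$ across $\{x\parallel v\}$ you write that $\|v_i\|=1$ forces $x\cdot v_i$ to stay bounded away from $0$ locally, but unit length alone does not prevent $v_i\cdot a_1\to 0$; you need a quantitative lower bound that is uniform even though $v_1,v_2$ themselves have no limit on the degenerate locus. The missing ingredient is that the extreme rays of the finitely generated cone $P\mathbb{R}^n_+=\mathrm{cone}\{Pe_1,\dots,Pe_n\}$ lie among the normalized generators, and $a_1\cdot\bigl(Pe_i/\|Pe_i\|\bigr)=(a_1)_i/\|Pe_i\|\ge(a_1)_i$ since $\|Pe_i\|\le 1$; hence $v_i\cdot a_1\ge\min_j(a_1)_j=\min_j x_j/\|x\|>0$, which is continuous in $x$. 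With this bound, $s_1=(x\cdot v_1)/(v\cdot v_1)=\|x\|\bigl[(v\cdot a_1)+(v\cdot a_2)(v_1\cdot a_2)/(v_1\cdot a_1)\bigr]^{-1}\to 1/c$ as $v\to cx$, because the second summand in the bracket tends to $0$, and $y_1=s_1v\to x$ as required. The same bound also cleans up iii): it shows the angular sector stays strictly inside the open right half-plane $\{w\cdot a_1>0\}$, so the two minimization problems defining $v_1,v_2$ have unique solutions on a continuously varying compact arc, and Berge's maximum theorem then delivers the asserted continuity at non-degenerate points. Spelling this one lemma out is what turns your sketch into a complete proof.
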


\begin{proof}
For the case in which $x$ is not proportional to $v$, the same claims as this lemma are proved in the proof of Theorem \ref{THM1} of Hosoya (2013). Hence, we only treat the case in which $x$ is proportional to $v$. First, $a_2=0$ and $Rw\equiv 0$ by definition. Second, $P\mathbb{R}^n_+=\{ax|a\ge 0\}$, and thus $v_1=v_2=a_1$. Third, because $\{x+sRv_i|s\in \mathbb{R}\}=\{x\}$, we have that $y_1=y_2=x$. The continuity of $y_i(x,v)$ on $\Omega^2$ is shown in Hosoya (2013). Fourth, by definition, $\Delta$ is the set of $z\in \{ax|a\ge 0\}$ such that $\|z\|\ge \|x\|$ and $\|z\|\le \|x\|$, which means that $\Delta=\{x\}$. Moreover, $x+RP\mathbb{R}^n_+=\{x\}$. Last, $C=0$ and $(y\cdot x)v-(y\cdot v)x=0$ for all $y$. Therefore, all of our claims hold. This completes the proof. $\blacksquare$
\end{proof}

Next, consider the following differential equation:
\begin{equation}\label{IC}
\dot{y}(t)=(g(y(t))\cdot x)v-(g(y(t))\cdot v)x,\ y(0)=x.
\end{equation}
By vi) of Lemma \ref{LEM1}, this equation can be rewritten as follows.
\[\dot{y}(t)=CRPg(y(t)),\ y(0)=x.\]
Let $y(t;x,v)$ be the solution function of (\ref{IC}). Note that, if $w\cdot x=w\cdot v=0$, then $w\cdot \dot{y}(t;x,v)=0$, and thus $w\cdot y(t;x,v)=0$ for all $t$. Therefore, we have that $y(t;x,v)\in V$ for all $t$.

Choose any $x,v\in \Omega$ such that $x$ is not proportional to $v$, and define
\[w^*=(v\cdot x)v-(v\cdot v)x.\]
Since $V$ is two-dimensional, we have that for $y\in V$, $w^*\cdot y=0$ if and only if $y$ is proportional to $v$. Because $v\in V$, we have that $Pv=v$, and because of vi) of Lemma \ref{LEM1}, $w^*=CRv$. Therefore,
\[w^*\cdot \dot{y}(t;x,v)=C^2(v\cdot Pg(y(t;x,v)))=C^2(v\cdot g(y(t;x,v)))>0,\]
for each $t$, and thus $t\mapsto w^*\cdot y(t;x,v)$ is an increasing function. Moreover,
\[w^*\cdot y(0;x,v)=w^*\cdot x=(v\cdot x)^2-\|v\|^2\|x\|^2<0\]
by the Cauchy--Schwarz inequality. Because $g(y(t;x,v))\in \mathbb{R}^n_+$, by iii) of Lemma \ref{LEM1}, $Pg(y(t;x,v))=c_1v_1+c_2v_2$ for some $c_1,c_2\ge 0$. Furthermore,
\[Rv_2\cdot v_1=(v_1\cdot a_2)(v_2\cdot a_1)-(v_1\cdot a_1)(v_2\cdot a_2)\ge 0,\ Rv_1\cdot v_2=-(v_1\cdot Rv_2)\le 0,\]
and thus, if $t\ge 0$,
\[\dot{y}(t;x,v)\cdot v_1=(c_1CRv_1+c_2CRv_2)\cdot v_1=c_2C(Rv_2\cdot v_1)\ge 0,\]
\[\dot{y}(t;x,v)\cdot v_2=(c_1CRv_1+c_2CRv_2)\cdot v_2=c_1C(Rv_1\cdot v_2)\le 0.\]
Hence, if $t>0$, then $y(t;x,v)\in \Delta$ if and only if $y(t;x,v)\cdot Rv\le 0$. Because of Fact \ref{FACT1} and compactness of $\Delta$, there exists $t^*>0$ such that $y(t^*;x,v)\notin \Delta$. Therefore, $w^*\cdot y(t^*;x,v)>0$. By the intermediate value theorem, there uniquely exists $t^+$ such that $w^*\cdot y(t^+;x,v)=0$.\footnote{Note that, this implies that $t^+$ is the unique $t$ such that $y(t;x,v)$ is proportional to $v$.}

Define a function $t:\Omega^2\to \mathbb{R}_+$ as
\[t(x,v)=\min\{t\ge 0|w^*\cdot y(t;x,v)\ge 0\}.\]
Because of the above consideration, $t(x,v)$ is well-defined if $x$ is not proportional to $v$. If $x$ is proportional to $v$, then the right-hand side of (\ref{IC}) is always $0$. This implies that $y(t;x,v)\equiv x$, and thus $t(x,v)=0$. In both cases, by v) of Lemma \ref{LEM1}, $y(t(x,v);x,v)\in [y_1,y_2]$. Define
\[u^g(x,v)=\frac{\|y(t(x,v);x,v)\|}{\|v\|},\]
\[\succsim^g=(u^g)^{-1}([1,+\infty[).\]
We show the following result.

\begin{lem}\label{LEM2}
Suppose that $g:\Omega\to \mathbb{R}^n_+\setminus \{0\}$ is locally Lipschitz. Then, $u^g$ is a continuous function, and $\succsim^g$ is a complete, p-transitive, continuous, and monotone binary relation on $\Omega$. Moreover, $u^g(x,v)>1$ if and only if $x\succ^gv$, and for $y,z,v\in \Omega$ such that $y,z,v\in V$ for some linear space $V$ with $\dim V\le 2$, $y\succsim^gz$ if and only if $u^g(y,v)\ge u^g(z,v)$. Furthermore, if $x,v\in \Omega$ and $x$ is not proportional to $v$, then $u^g$ is locally Lipschitz around $(x,v)$.
\end{lem}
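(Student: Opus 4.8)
The plan is to derive every assertion from properties of the solution function $y(t;x,v)$ of equation (\ref{IC}). Since the right-hand side $(g(y)\cdot x)v-(g(y)\cdot v)x$ is jointly locally Lipschitz in $(y,x,v)$, Fact \ref{FACT2} gives that $y(t;x,v)$ is locally Lipschitz on an open domain, and the discussion preceding the lemma shows that for each $(x,v)$ the interval $[0,t(x,v)]$ lies in the interior of this domain and that the trajectory stays in the compact set $\Delta=\Delta(x,v)\subset\Omega$ for $t\in[0,t(x,v)]$. A short computation gives $C^2=\|x\|^2\|v\|^2-(x\cdot v)^2=-w^*\cdot x$, so the identity $w^*\cdot\dot y(t;x,v)=C^2\,(v\cdot g(y(t;x,v)))$ integrates to $w^*\cdot y(t;x,v)=C^2\big(\int_0^t v\cdot g(y(s;x,v))\,ds-1\big)$; hence $t(x,v)$ is exactly the first time the strictly increasing function $t\mapsto\int_0^t v\cdot g(y(s;x,v))\,ds$ reaches $1$ (with $t(x,v)=0$ when $x$ is proportional to $v$, since then $C=0$). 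As $v\cdot g(y)>0$ is bounded above and below on the relevant compact sets, $t(x,v)$ is locally bounded and no blow-up occurs.

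Next I would establish continuity and local Lipschitzness of $u^g$. Set $G(t,x,v)=\int_0^t v\cdot g(y(s;x,v))\,ds$, which is continuous, strictly increasing in $t$ with $\partial_tG\ge c>0$ locally, and locally Lipschitz in $(x,v)$. A monotone-crossing argument gives continuity of $t(x,v)$ near any point where $x$ is not proportional to $v$; comparing $G(t(x,v),x,v)=1=G(t(x',v'),x',v')$ and using $\partial_tG\ge c$ together with the Lipschitz dependence of $G$ on the parameters yields $|t(x,v)-t(x',v')|\le(L/c)\|(x,v)-(x',v')\|$ there. Hence $u^g(x,v)=\|y(t(x,v);x,v)\|/\|v\|$ is locally Lipschitz around any $(x,v)$ with $x$ not proportional to $v$ ($\|v\|$ is bounded below and $y(t(x,v);x,v)\in\Omega$), which is both the last assertion of the lemma and continuity off the proportionality locus. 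At a point $(x_0,v_0)$ with $x_0$ proportional to $v_0$ I would argue differently: $\Delta(x,v)=\mathrm{co}\{x,y_1(x,v),y_2(x,v)\}$ and each $y_i$ is continuous with $y_i(x_0,v_0)=x_0$ (Lemma \ref{LEM1}, iv)), so $\mathrm{diam}\,\Delta(x,v)\to0$ as $(x,v)\to(x_0,v_0)$; since $x,\ y(t(x,v);x,v)\in\Delta(x,v)$, we get $y(t(x,v);x,v)\to x_0$ and so $u^g(x,v)\to\|x_0\|/\|v_0\|=u^g(x_0,v_0)$. Thus $u^g$ is continuous on $\Omega^2$, and $\succsim^g=(u^g)^{-1}([1,+\infty[)$ is relatively closed in $\Omega^2$, i.e.\ continuous.

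For the order properties I would use a planar picture in the slice $V=\mathrm{span}\{x,v\}$. When $V$ is two-dimensional, writing $y(t;x,v)=\alpha(t)x+\beta(t)v$ turns (\ref{IC}) into $\dot\alpha=-g(y)\cdot v<0$, $\dot\beta=g(y)\cdot x>0$, with $\alpha(0)=1$, $\beta(0)=0$, so $t(x,v)$ is where $\alpha$ first hits $0$ and $u^g(x,v)=\beta(t(x,v))$. Monotonicity is then immediate: if $x\gg v$ then either $x=cv$ with $c>1$, so $u^g(x,v)=c>1$, or $x$ is not proportional to $v$ and $\dot\alpha+\dot\beta=g(y)\cdot(x-v)>0$, so $(\alpha+\beta)(t(x,v))>\alpha(0)+\beta(0)=1$, i.e.\ $u^g(x,v)>1$; either way $x\succ^gv$. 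For the rest I would exploit that the integral curves of $\Xi(y)=CRPg(y)$ (Lemma \ref{LEM1}, vi)) foliate $V\cap\Omega$, which is an angular sector of opening less than $\pi$; along any such curve $\frac{d}{dt}(Rz\cdot y(t))=C\,(z\cdot g(y(t)))>0$ for every $z\in V\cap\Omega$, so each leaf meets each ray $\mathbb{R}_{++}z$ at most once and transversally and, being a connected arc, meets a connected (hence interval) set of rays. Consequently two distinct leaves---necessarily disjoint by uniqueness---keep a fixed radial order along every ray that both meet; evaluating this along the ray through $v$ and along the ray through $x$ gives the duality $u^g(x,v)>1\Leftrightarrow u^g(v,x)<1$ and $u^g(x,v)=1\Leftrightarrow u^g(v,x)=1$ (and coincident leaves force $u^g(x,v)=u^g(v,x)=1$). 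From this, completeness is immediate ($u^g(x,v)<1$ forces $u^g(v,x)\ge1$), and $u^g(x,v)>1\Leftrightarrow x\succ^gv$ follows by unwinding the definition of $\succ^g$. The same radial-order argument, comparing the leaves through $y$ and through $z$ along the rays through $z$ and through $v$, yields $y\succsim^gz\Leftrightarrow u^g(y,v)\ge u^g(z,v)$ whenever $y,z,v$ lie in a common subspace of dimension at most two (the one-dimensional case being a triviality about norms); p-transitivity then follows at once, since $x\succsim^gy$ and $y\succsim^gz$ with $\dim\mathrm{span}\{x,y,z\}\le2$ give $u^g(x,z)\ge u^g(y,z)\ge1$, hence $x\succsim^gz$.

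The main obstacle is the planar foliation argument. That each leaf meets each radial ray at most once is essentially already in the pre-lemma discussion, but one must carefully set up the claim that two disjoint leaves keep a constant radial order along all rays they both meet---a soft Jordan-arc fact that relies on combining transversality of the crossings, continuity of the crossing point in the ray angle, and connectedness of the angle-sets of the leaves inside a sector of opening less than $\pi$. The only other delicate point is continuity of $t(x,v)$, hence of $u^g$, across the proportionality locus, where $t$ need not even be Lipschitz; the shrinking-$\Delta$ estimate is what saves continuity there. Everything else is routine once the identity $\int_0^{t(x,v)}v\cdot g(y(s;x,v))\,ds=1$ and the $(\alpha,\beta)$-parametrization are in hand.
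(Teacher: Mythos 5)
Your argument is essentially correct, and it reaches the conclusions by a genuinely different route from the paper. Both arguments ultimately rest on the same structural fact (the uniqueness/coincidence of trajectories of the vector field $CRPg$ on the slice $V=\mathrm{span}\{x,v\}$ --- this is the paper's Step~3), and your treatment of continuity and Lipschitzness of $u^g$ (the monotone-crossing estimate via the implicit equation $G(t(x,v),x,v)=1$, and the shrinking-$\Delta$ argument at the proportionality locus) mirrors the paper's Steps~1, 2 and~9. Where you diverge is in how you extract the order properties. The paper turns the trajectory-coincidence fact into the algebraic functional equation $u^g(u^g(x,z)z,v)=u^g(x,v)$ for $z\in\mathrm{span}\{x,v\}$ (Step~4) and the monotonicity $a\mapsto u^g(ax,v)$ (Step~5), and derives completeness, p-transitivity, the $u^g(x,v)>1\Leftrightarrow x\succ^g v$ equivalence and the two-dimensional comparison purely by manipulating these identities. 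You instead run a geometric foliation argument --- leaves in the sector are disjoint Jordan arcs transversal to rays, each crossing a connected interval of rays, and two disjoint such arcs cannot exchange radial order without intersecting. You correctly identify this radial-order preservation as the delicate step of your route; the paper's functional-equation packaging avoids that topological reasoning and is arguably the cleaner way to make the same idea rigorous, at the cost of the extra preparatory Steps~4 and~5. Your $(\alpha,\beta)$-parametrization of the planar ODE, with $\dot\alpha=-g(y)\cdot v<0$, $\dot\beta=g(y)\cdot x>0$ and $u^g(x,v)=\beta(t(x,v))$, is a nice simplification for the monotonicity claim and is consistent with the paper's Step~8 (which proves the contrapositive direction, $v\gg x\Rightarrow u^g(x,v)<1$, by a projection computation). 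One point that deserves more care if you flesh this out: when you speak of ``the'' vector field $\Xi(y)=CRPg(y)$ foliating $V\cap\Omega$, note that $C,R,P$ depend on the pair $(x,v)$ and not just on $y$; the fact that leaves through two different base pairs in the same plane $V$ are comparable is precisely what the paper's Step~3 together with Lemma~\ref{LEM1}~ii) establishes, so you should cite or reprove that before invoking disjointness of distinct leaves.
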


\begin{proof}
We separate the proof into nine steps.

\begin{step}
$t(x,v)$ is continuous at $(x,v)$ when $x$ is not proportional to $v$.
\end{step}

\begin{proof}[{\bf Proof of Step 1}]
Choose any $\varepsilon>0$. Let $\delta>0$ be so small that $\delta\le \varepsilon$ and $t\mapsto y(t;x,v)$ is defined at $t(x,v)\pm \delta$. By the definition of $t(x,v)$,
\[w^*(x,v)\cdot y(t(x,v)-\delta;x,v)<0,\ w^*(x,v)\cdot y(t(x,v)+\delta;x,v)>0.\]
Therefore, there exists an open neighborhood $U$ of $(x,v)$ such that if $(x',v')\in U$, then
\[w^*(x',v')\cdot y(t(x,v)-\delta;x',v')<0<w^*(x',v')\cdot y(t(x,v)+\delta;x',v').\]
This implies that $|t(x,v)-t(x',v')|<\delta\le \varepsilon$ for any $(x',v')\in U$, as desired. This completes the proof of Step 1. $\blacksquare$
\end{proof}

\begin{step}
$u^g$ is continuous on $\Omega^2$.
\end{step}

\begin{proof}[{\bf Proof of Step 2}]
Because of Step 1, if $x$ is not proportional to $v$, then $u^g(x,v)$ is continuous at $(x,v)$. Therefore, we assume that $x$ is proportional to $v$. Then, $y_1(x,v)=y_2(x,v)=x$. Choose any $\varepsilon>0$. Because of iv) of Lemma \ref{LEM1}, $y_1$ and $y_2$ are continuous around $(x,v)$. Therefore, there exists an open neighborhood $U$ of $(x,v)$ such that if $(x',v')\in U$, then
\[\left|\frac{\|y_i(x',v')\|}{\|v'\|}-\frac{\|x\|}{\|v\|}\right|<\varepsilon.\]
Because $y(t(x',v');x',v')\in [y_1(x',v'),y_2(x',v')]$, for any $(x',v')\in U$,
\[|u^g(x',v')-u^g(x,v)|=\left|\frac{\|y(t(x',v');x,v)\|}{\|v'\|}-\frac{\|x\|}{\|v\|}\right|<\varepsilon,\]
as desired. This completes the proof. $\blacksquare$
\end{proof}

By Step 2, we have that $\succsim^g$ is continuous.

\begin{step}
Let $V$ be a two-dimensional subspace of $\mathbb{R}^n$ and suppose that $x,v,y,z\in \Omega\cap V$. Moreover, suppose that $\mbox{span}\{x,v\}=\mbox{span}\{y,z\}=V$, and there exists $t_1,t_2\in \mathbb{R}$ such that $y(t_1;x,v)=y(t_2;y,z)$. Then, the trajectories of $y(\cdot;x,v)$ and $y(\cdot;y,z)$ coincide.
\end{step}

\begin{proof}[{\bf Proof of Step 3}]
Set $w=y(t_1;x,v)=y(t_2;y,z)$, and let $P^*=P(x,v)=P(y,z)$ and $R^*=R(x,v)$. Consider the following differential equation:
\begin{equation}\label{IC2}
\dot{y}(t)=R^*P^*g(y(t)),\ y(0)=w.
\end{equation}
Define
\[z_1(t)=y(t_1+t/C(x,v);x,v).\]
Then, we can easily check that $z_1(t)$ is the nonextendable solution to (\ref{IC2}). Next, by ii) of Lemma \ref{LEM1}, $R(y,z)$ is either $R^*$ or $-R^*$. Let $s^*=+1$ if $R(y,z)=R^*$, and $s^*=-1$ if $R(y,z)=-R^*$. Define
\[z_2(t)=y(t_2+s^*t/C(y,z);y,z).\]
Then, we can easily check that $z_2(t)$ is also the nonextendable solution to (\ref{IC2}). Because the nonextendable solution to (\ref{IC2}) is unique, $z_1(t)\equiv z_2(t)$. Because the trajectory of $y(\cdot;x,v)$ is that of $z_1(\cdot)$ and the trajectory of $y(\cdot;y,z)$ is that of $z_2(\cdot)$, these are the same. This completes the proof of Step 3. $\blacksquare$
\end{proof}

\begin{step}
Suppose that $x,v\in \Omega$ and $z\in \mbox{span}\{x,v\}$. Then,
\begin{equation}\label{FE}
u^g(u^g(x,z)z,v)=u^g(x,v).
\end{equation}
\end{step}

\begin{proof}[{\bf Proof of Step 4}]
First, suppose that $x$ is proportional to $v$. Then, $z=ax$ and $v=bx$ for some $a,b>0$, and $a=u^g(x,z)$. Thus, $u^g(x,z)z=x$, and thus (\ref{FE}) holds. Hence, hereafter, we assume that $x$ is not proportional to $v$.

Second, suppose that $z$ is not proportional to $v$. By the definition, $y(t(x,z);x,z)$ is proportional to $z$, and thus
\[y(0;u^g(x,z)z,v)=u^g(x,z)z=y(t(x,z);x,z).\]
Moreover,
\[y(0;x,v)=x=y(0;x,z).\]
By Step 3, the trajectories of $y(\cdot;u^g(x,z)z,v)$ and $y(\cdot;x,v)$ coincide,\footnote{Note that, if $x$ is proportional to $z$, then $t(x,z)=0$.} and thus,
\[y(t(u^g(x,z)z,v);u^g(x,z)z,v)=y(t(x,v);x,v),\]
which implies that (\ref{FE}) holds.

Third, supppose that $z$ is proportional to $v$. Recall that
\[y(0;x,v)=x=y(0;x,z).\]
By Step 3, we have that 
\[u^g(x,z)z=y(t(x,z);x,z)=y(t(x,v);x,v),\]
and thus,
\[u^g(x,v)=\frac{\|y(t(x,v);x,v)\|}{\|v\|}=\frac{\|u^g(x,z)z\|}{\|v\|}=u^g(u^g(x,z)z,v).\]
This completes the proof of Step 4. $\blacksquare$
\end{proof}

\begin{step}
$u^g(ax,v)$ is increasing in $a>0$
\end{step}

\begin{proof}[{\bf Proof of Step 5}]
If $x$ is proportional to $v$, then $u^g(ax,v)=au^g(x,v)$, and thus the claim of this step is trivial. Suppose that $x$ is not proportional to $y$, and $a>b>0$. Define $x(s)=(1-s)x+sv$, and
\[c(s)=\frac{u^g(ax,x(s))}{u^g(bx,x(s))}.\]
Then, $c(0)=\frac{a}{b}>1$. Suppose that there exists $s\in [0,1]$ such that $c(s)\le 1$. By the intermediate value theorem, there exists $s^*\in [0,1]$ such that $c(s^*)=1$. By Step 3, $c(s)\equiv 1$, which is a contradiction. Therefore, $c(s)>1$ for all $s\in [0,1]$, and in particular, $c(1)>1$. This implies that $u^g(ax,v)>u^g(bx,v)$, as desired. This completes the proof of Step 5. $\blacksquare$
\end{proof}

\begin{step}
$u^g(x,v)>1$ if and only if $x\succ^gv$. Moreover, if $y,z,v\in V$ for some linear space $V$ with $\dim V\le 2$, then $y\succsim^gz$ if and only if $u^g(y,v)\ge u^g(z,v)$.
\end{step}

\begin{proof}[{\bf Proof of Step 6}]
First, we show the latter claim. Suppose that $y,z,v\in V$ for some linear space $V$ with $\dim V\le 2$. By Step 4, $u^g(u^g(y,z)z,v)=u^g(y,v)$, and by Step 5,
\[y\succsim^gz\Leftrightarrow u^g(y,z)\ge 1\Leftrightarrow u^g(u^g(y,z)z,v)\ge u^g(z,v)\Leftrightarrow u^g(y,v)\ge u^g(z,v),\]
as desired.

Now, suppose that $x\succ^gv$. Then, $v\not\succsim^gx$, and thus,
\[1=u^g(v,v)<u^g(x,v).\]
Conversely, suppose that $u^g(x,v)>1$. By Steps 4 and 5,
\[1=u^g(x,x)=u^g(u^g(x,v)v,x)>u^g(v,x).\]
This implies that $x\succ^gv$. This completes the proof of Step 6. $\blacksquare$
\end{proof}

\begin{step}
$\succsim^g$ is complete and p-transitive.
\end{step}

\begin{proof}[{\bf Proof of Step 7}]
First, choose $x,y\in \Omega$. By Step 6,
\[x\not\succsim^gy\Rightarrow u^g(x,y)<1=u^g(y,y)\Rightarrow y\succsim^gx,\]
which implies that $\succsim^g$ is complete.

Next, choose $x,y,z\in \Omega$ such that $\dim(\mbox{span}\{x,y,z\})\le 2$, $x\succsim^gy$, and $y\succsim^gz$. Then, $u^g(x,y)\ge 1$ and $u^g(y,z)\ge 1$. By Steps 4 and 5,
\[u^g(x,z)=u^g(u^g(x,y)y,z)\ge u^g(y,z)\ge 1,\]
and thus $x\succsim^gz$. Therefore, $\succsim^g$ is p-transitive. This completes the proof of Step 7. $\blacksquare$
\end{proof}

\begin{step}
$\succsim^g$ is monotone.
\end{step}

\begin{proof}[{\bf Proof of Step 8}]
Suppose that $x,v\in \Omega^2$ and $v\gg x$. If $x$ is proportional to $v$, then
\[u^g(v,x)=\frac{\|v\|}{\|x\|}>1,\]
and thus $v\succ^gx$. Hence, we hereafter assume that $x$ is not proportional to $v$. Because $\succsim^g$ is complete, it suffices to show that $u^g(x,v)<1$. Define $z=v-x$. Then, $z\in \mbox{span}\{x,v\}$ and $z\gg 0$. By i), ii), and vi) of Lemma \ref{LEM1},
\[\dot{y}(t;x,v)\cdot Rz=CRPg(y(t;x,v))\cdot Rz=C(g(y(t;x,v))\cdot z)>0,\]
and thus,
\begin{align*}
u^g(x,v)v\cdot Rz=&~y(t(x,v);x,v)\cdot Rz\\
>&~y(0;x,v)\cdot Rz=x\cdot Rz\\
=&~(v-z)\cdot Rz=v\cdot Rz.
\end{align*}
Because $x\cdot a_2=0$,
\begin{align*}
v\cdot Rz=&~v\cdot [(z\cdot a_1)a_2-(z\cdot a_2)a_1]\\
=&~(z\cdot a_1)(v\cdot a_2)-(v\cdot a_2)(v\cdot a_1)\\
=&~-(v\cdot a_2)(x\cdot a_1)=-\|x\|(v\cdot a_2).
\end{align*}
Now,
\[v\cdot a_2=\frac{1}{\|v-(v\cdot a_1)a_1\|}[\|v\|^2-(v\cdot a_1)^2]>0\]
by the Cauchy--Schwarz inequality. Therefore, $v\cdot Rz<0$. This implies that $u^g(x,v)<1$, as desired. This completes the proof of Step 8. $\blacksquare$
\end{proof}

\begin{step}
If $x,v\in \Omega$ and $x$ is not proportional to $v$, then $u^g$ is locally Lipschitz around $(x,v)$.
\end{step}

\begin{proof}[{\bf Proof of Step 9}]
It suffices to show that there exist a neighborhood $U$ of $(x,v)$ and $L>0$ such that if $(y,z),(y',z')\in U$, then
\[|t(y,z)-t(y',z')|\le L\|(y,z)-(y',z')\|.\]
Choose $h>0$ such that $y(\cdot;x,v)$ is defined on $[0,t(x,v)+h]$. Choose any $\varepsilon>0$ and define $C_{\varepsilon}=\{(y,z)\in \Omega|\|(x,v)-(y,z)\|\le \varepsilon\}$. We choose $\varepsilon>0$ so small that $C_{\varepsilon}\subset \Omega$, and for all $(y,z)\in C_{\varepsilon}$, $y$ is not proportional to $z$, $y(\cdot;y,z)$ is defined on $[0,t(x,v)+h]$, and $t(y,z)<t(x,v)+h$. Define
\[m=\inf\{w^*(y,z)\cdot C(y,z)R(y,z)P(y,z)g(w)|(y,z)\in C_{\varepsilon},\ w\in \Delta(y,z)\}.\]
Because $C_{\varepsilon}$ is compact, we have that $m>0$. Because $w^*(y,z)\cdot y(t;y,z)$ is locally Lipschitz and $[0,t(x,v)+h]\times C_{\varepsilon}$ is compact, there exists $L'>0$ such that if $(t,y,z),(t',y',z')\in [0,t(x,v)+h]\times C_{\varepsilon}$, then
\[|w^*(y,z)\cdot y(t;y,z)-w^*(y',z')\cdot y(t';y',z')|\le L'\|(t,y,z)-(t',y',z')\|.\]
Choose $(y,z),(y',z')\in C_{\varepsilon}$. Without loss of generality, we assume that $t(y,z)\le t(y',z')$. By assumption,
\[0\ge w^*(y',z')\cdot y(t(y,z);y',z')\ge -L'\|(y,z)-(y',z')\|.\]
Note that, $w^*(y',z')\cdot y(t(y',z');y',z')=0$, and if $t\in [t(y,z),t(y',z')]$, then $y(t;y',z')\in \Delta(y',z')=\mbox{co}\{y',y_1(y',z'),y_2(y',z')\}$. This implies that
\[w^*(y',z')\cdot \dot{y}(t;y',z')\ge m,\]
and thus,
\[t(y',z')-t(y,z)\le \frac{L'}{m}\|(y,z)-(y',z')\|,\]
as desired. This completes the proof of Step 9. $\blacksquare$
\end{proof}

Steps 2, 6-9 state that all claims of Lemma \ref{LEM2} are correct. This completes the proof. $\blacksquare$
\end{proof}

\setcounter{step}{0}

\begin{lem}\label{LEM3}
Suppose that $g:\Omega\to \mathbb{R}^n_+\setminus \{0\}$ is a locally Lipschitz function. Then, the following two statements are equivalent.
\begin{enumerate}[(i)]
\item $g$ satisfies the weak weak axiom.

\item $\succsim^g$ is weakly convex.

\item $f^g=f^{\succsim^g}$.
\end{enumerate}
\end{lem}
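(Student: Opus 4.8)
The plan is to route all three equivalences through one geometric property, call it property~(C): for every two-dimensional linear subspace $V$ and every $x\in V\cap\Omega$ the set $U_x^V=\{z\in V\cap\Omega\mid z\succsim^g x\}$ is convex. The point of working in a $2$-plane $V$ is that the restriction of $\succsim^g$ to $V\cap\Omega$ is, by $p$-transitivity (Lemma~\ref{LEM2}), a complete, transitive, continuous binary relation, hence a continuous weak order for which $u^g(\cdot,v)$ is a genuine utility whenever $v\in V\cap\Omega$ (Step~6 of the proof of Lemma~\ref{LEM2}), and its level curve through $x$ is exactly the trajectory of $(\ref{IC})$ from $x$ (Steps~3--5 of that proof). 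By vi) of Lemma~\ref{LEM1} this trajectory has velocity $CRPg$, so at $x$ it is tangent to the line $\ell_x^V=\{w\in V\mid g(x)\cdot w=g(x)\cdot x\}$; and since $g(x)\gneq 0$, monotonicity of $\succsim^g$ puts points of $U_x^V$ on the side $g(x)\cdot w>g(x)\cdot x$ arbitrarily near $x$. Consequently, once $U_x^V$ is known to be convex, $\ell_x^V$ supports it and $U_x^V\subseteq H_x^V:=\{w\in V\mid g(x)\cdot w\ge g(x)\cdot x\}$. Throughout, the degenerate cases in which the two vectors in play are proportional are handled separately via $u^g(ax,v)=a\,u^g(x,v)$ and monotonicity.

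I would first prove \textbf{(i) $\Rightarrow$ (C)}, which is where essentially all the work lies. Fix $V$, $x$ and a base point $v\in V\cap\Omega$, and approximate the trajectory of $(\ref{IC})$ through $x$ by Euler's method: $z_0=x$, $z_{k+1}=z_k+hCRPg(z_k)$. Since $w\cdot Rw=0$ on $V$, we get $g(z_k)\cdot z_{k+1}=g(z_k)\cdot z_k$, so the weak weak axiom applied to $z_k,z_{k+1}$ yields $g(z_{k+1})\cdot z_k\ge g(z_{k+1})\cdot z_{k+1}$, that is $Pg(z_{k+1})\cdot RPg(z_k)\le 0$. Keeping track of the orientation of $R$ and the fact that every $Pg(z_k)$ lies in the cone $P\mathbb{R}^n_+$, this inequality forces all the turns of the Euler polygon to go the same way, so the polygon is convex toward the origin. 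Letting $h\downarrow 0$ and invoking convergence of Euler's method for the locally Lipschitz field $CRPg$ (Facts~\ref{FACT1} and~\ref{FACT8}), the trajectory through $x$, hence every level curve of $u^g(\cdot,v)$ in $V$, is a limit of such polygons and therefore bounds a convex region; monotonicity identifies that region with $U_x^V$. I expect getting the direction of every turn right — and keeping the Euler iterates inside $\Omega$ and close to the trajectory on a suitable compact arc — to be the main obstacle.

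The remaining implications are comparatively short. For \textbf{(C) $\Rightarrow$ (i)} it suffices to verify the strict form of the weak weak axiom: if $g(x)\cdot y<g(x)\cdot x$ and $x$ is not proportional to $y$, then in $V=\mathrm{span}\{x,y\}$ we have $U_x^V\subseteq H_x^V$ but $y\notin H_x^V$, so $y\notin U_x^V$ and hence $y\prec^g x$ by completeness; the symmetric argument would give $x\prec^g y$ if $g(y)\cdot x<g(y)\cdot y$, contradicting transitivity of $\succ^g$ on $V$, so $g(y)\cdot x>g(y)\cdot y$. The equivalence \textbf{(C) $\Leftrightarrow$ (ii)} is routine: convexity of contour sets in planes immediately gives weak convexity of $\succsim^g$ (apply it in the plane spanned by the two points); conversely, given $b,c\in U_x^V$, completeness lets us assume $b\succsim^g c$, and weak convexity applied to the pair $(c,b)$ together with transitivity of $\succsim^g$ on $V$ gives $(1-t)c+tb\succsim^g x$.

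Finally, \textbf{(i) $\Leftrightarrow$ (iii)}, using the tautology $x\in f^g(p,m)\iff p\cdot x=m$ and $g(x)=\lambda p$ for some $\lambda>0$. For \textbf{(iii) $\Rightarrow$ (i)} — which uses neither (C) nor the Euler step — assume $f^g=f^{\succsim^g}$, suppose $g(x)\cdot y<g(x)\cdot x$, and for contradiction suppose $g(y)\cdot x\le g(y)\cdot y$. With $p=g(x)$, $m=g(x)\cdot x$ we have $x\in f^{\succsim^g}(p,m)$ and $y\in\Delta(p,m)$ with $p\cdot y<m$, so $y\not\succ^g x$; picking $y'\in\mathrm{span}\{x,y\}$ with $y'\gg y$, close to $y$, and still in $\Delta(p,m)$, monotonicity and continuity give $y'\succ^g y$ and $y'\not\succ^g x$, hence $x\succsim^g y'\succ^g y$ and so $x\succ^g y$ (the restriction of $\succsim^g$ to $\mathrm{span}\{x,y\}\cap\Omega$ being a transitive weak order); but with $q=g(y)$, $m'=g(y)\cdot y$ the hypothesis $q\cdot x\le m'$ gives $x\in\Delta(q,m')$, and $y\in f^{\succsim^g}(q,m')$ forces $x\not\succ^g y$, a contradiction. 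For \textbf{(i) $\Rightarrow$ (iii)}: given (C) and $x\in f^g(p,m)$, so that $\ell_x^V$ is the budget line $\{p\cdot w=m\}$ in every plane $V\ni x$, any $z\in\Delta(p,m)$ with $p\cdot z<m$ lies outside $H_x^V\supseteq U_x^V$ and hence satisfies $z\prec^g x$, while a point $z$ with $p\cdot z=m$ is reduced to the previous case by perturbing it slightly inward and using continuity of $\succsim^g$; thus $z\not\succ^g x$ for all $z\in\Delta(p,m)$, i.e. $x\in f^{\succsim^g}(p,m)$. Conversely, if $x\in f^{\succsim^g}(p,m)$ then $p\cdot x=m$ (else a dominating affordable point is strictly preferred, by monotonicity) and $g(x)$ is proportional to $p$ (else some direction $d$ with $p\cdot d<0$ and $g(x)\cdot d>0$ lands $x+sd$ in the interior of the convex set $U_x^{\mathrm{span}\{x,d\}}$ — whose boundary is the $C^1$ indifference curve tangent to $\ell_x$ — while keeping it in $\Delta(p,m)$, so $x+sd\succ^g x$), i.e. $x\in f^g(p,m)$. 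The only delicate bookkeeping beyond the Euler step is ruling out that an indifference curve contains a segment lying on one of these supporting lines.
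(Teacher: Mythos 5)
Your overall architecture is sound and the decisive step is the same as the paper's: the paper's Steps 4--6 of Lemma~\ref{LEM3} also approximate the indifference trajectory in a $2$-plane by Euler's polygon, use the weak weak axiom to show the successive slopes $c_i$ turn monotonically, and pass to the limit to get the segment $(1-s)x+sv$ above the indifference curve. Your property~(C) is just a geometric repackaging of the paper's statement (ii), and the chain (i) $\Rightarrow$ (C) $\Rightarrow$ (iii), (iii) $\Rightarrow$ (i) mirrors the paper's (i) $\Rightarrow$ (ii) $\Rightarrow$ (iii) $\Rightarrow$ (i). One structural difference worth noting: the paper's Step~2 proves $f^{\succsim^g}(p,m)\subseteq f^g(p,m)$ \emph{unconditionally} (using the ODE directly), whereas you invoke~(C) for that inclusion; your version is still correct, just less economical. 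Your (iii) $\Rightarrow$ (i) is also organized slightly differently --- you perturb $y$ to $y'\gg y$ inside $\mathrm{span}\{x,y\}$ and use p-transitivity, whereas the paper scales $x$ to $ax$ with $a>1$ and invokes Step~5 of Lemma~\ref{LEM2}; both work.

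There is one genuine gap in your \textbf{(C) $\Rightarrow$ (i)} argument. You announce that it suffices to prove the strict form of the weak weak axiom, namely $g(x)\cdot y<g(x)\cdot x \Rightarrow g(y)\cdot x>g(y)\cdot y$, but your argument only rules out $g(y)\cdot x<g(y)\cdot y$, which yields the \emph{non-strict} conclusion $g(y)\cdot x\ge g(y)\cdot y$. That weaker implication is \emph{not} equivalent to the weak weak axiom (its contrapositive only gives $g(x)\cdot y\ge g(x)\cdot x$, not the strict inequality required). You still need to exclude the borderline case $g(x)\cdot y<g(x)\cdot x$ together with $g(y)\cdot x=g(y)\cdot y$. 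This can be done from~(C): from the first inequality you get $x\succ^g y$, so the indifference curve through $y$ in $V=\mathrm{span}\{x,y\}$ meets the ray $\mathbb{R}_{++}x$ at a point $\alpha x$ with $\alpha<1$ (because $u^g(\cdot,v)$ is increasing along rays). But $\alpha x\in U_y^V\subseteq H_y^V$ forces $\alpha\,g(y)\cdot x\ge g(y)\cdot y=g(y)\cdot x$, i.e.\ $\alpha\ge1$ --- contradiction. Without some argument of this type, (C) $\Rightarrow$ (i) does not close. The remaining concern is the one you flag yourself: the Euler-polygon step (i) $\Rightarrow$ (C) is only sketched, and getting the orientation of the turns right is exactly the bookkeeping the paper carries out with the slope ratios $c_i=b_2^i/b_1^i$ and $c_{0j}$ in its Step~5; what you describe would need that same level of detail to count as a proof.
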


\begin{proof}
We separate the proof into six steps.

\begin{step}
(iii) implies (i).
\end{step}

\begin{proof}[{\bf Proof of Step 1}]
Suppose that (iii) holds, and $g(x)\cdot y\le g(x)\cdot x$. Then, $x\succsim^gy$, and thus $u^g(x,y)\ge 1$. If $g(y)\cdot x<g(y)\cdot y$, then $g(y)\cdot ax<g(y)\cdot y$ for some $a>1$. Because of Step 5 of the proof of Lemma \ref{LEM2}, we have that $u^g(ax,y)>1$, which implies that $ax\succ^gy$. Therefore, $y\in f^g(g(y),g(y)\cdot y)\setminus f^{\succsim^g}(g(y),g(y)\cdot y)$, which contradicts (iii). Therefore, we have that $g(y)\cdot x\ge g(y)\cdot y$, and thus (i) holds. This completes the proof of Step 1. $\blacksquare$
\end{proof}

\begin{step}
$f^{\succsim^g}(p,m)\subset f^g(p,m)$ for all $(p,m)\in (\mathbb{R}^n_+\setminus \{0\})\times \mathbb{R}_{++}$.
\end{step}

\begin{proof}[{\bf Proof of Step 2}]
Choose any $(p,m)\in (\mathbb{R}^n_+\setminus \{0\})\times \mathbb{R}_{++}$. Suppose that $x\in f^{\succsim^g}(p,m)$. Because $\succsim^g$ is monotone, $p\cdot x=m$. Let $v\in \mathbb{R}^n$, $v\neq 0$, $x+v\in \Omega$, and $v\cdot g(x)=0$. Suppose that $p\cdot y(t^*;x,x+v)<m$ for some $t^*\in \mathbb{R}$. Define $z=y(t^*;x,x+v)$. Choose $a>1$ such that $p\cdot az<m$. Then, $az\succ^gz$ and $z\sim^gx$, and thus, $az\succ^gx$. Therefore, $x\notin f^{\succsim^g}(p,m)$, which is a contradiction. Therefore, $p\cdot y(t;x,x+v)\ge m$ for all $t$, and thus, $p\cdot \dot{y}(0;x,x+v)=0$. By vi) of Lemma \ref{LEM1},
\[0=p\cdot \dot{y}(0;x,x+v)=p\cdot CRPg(x),\]
and by ii) of Lemma \ref{LEM1}, $RPg(x)$ is proportional to $v$. Therefore, $p\cdot v=0$. This implies that $g(x)$ is proportional to $p$, and thus $x\in f^g(p,m)$.
\end{proof}

\begin{step}
If (ii) holds, then $f^g(p,m)\subset f^{\succsim}(p,m)$ for all $(p,m)\in (\mathbb{R}^n_+\setminus \{0\})\times \mathbb{R}_{++}$.
\end{step}

\begin{proof}[{\bf Proof of Step 3}]
Suppose that $x\in f^g(p,m)$, and there exists $v\in \Omega$ such that $p\cdot v\le m$ and $v\succ^g x$. Then, $u^g(x,v)<1$, and thus $p\cdot y(t(x,v);x,v)<p\cdot v\le m$. Clearly, $v$ is not proportional to $x$. Let $z=y(t(x,v);x,v)$ and $x(s)=(1-s)x+sz$. Because $\succsim^g$ is weakly convex, $x(s)\succsim^gx$. Recall that $w^*(x,v)=(v\cdot v)x-(v\cdot x)v$. Note that $w^*(x,v)\cdot x(s)=(1-s)w^*(x,v)\cdot x$. Because $\dot{y}(t;x,v)\cdot w^*(x,v)>0$ for all $t$, by the implicit function theorem, we have that there exists an increasing and continuously differentiable function $t^*:[0,1]\to [0,t(x,v)]$ such that $y(t^*(s);x,v)-x(s)$ is proportional to $v$. Because $g(x)$ is proportional to $p$, $p\cdot \dot{y}(0;x,v)=0>p\cdot (z-x)$, and thus,
\[p\cdot \frac{d}{ds}y(t^*(s);x,v)>p\cdot (z-x),\]
for any sufficiently small $s>0$. Therefore, there exists $s\in ]0,1]$ such that $p\cdot y(t^*(s);x,v)>p\cdot x(s)$, which implies that $y(t^*(s);x,v)\gg x(s)$. Because $\succsim^g$ is monotone, we have that $y(t^*(s);x,v)\succ^gx(s)$. By p-transitivity, we have that $x\succ^g x(s)$, which is a contradiction. This completes the proof of Step 3. $\blacksquare$
\end{proof}

Steps 2 and 3 show that (ii) implies (iii). Thus, it suffices to show that (i) implies (ii).

\begin{step}
Suppose that $x,v\in \Omega$ and $x$ is not proportional to $v$. Then, $R(v,x)=-R(x,v)$.
\end{step}

\begin{proof}[{\bf Proof of Step 4}]
Because of ii) of Lemma \ref{LEM1}, we have that either $R(v,x)=R(x,v)$ or $R(v,x)=-R(x,v)$. Therefore, it suffices to show that $R(v,x)v\neq R(x,v)v$. In fact,
\begin{align*}
x\cdot R(x,v)v=&~(a_1(x,v)\cdot v)(a_2(x,v)\cdot x)-(a_2(x,v)\cdot v)(a_1(x,v)\cdot x)\\
=&~-\frac{\|x\|^2\|v\|^2-(x\cdot v)^2}{\|x\|\|v-(a_1(x,v)\cdot v)a_1(x,v)\|}<0,\\
x\cdot R(v,x)v=&~(a_1(v,x)\cdot v)(a_2(v,x)\cdot x)-(a_2(v,x)\cdot v)(a_1(v,x)\cdot x)\\
=&~\frac{\|x\|^2\|v\|^2-(x\cdot v)^2}{\|v\|\|x-(a_1(v,x)\cdot x)a_1(v,x)\|}>0,
\end{align*}
which completes the proof of Step 4. $\blacksquare$
\end{proof}

\begin{step}
Suppose that (i) holds, and choose any $x,v\in \Omega$ such that $x$ is not proportional to $v$ and $x\sim^gv$. For any $s\in [0,1]$, define $x(s)=(1-s)x+sv$. Then, $x(s)\succsim^gx$.
\end{step}

\begin{proof}[{\bf Proof of Step 5}]
Recall that $y(t;x,v)$ is the solution function of the following differential equation:
\[\dot{y}(t)=CRPg(y(t)),\ y(0)=x,\]
where $P=P(x,v), R=R(x,v), C=C(x,v)$. Choose $k\in \mathbb{N}$, and define
\[h^k=\frac{t(x,v)}{k},\ t_i^k=ih^k,\]
\[x_0^k=x,\ x_{i+1}^k=x_i^k+h^kCRPg(x_i^k),\]
and for any $t\in [t_i^k,t_{i+1}^k]$,
\[x^k(t)=\frac{t-t_i^k}{t_{i+1}^k-t_i^k}x_{i+1}^k+\frac{t_{i+1}^k-t}{t_{i+1}^k-t_i^k}x_i^k.\]
It is known that if $k$ is sufficiently large, then $x_i^k$ is defined and in $\Omega$ for all $i\in \{0,...,k\}$, and moreover, $x^k:[0,t(x,v)]\to \Omega$ is a continuous function that uniformly converges to $y(t;x,v)$ as $k\to \infty$.\footnote{The function $x^k$ is called an {\bf explicit Euler approximation} of the solution $y(\cdot;x,v)$ to (\ref{IC}). See, for example, Theorem 1.1 of Iserles (2009).}

Next, define $h(y)=\frac{1}{\|Pg(y)\|}Pg(y)$. Let $a_1=a_1(x,v)$ and $a_2=a_2(x,v)$. Then, there exists $b^i=(b_1^i,b_2^i)$ such that $\|b^1\|=1$, $b_1^i>0$, and
\[h(x_i^k)=b_1^ia_1+b_2^ia_2.\]
Define $v_i^k=x_{i+1}^k-x_i^k$. By definition,
\[v_i^k=\|v_i^k\|(b_1^ia_2-b_2^ia_1).\]
and thus,
\[h(x_i^k)\cdot v_j^k=\|v_j^k\|(b_1^jb_2^i-b_1^ib_2^j).\]
Let $c_i=\frac{b_2^i}{b_1^i}$. We show that $c_i$ is nonincreasing in $i$. By construction, $h(x_i^k)\cdot v_i^k=0$, and by the weak weak axiom, $h(x_{i+1}^k)\cdot v_i^k\le 0$. Therefore, $b_1^ib_2^{i+1}-b_1^{i+1}b_2^i\le 0$, and thus $c_{i+1}\le c_i$, as desired.

Next, for $j\in \{1,...,k\}$, define $v_{0j}^k=x_j^k-x_0^k$. Then, there exists $b^{0j}=(b_1^{0j},b_2^{0j})$ such that
\[v_{0j}^k=\|v_{0j}^k\|(b_1^{0j}a_2-b_2^{0j}a_1).\]
By definition, $b_1^{0j}>0$ and $\|b^{0j}\|=1$. Let $c_{0j}=\frac{b_2^{0j}}{b_1^{0j}}$. Using mathematical induction on $j$, we can show that $c_{0j}$ is nonincreasing.

Now, choose any $s\in [0,1]$, and define $x^k_*(s)=(1-s)x_0^k+sx_k^k$. We show that there uniquely exists $t^k\in [0,t(x,v)]$ such that $x^k_*(s)-x^k(t^k)$ is proportional to $v$. Actually, both $x^k(\cdot)\cdot w^*(x,v)$ and $x^k_*(\cdot)\cdot w^*(x,v)$ are increasing, the intermediate value theorem implies that our claim holds. By Step 4, $R(v,x)=-R$, and thus both $x^k(\cdot)\cdot w^*(v,x)$ and $x^k_*(\cdot)\cdot w^*(v,x)$ are decreasing. This implies that for each $i\in \{1,...,k-1\}$, there exists $s_i\in [0,1]$ such that $x_i^k-x^k_*(s_i)$ is proportional to $x$. Define
\[D(y)=(y\cdot a_1)(v_{0i}^k\cdot a_2)-(y\cdot a_2)(v_{0i}^k\cdot a_1).\]
Then, $D(x)=\|x\|b_1^{0i}>0$. Moreover,
\[D(x^k_*(s_i)-x_i^k)=D(s_iv_{0k}^k-v_{0i}^k)=s_iD(v_{0k}^k)=s_i\|v_{0i}^k\|\|v_{0k}^k\|b_1^{0k}b_1^{0i}[c_{0i}-c_{0k}]\ge 0,\]
which implies that $x^k_*(s_i)\ge x_i^k$. Therefore, $x^k_*(s)\ge x^k(t_k)$. Because $t^k\in [0,t(x,v)]$ for all $k$, there exists a convergent subsequence $(t^{\ell(k)})$. Suppose that $t^*=\lim_{k\to \infty}t^{\ell(k)}$. By the above arguments, $x(s)-y(t^*;x,v)$ is proportional to $v$, and $x(s)\ge y(t^*;x,v)$. Because $\succsim^g$ is monotone, $x(s)\succsim^gy(t^*;x,v)$. Because $x\sim^gy(t^*;x,v)$, we have that $x(s)\succsim^gx$, as desired. This completes the proof of Step 5. $\blacksquare$
\end{proof}

\begin{step}
(i) implies (ii).
\end{step}

\begin{proof}[{\bf Proof of Step 6}]
Suppose that $x\succsim^gy$ and $s\in [0,1]$, and let $x(s)=(1-s)x+sy$. If $x$ is proportional to $y$, then $x(s)\ge y$, and thus $x(s)\succsim^gy$. Hence, we assume that $x$ is not proportional to $y$. Because $x\succsim^gy$, $y\not\succ^gx$, and by Lemma \ref{LEM2}, $u^g(y,x)\le 1$. Define $z(s)=(1-s)u^g(y,x)x+sy$. Because $u^g(y,x)x\sim^gy$, by Step 5, $z(s)\succsim^gy$. Because $x(s)\ge z(s)$, by monotonicity, $x(s)\succsim^gz(s)$. Therefore, by p-transitivity, $x(s)\succsim^gy$. This completes the proof of Step 6. $\blacksquare$
\end{proof}

This completes the proof of Lemma \ref{LEM3}. $\blacksquare$
\end{proof}
\setcounter{step}{0}

\begin{lem}\label{LEM4}
Suppose that $g:\Omega\to \mathbb{R}^n_+\setminus\{0\}$ is locally Lipschitz, and there exist an open set $U\subset \Omega$ and a function $u:U\to \mathbb{R}$ that satisfies property (F). Then, the following claims hold.
\begin{enumerate}[i)]
\item Suppose that $I$ is an open interval and $x:I\to U$ is differentiable, and for all $t\in I$, $g(x(t))\cdot \dot{x}(t)>0$. Then, $u(x(t))$ is increasing.

\item If we can choose $U=\Omega$, then $g$ satisfies Ville's axiom.
\end{enumerate}
\end{lem}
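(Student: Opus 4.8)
The plan is to prove i) first, by a local argument that replaces the (possibly non-differentiable) level-set structure of $u$ with a genuine $C^1$ defining function, and then to deduce ii) by applying i) piece-by-piece to a hypothetical Ville curve. For i), I would first reduce to a purely local claim: it suffices to show that for every $t_0\in I$ there is $\delta>0$ with $u(x(t))<u(x(t_0))$ for $t\in(t_0-\delta,t_0)$ and $u(x(t))>u(x(t_0))$ for $t\in(t_0,t_0+\delta)$. An elementary argument with the maximum of the continuous function $t\mapsto u(x(t))$ on a subinterval then upgrades this to strict monotonicity on all of $I$: if $a<b$ but $u(x(a))\ge u(x(b))$, the maximum of $u\circ x$ on $[a,b]$ is attained either at $a$ (contradicting the right-hand increase at $a$) or at an interior point (contradicting the right-hand increase there).

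To prove the local claim, fix $t_0$, set $p=x(t_0)$, $w_0=u(p)$, and $X=u^{-1}(w_0)$. By property (F), $X$ is an $(n-1)$-dimensional $C^1$ manifold with $g(p)$ orthogonal to $T_p X$; since $g(p)\neq 0$ this forces $T_pX=g(p)^{\perp}$. Choosing $k$ with $g_k(p)>0$ (possible since $g(p)\in\mathbb{R}^n_+\setminus\{0\}$), the coordinate direction $e_k$ is transverse to $T_pX$, so by the implicit function theorem $X$ coincides, on a neighborhood of $p$, with a $C^1$ graph $x_k=\phi(x_{\hat k})$ over the remaining coordinates $x_{\hat k}$. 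Put $\psi(x)=x_k-\phi(x_{\hat k})$. Then $\psi$ is $C^1$, $\nabla\psi$ never vanishes, $\{\psi=0\}$ equals $X$ near $p$, and $\nabla\psi(p)$, being normal to $X$ at $p$ with positive $k$-th component, equals $c\,g(p)$ for some $c>0$.

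Next I would localize. Choose a small neighborhood $B$ of $p$ (a box in the $C^1$ coordinates $(\psi,x_{\hat k})$) so that $X\cap B=\{\psi=0\}\cap B$ and so that $\{\psi>0\}\cap B$ and $\{\psi<0\}\cap B$ are each connected. On each of these two connected sets $u$ is continuous and, since $\{u=w_0\}\cap B=\{\psi=0\}\cap B$, nowhere equal to $w_0$; hence $u-w_0$ has constant sign on each. To pin down the signs, note that $p-\varepsilon\mathbf{1}\in\{\psi<0\}\cap B$ for small $\varepsilon>0$ (because $\nabla\psi(p)\cdot\mathbf{1}=c\sum_i g_i(p)>0$), while $u(p-\varepsilon\mathbf{1})<u(p)=w_0$ since $u$ is increasing and $p\gg p-\varepsilon\mathbf{1}$; therefore $u<w_0$ on $\{\psi<0\}\cap B$ and $u>w_0$ on $\{\psi>0\}\cap B$. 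Finally, the derivative of $\psi(x(t))$ at $t_0$ is $\nabla\psi(p)\cdot\dot{x}(t_0)=c\,g(p)\cdot\dot{x}(t_0)>0$, so for $t$ slightly larger (resp.\ smaller) than $t_0$ we have $x(t)\in B$ and $\psi(x(t))>0$ (resp.\ $<0$), which gives the local claim.

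For ii), suppose $g$ violated Ville's axiom, so there is a piecewise $C^1$ closed curve $x:[0,T]\to\Omega=U$ with $g(x(t))\cdot\dot{x}(t)>0$ wherever $\dot{x}(t)$ exists. Applying i) to $x$ restricted to the interior of each $C^1$ piece shows $t\mapsto u(x(t))$ is strictly increasing there, hence (by continuity) on each closed piece, hence on $[0,T]$; but then $u(x(0))<u(x(T))=u(x(0))$, a contradiction, so no Ville curve exists. The main obstacle is the local step of i): coping with the non-differentiability of $u$ by passing to the $C^1$ defining function $\psi$ of its level set, and correctly identifying \emph{which} side of that level set carries the larger values of $u$; the reduction to a local statement, the connectedness bookkeeping, and the deduction of ii) are routine.
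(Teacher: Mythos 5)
Your proof is correct and takes essentially the same approach as the paper: reduce to the local claim that $u\circ x$ increases through each $t_0$, realize the level set $u^{-1}(u(x(t_0)))$ in $C^1$ coordinates via property (F), pin down the sign of $u-w_0$ on each side using monotonicity of $u$, compute the positive transverse derivative along $x(t)$, upgrade local to global with a maximum argument, and derive ii) piecewise. The only cosmetic difference is the choice of chart and test point: the paper parametrizes $X$ and adds the normal direction $g(x(t_0))$, testing the sign at $x(t_0)+h_1g(x(t_0))$, while you write $X$ as a graph $x_k=\phi(x_{\hat k})$ and test at $p\pm\varepsilon\mathbf{1}$, which is in fact slightly cleaner since $\mathbf 1\gg 0$ directly invokes the paper's notion of ``increasing.''
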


\begin{proof}
First, we show i). Suppose that $I$ is an open interval and $x:I\to \Omega$ is differentiable, and for all $t\in I$, $g(x(t))\cdot \dot{x}(t)>0$. Define $h(t)=u(x(t))$. Choose any $t^*\in I$. We first show that there exists $\varepsilon>0$ such that if $0<|t-t^*|<\varepsilon$, then $h(t)>h(t^*)$ if $t>t^*$, and $h(t)<h(t^*)$ if $t<t^*$. Let $X=u^{-1}(u(x(t^*)))$. By assumption, $X$ is an $n-1$ dimensional $C^1$ manifold, and $g(x)$ is orthogonal to $T_x(X)$ for each $x\in X$. Choose a corresponding $C^1$ local parametrization $\varphi:U\to V$ of $X$, where $U$ is open neighborhood of $0$ in $\mathbb{R}^{n-1}$, $V$ is an open neighborhood of $x(t^*)$ in $X$, and $\varphi(0)=x(t^*)$. Define $\Phi(z,a)=\varphi(z)+ag(x(t^*))$. Then, $D\Phi(0,0)$ is regular, and by the inverse function theorem, there exists an open and convex neighborhood $U'\subset \mathbb{R}^n$ of $(0,0)$ and an open neighborhood $V'\subset \mathbb{R}^n$ of $x(t^*)$ such that $\Phi:U'\to V'$ is a $C^1$ diffeomorphism. We can assume that for $(z,h)\in U'$, $u(\Phi(z,h))=u(x(t^*))$ if and only if $h=0$. Choose $h_1>0$ such that $(0,h_1)\in U'$. Then, $\Phi(0,h_1)=x(t^*)+hg(x(t^*))\gneq x(t^*)$. Because $u$ is increasing and $h_1\neq 0$, $u(\Phi(0,h_1))>u(x(t^*))$. By the convexity of $U'$ and the intermediate value theorem, we have that $u(\Phi(z,h))>u(x(t^*))$ if $h>0$. By the symmetrical arguments, we have that $u(\Phi(z,h))<u(x(t^*))$ if $h<0$.

Define $y(t)=\Phi^{-1}(x(t))$. We can assume that there exists $\delta>0$ such that $y(t)$ is defined on $I\equiv ]t^*-\delta,t^*+\delta[$. By assumption, $g(x(t^*))\cdot \dot{x}(t^*)>0$, and thus $\dot{y}_n(t^*)=\|g(x(t^*))\|^{-2}(g(x(t^*))\cdot \dot{x}(t^*))>0$. Therefore, there exists $\varepsilon>0$ such that 1) if $0<t-t^*<\varepsilon$, then $y_n(t)>0$, and 2) if $0<t^*-t<\varepsilon$, then $y_n(t)<0$. Hence, when $0<|t-t^*|<\varepsilon$, then $h(t)>h(t^*)$ if $t>t^*$ and $h(t)<h(t^*)$ if $t<t^*$, as desired. 

Let $t',t''\in I$ and $t'<t''$. Suppose that $h(t'')\le h(t')$. Then, there exists $t^*\in \arg\max\{h(t)|t\in [t',t'']\}$ such that $t^*\neq t''$. By the above arguments, there exists $t'''\in ]t^*,t'']$ such that $h(t''')>h(t^*)$, which is a contradiction. Therefore, $h(t)$ is increasing, and i) holds.

Now, suppose that $U=\Omega$, and $x:[0,T]\to \Omega$ is a Ville curve, and define $h(t)=u(x(t))$. Because $x(t)$ is piecewise $C^1$, there exists a finite increasing sequence $t_0,...,t_m\in [0,T]$ such that $t_0=0, t_m=T$ and the restriction of $x(t)$ on $[t_i,t_{i+1}]$ is continuously differentiable. If $t_i<t'<t''<t_{i+1}$, then by Step 1, $h(t'')>h(t')$. By continuity of $h$, $h$ is increasing in $[t_i,t_{i+1}]$, and therefore $h(t_0)<h(t_1)<...<h(t_m)$. This implies that $u(x(0))<u(x(T))=u(x(0))$, which is a contradiction. Therefore, there is no Ville curve, and ii) holds. This completes the proof. $\blacksquare$
\end{proof}

\begin{lem}\label{LEM5}
Suppose that $g:\Omega\to \mathbb{R}^n_+\setminus \{0\}$ is locally Lipschitz. For any $v\in \Omega$, define $u^g_v(x)=u^g(x,v)$. Then, the following conditions are equivalent.
\begin{enumerate}[1)]
\item $g$ satisfies Ville's axiom.

\item $u^g_v$ satisfies property (F).

\item The equality
\begin{equation}\label{FE2}
u^g(u^g(x,z)z,v)=u^g(x,v)
\end{equation}
holds for all $x,z,v\in \Omega$.

\item $\succsim^g$ is represented by $u^g_v$.

\item $\succsim^g$ is transitive.
\end{enumerate}
Moreover, if $u$ is another function that satisfies property (F), then $u$ is a monotone transform of $u^g_v$. Furthermore, if $g$ is $C^k$, then $u^g_v$ is $C^k$ and $\nabla u^g_v(x)\neq 0$ for all $x\in \Omega$. 
\end{lem}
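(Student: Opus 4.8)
The plan is to prove the five conditions equivalent by the chain $2)\Rightarrow 1)\Rightarrow 5)\Rightarrow 4)\Leftrightarrow 3)$ together with $5)\Rightarrow 2)$, and then to settle the two "moreover" assertions. I expect the algebra connecting the representation-type conditions $3)$, $4)$, $5)$ to be routine, and the construction of property~(F) for $u^g_v$ under Ville's axiom to be the real difficulty.

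First the equivalences among $3)$, $4)$, $5)$, which should need only Lemmas \ref{LEM1}--\ref{LEM2}. From the planar statements of Lemma \ref{LEM2} and Steps~4--5 of its proof I would extract two facts: (a) $u^g(x,z)z\sim^g x$ for all $x,z\in\Omega$ (apply the planar characterization to the coplanar triple $x,z,u^g(x,z)z$ and use $u^g(u^g(x,z)z,z)=u^g(x,z)$); and (b) $u^g_v(av)=a$, with $a\mapsto u^g_v(av)$ strictly increasing. Then $4)\Leftrightarrow 3)$: if $u^g_v$ represents $\succsim^g$, $(\ref{FE2})$ is just $u^g_v(u^g(x,z)z)=u^g_v(x)$ read off from (a); conversely $(\ref{FE2})$ with (b) and the planar representation gives $y\succsim^g z\Leftrightarrow u^g(y,z)\ge 1\Leftrightarrow u^g(u^g(y,z)z,v)\ge u^g(z,v)\Leftrightarrow u^g_v(y)\ge u^g_v(z)$. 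A represented relation is transitive, so $4)\Rightarrow 5)$; and if $\succsim^g$ is transitive, then for arbitrary $y,z$ one has $y\succsim^g z\Leftrightarrow u^g_v(y)v\succsim^g u^g_v(z)v$ by (a), and the latter is $u^g_v(y)\ge u^g_v(z)$ by monotonicity of $\succsim^g$ on the ray $\mathbb{R}_{++}v$, so $u^g_v$ represents $\succsim^g$ and $5)\Rightarrow 4)$. This also yields the fourth displayed item.

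Next I would link this to Ville's axiom. The implication $2)\Rightarrow 1)$ is Lemma \ref{LEM4} ii) with $U=\Omega$. For $1)\Leftrightarrow 5)$ I would argue through Ville curves, after a normalization lemma (the "Step~1" referred to in the Remarks on Theorem~\ref{THM1}): if $g$ admits any Ville curve, it admits a \emph{three-sided} one, i.e.\ a piecewise $C^1$ closed curve with breakpoints $0=t_0<t_1<t_2<t_3\le T$ such that $x(t)\in\mathrm{span}\{x(t_i),x(t_{i+1})\}$ on $[t_i,t_{i+1}]$, $x(t_3)=ax(t_0)$ with $0<a<1$, and $g(x(t))\cdot\dot x(t)>0$ wherever defined; one cuts a given Ville curve into planar arcs, projects each into the plane of its endpoints and reparametrizes so that it stays improving, then closes with a radial segment (improving automatically since $g\gneq 0$, $x\gg 0$). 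On a strictly improving arc inside a plane $V$, $u^g(\cdot,v)|_V$ increases, since its level curves are exactly the trajectory curves of $(\ref{IC})$ and these are transverse to every improving direction; hence a three-sided Ville curve gives $x(t_{i+1})\succ^g x(t_i)$ for $i=0,1,2$ while $x(t_0)\succ^g ax(t_0)=x(t_3)$ by monotonicity, contradicting transitivity of $\succsim^g$, so $5)\Rightarrow 1)$. Conversely, if $\succsim^g$ is not transitive, completeness gives $a,b,c$ with $a\succsim^g b$, $b\succsim^g c$, $c\succ^g a$; enlarging the weakly preferred endpoints slightly to make every comparison strict (using continuity of $u^g$ in both arguments and p-transitivity for the coplanar enlargements) and closing with a radial segment produces a closed strictly improving curve, contradicting $1)$. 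Thus $1)\Leftrightarrow 5)$ and, with the previous paragraph, all of $1)$--$5)$ are equivalent.

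The heart of the proof is $1)\Rightarrow 2)$, and this is where I expect the main obstacle: because $g$ is only locally Lipschitz, $u^g_v$ need not be $C^1$, so the $C^1$-manifold structure of its level sets is not automatic and must be imported from Fact \ref{FACT6}. Stage one: I would show $1)$ forces condition~(B) almost everywhere — otherwise Fact \ref{FACT6} gives a point around which $(\ref{TDE})$ has no normal solution, and this local non-integrability can be turned (again via the cutting/radial-closure device) into a Ville curve, contradicting $1)$. Stage two: by Fact \ref{FACT6}, around every point there is a normal solution $(w,\lambda)$ whose level sets are $C^1$ manifolds with $g$ normal to them, and I would show $u^g_v$ is locally a strictly increasing reparametrization of $w$. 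The key fact, already in Step~3 of the proof of Lemma \ref{LEM2}, is that $u^g_v$ is constant along every trajectory curve of $(\ref{IC})$; since $g(y(t))\cdot\dot y(t)=0$ along $(\ref{IC})$, such a curve started on a leaf $M=w^{-1}(c)$ stays in $M$ (it is everywhere tangent to the foliation of Fact \ref{FACT6}) and meets the ray $\mathbb{R}_{++}v$ transversally (as $g\gneq 0$, $v\in\mathbb{R}^n_{++}$ give $g\cdot v>0$) at the single point $u^g_v(\cdot)v$; hence $u^g_v(x_1)v\in M$ for all $x_1\in M$, and since $M\cap\mathbb{R}_{++}v$ is discrete while $x_1\mapsto u^g_v(x_1)$ is continuous and $M$ locally connected, $u^g_v$ is constant on $M$ near the base point. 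Moving off $M$ along $g$ strictly increases both $w$ and, by Lemma \ref{LEM4} i) (applicable now that Fact \ref{FACT6} supplies a local potential), $u^g_v$; so locally $u^g_v=\psi\circ w$ with $\psi$ a strictly increasing local homeomorphism, whence $u^g_v^{-1}(a)$ is locally a level set of $w$, a $C^1$ manifold with $g$ orthogonal to its tangent space. With local Lipschitzness of $u^g_v$ (Lemma \ref{LEM2} off the ray, and on it via the same reparametrization) and monotonicity (from the representation), this is property~(F). For the remaining claims: any $u$ with property~(F) has, by the same picture, the same level sets as $u^g_v$ (leaves of the foliation of Fact \ref{FACT6}, with $u$ constant on each), so $u=\varphi\circ u^g_v$ for a strictly increasing $\varphi$; and if $g$ is $C^k$, Fact \ref{FACT3} makes the solution function of $(\ref{IC})$ $C^k$, and since $\tfrac{d}{dt}\big(w^*(x,v)\cdot y(t;x,v)\big)>0$ the implicit function theorem makes $t(x,v)$ and hence $u^g_v(x)=\|y(t(x,v);x,v)\|/\|v\|$ a $C^k$ function of $x$ for $x$ not proportional to $v$, from which (composing through a second reference $w_0$ not proportional to $v$, using the representation) $u^g_v$ is $C^k$ everywhere with nowhere-vanishing gradient, the gradient being a positive multiple of $g$ wherever it exists. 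The steps I expect to require the most care are stage one (Ville's axiom $\Rightarrow$ condition~(B) a.e.\ for merely locally Lipschitz $g$) and the transfer of the $C^1$-manifold structure from $w$ to $u^g_v$.
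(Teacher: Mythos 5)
Your treatment of $3)\Leftrightarrow 4)\Leftrightarrow 5)$, of $1)\Leftrightarrow 5)$ via Ville curves, of $2)\Rightarrow 1)$ via Lemma~\ref{LEM4}, and of the two ``moreover'' assertions ($C^k$ smoothness via the implicit function theorem and the monotone-transform uniqueness) are all essentially the argument the paper gives, so the skeleton is sound. The problem lies entirely in your ``stage one'' of $1)\Rightarrow 2)$, which is the heart of the proof.

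You propose to show that Ville's axiom forces condition~(B) almost everywhere, by contraposition: if (B) fails on a set of positive measure, Fact~\ref{FACT6} yields a point with no local normal solution to (\ref{TDE}), and you assert that ``this local non-integrability can be turned (again via the cutting/radial-closure device) into a Ville curve.'' That implication is not something you can take for granted. Nothing in Fact~\ref{FACT6} or in the cutting/radial-closure machinery produces a closed improving loop from the mere absence of a local normal solution; for merely locally Lipschitz $g$ this is precisely the hard content, and the paper's \emph{own} proof that Ville's axiom implies condition~(B) a.e.\ (Theorem~\ref{THM3}~iii) goes \emph{through} Lemma~\ref{LEM5}: it first uses property~(F) of $u^g_v$ to build a global normal solution and only then invokes Fact~\ref{FACT6}. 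So the chain you are relying on — non-(B) $\Rightarrow$ no normal solution $\Rightarrow$ Ville curve — is, in this paper's logical order, a consequence of the lemma you are trying to prove, and you supply no independent argument for it. The paper deliberately avoids this detour: its Step~4 constructs the $C^1$ manifold structure of the level sets of $u^g_v$ \emph{directly} from the representation, using the differential equation (\ref{SH2}), Fact~\ref{FACT9}, and a Fubini argument, and the remarks in Section~4 state explicitly that the extension of Frobenius' theorem (Fact~\ref{FACT6}) is used in Theorem~\ref{THM3} but \emph{not} in Theorem~\ref{THM1}. Once you accept Stage one your Stage two (importing the $C^1$ leaf structure from the local potential $w$ and showing $u^g_v$ is a strictly increasing reparametrization of $w$) is a reasonable sketch, but the route as written is circular unless you can fill the gap with a genuinely independent proof that non-integrability of a locally Lipschitz field on a positive-measure set forces a Ville curve, and that would itself be a substantial new lemma.
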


\begin{proof}
We separate the proof into six steps.

\begin{step}
1) implies 5).
\end{step}

\begin{proof}[{\bf Proof of Step 1}]
We show that if 5) is violated, then 1) is also violated. Suppose that $\succsim^g$ is not transitive. Then, there exist $x,y,z\in \Omega$ such that $x\succsim^gy$, $y\succsim^gz$, and $z\succ^gx$. By p-transitivity, each two vectors of $x,y,z$ are linearly independent. Because $u^g(x,y)\ge 1$, by Lemma \ref{LEM2}, $u^g(u^g(x,y)y,z)\ge u^g(y,z)\ge 1$. By monotonicity and p-transitivity, $u^g(u^g(x,y)y,z)z\succ^gx$, and thus by replacing $y$ with $u^g(x,y)y$ and $z$ with $u^g(u^g(x,y)y,z)z$, we can assume that $x\sim^gy$ and $y\sim^gz$, and $z\succ^gx$. Let $x^*=u^g(z,x)x$. Consider the following differential equation:
\begin{equation}\label{VC1}
\dot{w}(t)=(g(w(t))\cdot x^*)z-(g(w(t))\cdot z)x^*+\varepsilon z,\ w(0)=x^*.
\end{equation}
Let $w_1(t;\varepsilon)$ be the solution function of (\ref{VC1}). Clearly, $w_1(t;0)=y(t;x^*,z)$, and thus $w_1(\cdot;0)$ is defined on $[0,t(x^*,z)+h_1]$ for some $h_1>0$, and $w_1(t(x^*,z);0)=z$. By Fact \ref{FACT2}, there exists $\varepsilon_1>0$ such that if $0<\varepsilon<\varepsilon_1$, then $w_1(\cdot;\varepsilon)$ is defined on $[0,t(x^*,z)+h_1]$, and there exists $t_1(\varepsilon)>0$ such that $v_1(\varepsilon)\equiv w_1(t_1(\varepsilon);\varepsilon)$ is proportional to $z$. Because $\dot{w}_1(0;\varepsilon)\gg \dot{w}_1(0;0)$, $w_1(t;\varepsilon)\gg y(t;x^*,z)$ for sufficiently small $t>0$. Suppose that $w_1(t;\varepsilon)\le y(t;x^*,z)$ for some $t\in [0,t(x^*,z)]$. Let $t^*=\inf\{t'\in [0,t]|w_1(t';\varepsilon)\le y(t';x^*,z)\}$. Then, $t^*>0$. Moreover, $\dot{w}(t^*;\varepsilon)\gg \dot{y}(t^*;x^*,z)$, and thus there exists $t^+<t^*$ such that $w(t^+;\varepsilon)\ll y(t^+;x^*,z)$, which contradicts the definition of $t^*$. Therefore, $w_1(t;\varepsilon)\gg y(t;x^*,z)$ for all $t$, and thus, $v_1(\varepsilon)=a_1(\varepsilon)z$ for some $a_1(\varepsilon)>1$.

Next, consider the following differential equation:
\begin{equation}\label{VC2}
\dot{w}(t)=(g(w(t))\cdot v_1(\varepsilon))y-(g(w(t))\cdot y)v_1(\varepsilon)+\varepsilon y,\ w(0)=v_1(\varepsilon).
\end{equation}
Let $w_2(t;\varepsilon)$ be the solution function of (\ref{VC2}). Clearly, $w_2(t;0)=y(t;z,y)$, and thus $w_2(\cdot;0)$ is defined on $[0,t(z,y)+h_2]$ for some $h_2>0$, and $w_2(t(z,y);0)=y$. Thus, there exists $\varepsilon_2>0$ such that if $0<\varepsilon<\varepsilon_2$, then $w_2(\cdot;\varepsilon)$ is defined on $[0,t(z,y)+h_2]$, and there exists $t_2(\varepsilon)>0$ such that $v_2(\varepsilon)\equiv w_2(t_2(\varepsilon);\varepsilon)$ is proportional to $y$. By almost tha same arguments as in the last paragraph, we can show that $w_2(t;\varepsilon)\gg y(t;z,y)$ for all $t$, and thus, $v_2(\varepsilon)=a_2(\varepsilon)y$ for some $a_2(\varepsilon)>1$.

Third, consider the following differential equation:
\begin{equation}\label{VC3}
\dot{w}(t)=(g(w(t))\cdot v_2(\varepsilon))x-(g(w(t))\cdot x)v_2(\varepsilon)+\varepsilon x,\ w(0)=v_2(\varepsilon).
\end{equation}
Let $w_3(t;\varepsilon)$ be the solution function of (\ref{VC3}). Clearly, $w_3(t;0)=y(t;y,x)$, and thus $w_3(\cdot;0)$ is definedon $[0,t(y,x)+h_3]$ for some $h_3>0$, and $w_3(t(y,x);0)=x$. Thus, there exists $\varepsilon_3>0$ such that if $0<\varepsilon<\varepsilon_3$, then $w_3(\cdot;\varepsilon)$ is defined on $[0,t(y,x)+h_3]$, and there exists $t_3(\varepsilon)>0$ such that $v_3(\varepsilon)\equiv w_3(t_3(\varepsilon);\varepsilon)$ is proportional to $x$. By almost the same arguments as in the last paragraphs, we can show that $w_3(t;\varepsilon)\gg y(t;y,x)$ for all $t$, and thus, $v_3(\varepsilon)=a_3(\varepsilon)x$ for some $a_3(\varepsilon)>1$.

Because $w_1(t;\varepsilon),w_2(t;\varepsilon),w_3(t;\varepsilon)$ are continuous in $\varepsilon$, there exists $\varepsilon^*>0$ such that $v_3(\varepsilon^*)\ll x^*$. Let $t_1^*=t_1(\varepsilon^*)$, $t_2^*=t_1^*+t_2(\varepsilon^*)$, $t_3^*=t_2^*+t_3(\varepsilon^*)$, and $T=t_3^*+1$. Define a piecewise $C^1$ closed curve $x:[0,T]$ as follows:
\[x(t)=\begin{cases}
w_1(t;\varepsilon^*) & \mbox{if }0\le t\le t_1^*,\\
w_2(t-t_1^*;\varepsilon^*) & \mbox{if }t_1^*\le t\le t_2^*,\\
w_3(t-t_2^*;\varepsilon^*) & \mbox{if }t_2^*\le t\le t_3^*,\\
(T-t)v_3(\varepsilon^*)+(t-t_3^*)x^* & \mbox{if }t_3^*\le t\le T.
\end{cases}\]
Then, we can easily check that $x(t)$ is a Ville curve, and thus 1) is violated. This completes the proof of Step 1. $\blacksquare$
\end{proof}

\begin{step}
3), 4), and 5) are mutually equivalent.
\end{step}

\begin{proof}[{\bf Proof of Step 2}]
Clearly, 4) implies 5).

Suppose that 5) holds. Choose any $x,z,v\in \Omega$. Then,
\[u^g(x,v)v\sim^gx\sim^gu^g(x,z)z\sim^gu^g(u^g(x,z)z,v)v,\]
which implies that (\ref{FE2}) is correct. Therefore, 3) holds.

Finally, suppose that 3) holds. Choose any $v\in \Omega$ and define $u^g_v(x)=u^g(x,v)$. By Lemma \ref{LEM2}, $u^g(az,v)$ is increasing in $a$. Therefore, if $x\succsim^gz$, then
\[u^g_v(x)=u^g(x,v)=u^g(u^g(x,z)z,v)\ge u^g(z,v)=u^g_v(z),\]
and if $u^g_v(x)\ge u^g_v(z)$, then
\[u^g(z,v)=u^g_v(z)\le u^g_v(x)=u^g(x,v)=u^g(u^g(x,z)z,v),\]
which implies that $u^g(x,z)\ge 1$ and $x\succsim^gz$. hence, 4) holds. This completes the proof of Step 2. $\blacksquare$
\end{proof}

\begin{step}
2) implies 1).
\end{step}

\begin{proof}[{\bf Proof of Step 3}]
This result immediately follows from Lemma \ref{LEM4}. $\blacksquare$
\end{proof}

\begin{step}
5) implies 1).
\end{step}

\begin{proof}[{\bf Proof of Step 4}]
Suppose that 5) holds. By Step 2, 3) and 4) also hold. First, choose any $x,v\in \Omega$, and choose $z\in \Omega$ such that both $x,v$ are not proportional to $z$. By (\ref{FE2}),
\[u^g_v(y)=u^g_v(u^g(y,z)z)\]
for any $y$ that is sufficiently near to $x$, and by Lemma \ref{LEM2}, the right-hand side is locally Lipschitz around $x$. By Fact \ref{FACT7}, $u^g_v$ is locally Lipschitz on $\Omega$.

Next, choose any $x,z\in \Omega$ such that $z\gg x$. Because $\succsim^g$ is monotone, $z\succ^gx$, and by 4), $u^g_v(z)>u^g_v(x)$. Therefore, $u^g_v$ is increasing.

Suppose that $n=2$. Choose any $a>0$ and let $X\equiv (u^g_v)^{-1}(a)$. Because $av\in X$, $X\neq \emptyset$. Choose $x\in X$ that is not proportional to $v$. Then, $X$ coincides the trajectory of $y(\cdot;x,v)$, and thus it is $1$ dimensional $C^1$ manifold. Moreover, if $z\in X$, then $z=y(t;x,v)$ for some $t$, and $\dot{y}(t;x,v)\cdot g(z)=0$. Clearly, $T_z(X)=\mbox{span}\{\dot{y}(t;x,v)\}$, and thus our claim is correct.

Hence, we hereafter assume that $n\ge 3$. Because both $u^g_v$ and $g$ are locally Lipschitz, by Rademacher's theorem, these are differentiable almost everywhere. Suppose that $u^g_v$ is differentiable at $x$. Choose linearly independent vectors $v_1,...,v_{n-1}\in \mathbb{R}^n$ such that $g(x)\cdot v_i=0$ and $x+v_i\in \Omega$ for any $i\in \{1,...,n-1\}$. Define $w_i=\dot{y}(0;x,x+v_i)$. Then, we must have that $w_i$ is positively proportional to $v_i$. Because $Du^g_v(x)w_i=0$, we have that $\nabla u^g_v(x)=\lambda(x)g(x)$ for some $\lambda(x)\ge 0$. If $x$ is proportional to $v$, then $u^g_v(ax)=au^g_v(x)$, and thus $\lambda(x)\neq 0$. Suppose that $x$ is not proportional to $v$ and $\lambda(x)=0$. Note that,
\[a=u^g(ax,x)=u^g(u^g_v(ax)v,x)\]
and $u^g_x$ is locally Lipschitz around $u^g_v(x)v$. Therefore, there exists $L>0$ such that if $a>1$ and $a-1$ is sufficiently small, then $0<u^g(u^g_v(ax)v,x)-u^g(u^g_v(x)v,x)\le L(u^g_v(ax)-u^g_v(x))\|v\|$. This implies that,
\[1=\lim_{a\downarrow 1}\frac{u^g(u^g_v(ax)v,x)-u^g(u^g_v(x)v,x)}{a-1}\le \lim_{a\downarrow 1}\frac{(u^g_v(ax)-u^g_v(x))L\|v\|}{a-1}=0,\]
which is a contradiction. Therefore, we must have that $\lambda(x)>0$.

Choose any $a>0$ and define $X=(u^g_v)^{-1}(a)$. Because $av\in X$, $X$ is a nonempty set. Choose any $x^*\in X$. Without loss of generality, we assume that $g_n(x^*)\neq 0$. Therefore, there exists $\varepsilon>0$ such that if $|x_i-x_i^*|<\varepsilon$ for all $i\in \{1,...,n\}$, then $x\in \Omega$ and $g_n(x)\neq 0$. If $g_n(x)\neq 0$, then define $f_i(x)=-\frac{g_i(x)}{g_n(x)}$ for each $i\in \{1,...,n-1\}$. Hereafter, we use the following notation: if $x=(x_1,...,x_n)$, then $\tilde{x}=(x_1,...,x_{n-1})$. Consider the following differential equation.
\begin{equation}\label{SH2}
\dot{c}(t)=f((1-t)\tilde{x}^*+t\tilde{x},c(t))\cdot (\tilde{x}-\tilde{x}^*),\ c(0)=x_n^*+h.
\end{equation}
Let $c(t;\tilde{x},h)$ be the solution function of (\ref{SH2}). Because $c(t;\tilde{x}^*,0)\equiv x_n^*$, it is defined on $[0,1]$. By Fact \ref{FACT2}, there exists $\delta>0$ such that if $|x_i-x_i^*|\le \delta$ for each $i$ and $|h|\le \delta$, then $c(\cdot;\tilde{x},h)$ is defined on $[0,1]$ and $|c(t;\tilde{x},h)-x_n^*|<\varepsilon$ for all $t\in [0,1]$. It is clear that $\delta<\varepsilon$. Let $\bar{U}=\{(\tilde{x},h)||x_i-x_i^*|\le \delta,\ |h|\le \delta\}$, and define $E(\tilde{x},h)=c(1;\tilde{x},h)$ for $(\tilde{x},h)\in \bar{U}$. If $\tilde{x}=\tilde{x}^*$, then $E(\tilde{x},h)=x_n^*+h$, and thus $u^g_v(\tilde{x},E(\tilde{x},h))=u^g_v(x^*+he_n)$. Suppose that $\tilde{x}\neq \tilde{x}^*$. Then, there exists $i^*\in \{1,...,n-1\}$ such that $x_{i^*}\neq x_{i^*}^*$. By Fact \ref{FACT9} and Fubini's theorem, there exists a sequence $(x^k,h^k)$ such that $(x^k,h^k)\to (x,h)$ as $k\to \infty$, and $u^g_v$ is differentiable at $((1-t)\tilde{x}^*+t\tilde{x}^k,c(t;\tilde{x}^k,h^k))$ for almost all $t\in [0,1]$. Therefore,
\[\frac{d}{dt}u^g_v((1-t)\tilde{x}^*+t\tilde{x}^k,c(t;\tilde{x}^k,h^k))=0\]
almost everywhere, which implies that $u^g_v(\tilde{x}^k,E(\tilde{x}^k,h^k))=u^g_v(x^*+h^ke_n)$. Letting $k\to \infty$, we obtain that $u^g_v(\tilde{x},E(\tilde{x},h))=u^g_v(x^*+he_n)$.

Suppose that $h_1<h_2$ and $E(\tilde{x},h_1)\ge E(\tilde{x},h_2)$. Then, there exists $t^*\in [0,1]$ such that $c(t^*;\tilde{x},h_1)=c(t^*;\tilde{x},h_2)$, and by Fact \ref{FACT1}, $c(0;\tilde{x},h_1)=c(0;\tilde{x},h_2)$, which is a contradiction. Therefore, $E(\tilde{x},h)$ is increasing in $h$. Consider the following differential equation:
\[\dot{d}(t)=f((1-t)\tilde{x}^*+t\tilde{x},d(t))\cdot (\tilde{x}-\tilde{x}^*),\ d(1)=x_n^*+h'.\]
Let $d(t;\tilde{x},h')$ be the solution function of this equation. Because $d(t;\tilde{x}^*,E(\tilde{x}^*,h)-x_n^*)=c(t;\tilde{x}^*,h)=x_n^*+h$, there exists $\delta'>0$ such that if $|x_i-x_i^*|\le \delta'$ for each $i$ and $|h|\le \delta'$, then $|E(\tilde{x},h)-x_n^*|\le \delta$ and $t\mapsto d(t;\tilde{x},E(\tilde{x},h)-x_n^*)$ is defined on $[0,1]$. In this case, $c(t;\tilde{x}^*,h)=d(t;\tilde{x},E(\tilde{x},h)-x_n^*)$, and thus $d(0;\tilde{x},E(\tilde{x},h)-x_n^*)=x_n^*+h$. By repeating the above argument, we can show that $d(0;\tilde{x},h')$ is increasing in $h'$. Without loss of generality, we can assume that $\delta'<\delta$, and $d(t;\tilde{x},h')$ is defined whenever $t\in [0,1]$, $|x_i-x_i^*|\le \delta'$, and $|h'|\le \delta$. Define
\[F(\tilde{x},h')=d(0;\tilde{x},h').\]
Then, if $|x_i-x_i^*|\le \delta'$ for all $i$ and $|h|\le \delta'$,
\[E(\tilde{x},h)=x_n^*+h'\Leftrightarrow F(\tilde{x},h')=x_n^*+h.\]
Define $U=\{(\tilde{x},h)||x_i-x_i^*|\le \delta',\ |h|\le \delta'\}$, and let $W$ be the set of all $(\tilde{x},h)\in U$ such that $E$ is differentiable at $(\tilde{x},h)$. By Rademacher's theorem, $U\setminus W$ is a null set. Let $V=\{(\tilde{x},h')||x_i-x_i^*|\le \delta',\ |h'|\le \delta\}$, and define $\Phi(\tilde{x},h')=(\tilde{x},F(\tilde{x},h')-x_n^*)$. Then, $U\subset \Phi(V)$. Let $W'$ be the set of all $(\tilde{x},h')\in V$ such that $u^g_v$ is differentiable at $(\tilde{x},x_n^*+h')$. Then, $V\setminus W'$ is a null set. Because $\Phi$ is Lipschitz on $V$, $\Phi(V\setminus W')$ is also a null set. Therefore, $U\setminus (W\cap \Phi(W'))$ is a null set. If $(\tilde{x},h)\in W\cap \Phi(W')$, then $E$ is differentiable at $(\tilde{x},h)$ and $u^g_v$ is differentiable at $(\tilde{x},E(\tilde{x},h))$. Choose any $(\tilde{x},h)\in W\cap \Phi(W')$. Then,
\begin{align*}
0=&~\left.\frac{\partial}{\partial x_i}u^g_v(\tilde{y},E(\tilde{y},h))\right|_{\tilde{y}=\tilde{x}}\\
=&~\frac{\partial u^g_v}{\partial x_i}(\tilde{x},E(\tilde{x},h))+\frac{\partial u^g_v}{\partial x_n}(\tilde{x},E(\tilde{x},h))\frac{\partial E}{\partial x_i}(\tilde{x},h)\\
=&~\lambda(\tilde{x},E(\tilde{x},h))\left[g_i(\tilde{x},E(\tilde{x},h))+g_n(\tilde{x},E(\tilde{x},h))\frac{\partial E}{\partial x_i}(\tilde{x},h)\right],
\end{align*}
which implies that
\[\frac{\partial E}{\partial x_i}(\tilde{x},h)=f_i(\tilde{x},E(\tilde{x},h)).\]
Now, choose any $\tilde{x}$ such that $|x_i-x_i^*|<\delta'$ for each $i\in \{1,...,n-1\}$. By Fubini's theorem, there exists $\delta''>0$ and a sequence $(\tilde{x}^k,h^k)$ on $U$ such that $(\tilde{x}^k,h^k)\to (\tilde{x},0)$ as $k\to \infty$, and for all $k$ and almost all $t\in [-\delta'',\delta'']$, $(\tilde{x}^k+te_i,h^k)\in W\cap \Phi(W')$. This implies that,
\[E(\tilde{x}^k+te_i,h^k)-E(\tilde{x}^k,h^k)=\int_0^tf_i(\tilde{x}^k+se_i,E(\tilde{x}^k+se_i,h^k))ds.\]
By the dominated convergence theorem,
\[E(\tilde{x}+te_i,0)-E(\tilde{x},0)=\int_0^tf_i(\tilde{x}+se_i,E(\tilde{x}+se_i,0))ds,\]
and thus,
\[\frac{\partial E}{\partial x_i}(\tilde{x},0)=f_i(\tilde{x},E(\tilde{x},0)).\]
This implies that $\tilde{x}\mapsto E(\tilde{x},0)$ is continuously differentiable around $\tilde{x}^*$, and thus the mapping
\[\Psi:\tilde{x}\mapsto (\tilde{x},E(\tilde{x},0))\]
is a $C^1$ local parametrization of $X$ around $x^*$. Therefore, $X$ is an $n-1$ dimensional $C^1$ manifold. Moreover, because $g(x^*)\cdot \frac{\partial \Psi}{\partial x_i}(\tilde{x}^*)=0$ for all $i\in \{1,...,n-1\}$, we have that $g(x^*)\cdot v=0$ for all $v\in T_{x^*}(X)$, as desired. This completes the proof of Step 4. $\blacksquare$
\end{proof}

\begin{step}
If $g$ is $C^k$ and satisfies Ville's axiom, then $u^g_v$ is $C^k$ and there exists a continuous function $\lambda:\Omega\to \mathbb{R}_{++}$ such that $\nabla u^g_v(x)=\lambda(x)g(x)$ for all $x\in \Omega$.
\end{step}

\begin{proof}[{\bf Proof of Step 5}]
In this case, $y(t;x,v)$ is $C^k$ by Fact \ref{FACT3}. Suppose that $(x,v)\in \Omega$. Choose $z\in \Omega$ such that both $x,v$ are not proportional to $z$. By the implicit function theorem, we have that $t(y,w)$ is $C^k$ around $(x,z)$, and thus $u^g$ is $C^k$ around $(x,z)$. By the same argument, we have that $u^g$ is $C^k$ around $(u^g(x,z)z,v)$. By 3),
\[u^g(u^g(y,z)z,w)=u^g_v(y,w),\]
and the left-hand side is $C^k$ around $(x,v)$. Therefore, $u^g$ is $C^k$ and $u^g_v$ is also $C^k$. Moreover,
\[a=u^g_v(av)=u^g_v(u^g(av,x)x),\ u^g(u^g_v(x)v,x)=1.\]
Differentiating the first equality in $a$ at $a=u^g_v(x)$,
\[1=(\nabla u^g_v(x)\cdot x)\times (Du^g_x(u^g_v(x)v)v,\]
which implies that $\nabla u^g_v(x)\neq 0$. Let $x\in \Omega$, and $X=(u^g_v)^{-1}(u^g_v(x))$. Let $g(x)\cdot w=0$. Then, $w\in T_x(X)$. By Fact \ref{FACT4}, there exists an open interval $I$ including $0$ and a $C^1$ function $c:I\to X$ such that $c(0)=x$ and $\dot{c}(0)=w$. Therefore, $\nabla u^g_v(x)\cdot w=0$. Thus, $\nabla u^g_v(x)$ is proportional to $g(x)$. Because $u^g_v$ is increasing, there exists $\lambda(x)>0$ such that $\nabla u^g_v(x)=\lambda(x)g(x)$. This completes the proof of Step 5.  $\blacksquare$
\end{proof}

\begin{step}
If $g$ satisfies Ville's axiom and $u:\Omega\to \mathbb{R}$ is a function that satisfies property (F), then $u$ is a monotone transform of $u^g_v$.
\end{step}

\begin{proof}[{\bf Proof of Step 6}]
Choose any $x\in \Omega$ such that $x$ is not proportional to $v$. We first show that $u(y(t;x,v))=u(x)$ for all $t\in [0,t(x,v)]$.

Consider the following differential equation:
\[\dot{z}(t)=(g(z(t))\cdot x)v-(g(z(t))\cdot v)x+hv,\ z(0)=x.\]
Let $z(t;h)$ be the solution function of the above equation. Because $z(t;0)=y(t;x,v)$, there exists $\varepsilon>0$ such that if $|h|<\varepsilon$, then $z(\cdot;h)$ is defined on $[0,t(x,v)]$. Let $h>0$. Because of i) of Lemma \ref{LEM4}, $u(z(t;h))$ is an increasing function, and thus $u(z(t;h))>u(x)$ for any $t>0$. Letting $h\to 0$, we have that $u(y(t;x,v))\ge u(x)$. By the symmetrical argument, we can show that $u(y(t;x,v))=u(x)$, and thus our claim holds. This implies that,
\[u(x)=u(y(t(x,v);x,v))=u(u^g_v(x)v).\]
Clearly, this equality holds even when $x$ is proportional to $v$.

Define $c(a)=u(av)$. Because $u$ is increasing, $c$ is also increasing. Therefore,
\[c(u^g_v(x))=u(u^g_v(x)v)=u(x),\]
as desired. This completes the proof of Step 6. $\blacksquare$
\end{proof}

Steps 1-6 show that the claims of Lemma \ref{LEM5} are correct. This completes the proof. $\blacksquare$
\end{proof}
\setcounter{step}{0}

\subsection{Proof of Theorem \ref{THM1}}
Let $g:\Omega\to \mathbb{R}^n_+\setminus \{0\}$ be locally Lipschitz. Suppose that $f^g=f^{\succsim}$ for some complete binary relation $\succsim$ on $\Omega$ that satisfies the LNST condition. Suppose that $g(x)\cdot y\le g(x)\cdot x$ and $g(y)\cdot x<g(y)\cdot y$. Because $x\in f^g(g(x),g(x)\cdot x)$, $y\not\succ x$, and because $\succsim$ is complete, $x\succsim y$. By the LNST condition, there exists $z\in \Omega$ such that $g(y)\cdot z<g(y)\cdot y$ and $z\succ y$, which contradicts that $y\in f^g(g(y),g(y)\cdot y)$. Therefore, $g$ satisfies the weak weak axiom.

Conversely, suppose that $g$ satisfies the weak weak axiom. By Lemma \ref{LEM3}, $f^g=f^{\succsim^g}$. By Lemma \ref{LEM2}, $\succsim^g$ is complete. Suppose that $x\succsim^gy$ and $U$ is a neighborhood of $x$. Then, there exists $a>1$ such that $ax\in U$. Because $\succsim^g$ is monotone, $ax\succ^gx$, and by p-transitivity, $ax\succ^gy$. Therefore, $\succsim^g$ satisfies the LNST condition. This completes the proof of I).

Next, suppose that $g$ satisfies Ville's axiom. By Lemma \ref{LEM5}, $u^g_v$ satisfies property (F). Conversely, suppose that there exists a function $u:\Omega\to \mathbb{R}$ that satisfies property (F). By Lemma \ref{LEM4}, $g$ satisfies Ville's axiom. By definition, $u^g_v(av)=a$. If $g$ is $C^k$, by Lemma \ref{LEM5}, $u^g_v$ is also $C^k$, and $\nabla u^g_v(x)=\lambda(x)g(x)$ for all $x\in \Omega$, where $\lambda:\Omega\to \mathbb{R}_{++}$. This completes the proof of II).

Thirdly, suppose that $g$ satisfies both the weak weak axiom and Ville's axiom. By Lemmas \ref{LEM3} and \ref{LEM5}, $u^g_v$ is a quasi-concave function that satisfies property (F). Conversely, suppose that there exists a quasi-concave function $u:\Omega\to \mathbb{R}$ that satisfies property (F). By Lemma \ref{LEM4}, $g$ satisfies Ville's axiom, and by Lemma \ref{LEM5}, $u$ is a monotone transform of $u^g_v$. This implies that $u^g_v$ is quasi-concave, and by Lemma \ref{LEM3}, $g$ satisfies the weak weak axiom. 

Fourth, suppose that $g$ satisfies both the weak weak axiom and Ville's axiom. Then, $\succsim^g$ is a weak order such that $f^g=f^{\succsim^g}$. Conversely, suppose that there exists a weak order $\succsim$ such that $f^g=f^{\succsim}$. We show that $\succsim$ is locally non-satiated. Let $x\in \Omega$ and $U$ is an open neighborhood of $x$. Then, there exists $y\in U$ such that $y\gg x$. Because $y\in f^g(g(y),g(y)\cdot y)$ and $g(y)\cdot x<g(y)\cdot y$, we have that $y\succ x$, as desired. Therefore, $\succsim$ satisfies the LNST condition, and thus $g$ satisfies the weak weak axiom.

Note that, if $p\cdot x<m$, then $x\notin f^g(p,m)$, and thus if $y\in f^g(p,m)$, then $y\succ x$. Choose any $x,v\in \Omega$ such that $x$ is not proportional to $v$. As in the proof of Step 5 of Lemma \ref{LEM3}, choose $k\in \mathbb{N}$, and define
\[h^k=\frac{t(x,v)}k,\ t_i^k=ih^k,\]
\[x_0^k=x,\ x_{i+1}^k=x_i^k+h^kCRPg(x_i^k),\]
and for any $t\in [t_i^k,t_{i+1}^k]$,
\[x^k(t)=\frac{t-t_i^k}{t_{i+1}^k-t_i^k}x_{i+1}^k+\frac{t_{i+1}^k-t}{t_{i+1}^k-t_i^k}x_i^k.\]
As we have mentioned, $x^k(t)$ uniformly converges to $y(t;x,v)$ on $[0,t(x,v)]$ as $k\to \infty$. Because $g(x_i^k)\cdot (x_{i+1}^k-x_i^k)=0$, we have that $x_i^k\succsim x_{i+1}^k$. Moreover, if $t\in [t_i,t_{i+1}]$, then $g(x_i^k)\cdot (x^k(t)-x_i^k)=0$, and thus $x_i^k\succsim x^k(t)$. By transitivity of $\succsim$, we have that $x\succsim x^k(t)$. If $y(t;x,v)\gg z$ for some $t\in [0,t(x,v)]$, then there exists $k$ such that $x^k(t)\gg z$, which implies that $x\succ z$. In conclusion, we obtain the following: if $y(t;x,v)\gg z$ for some $t\in [0,t(x,v)]$, then $x\succ z$.

Suppose that $\succsim^g$ is not transitive. As in the proof of Step 1 of Lemma \ref{LEM5}, there exists $x,y,z\in \Omega$ such that $x\sim^gy$, $y\sim^gz,$ and $z\succ^gx$. Because $z\succ^gx$, $az\succ^gx$ for some $a\in ]0,1[$. Because $y\sim^gz$, $y\succ^gaz$, and thus $by\succ^gaz$ for some $b\in ]0,1[$. Because $y=y(t(x,y);x,y)$, we have that $x\succ by$. Because $y(t(by,az);by,az)\gg az$, $by\succ az$. Because $y(t(az,x);az,x)\gg x$, $az\succ x$, which contradicts the transitivity of $\succsim$. Therefore, $\succsim^g$ is transitive. By Lemma \ref{LEM5}, $g$ satisfies Ville's axiom. This completes the proof of III).

Fifth, suppose that $g$ satisfies the weak axiom. Suppose that $x,y\in f^g(p,m)$ and $x\neq y$. Then, $g(x)=\lambda p$, $g(y)=\mu p$ for some $\lambda,\mu>0$, and $p\cdot x=p\cdot y=m$. This implies that
\[g(x)\cdot y=\lambda p\cdot y=\lambda m=\lambda p\cdot x=g(x)\cdot x,\]
\[g(y)\cdot x=\mu p\cdot x=\mu m=\mu p\cdot y=g(y)\cdot y,\]
which contradicts the weak axiom. Therefore, $f^g(p,m)$ is either the empty set or a singleton. Suppose that $x=f^g(p,m),\ y=f^g(q,w),\ x\neq y$, and $p\cdot y\le m$. Because $g$ satisfies the weak axiom, $f^g=f^{\succsim^g}$, and thus $x\succ^gy$. This implies that $q\cdot x>w$, and thus $f^g$ satisfies the weak axiom of revealed preference.

Conversely, suppose that $f^g$ is single-valued and satisfies the weak axiom of revealed preference. Suppose that $g(x)\cdot y\le g(x)\cdot x$ and $x\neq y$. Because $x=f^g(g(x),g(x)\cdot x)$ and $y=f^g(g(y),g(y)\cdot y)$, by the weak axiom of revealed preference, $g(y)\cdot x>g(y)\cdot y$. Therefore, $g$ satisfies the weak axiom.

Last, suppose that $g$ satisfies Ville's axiom. If $g$ satisfies the weak axiom, then $f^g=f^{u^g_v}$. Suppose that $x,y\in \Omega$, $u^g_v(y)\ge u^g_v(x)$, $y\neq x$, and $0<t<1$. Define $z=(1-t)x+ty$. Then, either $g(z)\cdot x\le g(z)\cdot z$ or $g(z)\cdot y\le g(z)\cdot z$. If the former holds, then $u^g_v(z)>u^g_v(x)$. If the latter holds, then $u^g_v(z)>u^g_v(y)\ge u^g_v(x)$. Hence, in both cases, $u^g_v(z)>u^g_v(x)$, and thus $u^g_v$ is strictly quasi-concave. Conversely, suppose that $u^g_v$ is strictly quasi-concave. By III), $f^g=f^{u^g_v}$, and thus $f^g$ is single-valued and satisfies the weak axiom of revealed preference. This implies that $g$ satisfies the weak axiom, which completes the proof. $\blacksquare$

\subsection{Proof of Theorem \ref{THM2}}
Suppose that $g$ satisfies the weak weak axiom and Ville's axiom. Then, $f^g=f^{u^g_v}$. Let $(p,m)\in (\mathbb{R}^n_+\setminus \{0\})\times \mathbb{R}_{++}$, and suppose that $f^g(p,m)=x^*$. Choose $x_0\in \Delta(p,m)$ and let $x(t)$ be a solution to (\ref{IMP}). By i) of Lemma \ref{LEM4}, we have that $u^g_v(x(t))$ is increasing, and thus, $L(x)=u^g_v(x^*)-u^g_v(x(t))$ is a Lyapunov function. Therefore, any improvement process is both locally and compact stable.

Next, suppose that $g$ satisfies the Ville's axiom but violates the weak weak axiom. By Lemma \ref{LEM3}, $\succsim^g$ is not weakly convex, and thus, $u^g_v$ is not quasi-concave. Therefore, there exist $x,y\in \Omega$ and $t\in ]0,1[$ such that $y\succsim^gx$ and $x\succ^g(1-t)x+ty$. Choose $t^*=\max \arg \min\{u^g_v((1-t)x+ty)|t\in [0,1]\}$, and let $z=(1-t^*)x+t^*y$. Define $p=g(z)$ and $m=g(z)\cdot z$. If $g(z)\cdot (y-x)\neq 0$, then $t\mapsto u^g_v((1-t)x+ty)$ is either increasing or decreasing around $t^*$, which contradicts the definition of $t^*$. Therefore, we have that $g(z)\cdot (y-x)=0$, and thus, $p\cdot y=p\cdot x=m$. Hence, for any open neighborhood $U$ of $z$, there exists $x_0\in \Delta(p,m)$ such that $u^g_v(x_0)>u^g_v(z)$. Let $x(t)$ be a solution to (\ref{IMP}). By i) of Lemma \ref{LEM4}, $u^g_v(x(t))$ is increasing, and thus this process does not satisfies local stability.

Third, suppose that $g$ violates the Ville's axiom. In the Step 1 of the proof of Lemma \ref{LEM5}, we constructed a Ville curve $x:[0,T]\to \Omega$ such that $x(t)=x(s)$ if and only if either $t=s$ or $t,s\in \{0,T\}$, and there exists a finite increasing sequence $t_0',t_1',t_2',t_3',t_4'$ such that $t_0'=0$, $t_4'=T$, the restriction of $x(t)$ into $[t_i',t_{i+1}']$ is $C^1$, $g(x(0))\cdot D_+x(0)>0$, $g(x(T))\cdot D_-x(T)>0$, and for each $i\in \{1,2,3\}$, $g(x(t_i'))\cdot D_-x(t_i')>0$ and $g(x(t_i'))\cdot D_+x(t_i')>0$.\footnote{Here, $D_-x(t)$ (resp. $D_+x(t)$) denotes the left-derivative (resp. the right-derivative) of $x$ at $t$.} Choose $(p,m)\in \mathbb{R}^n_{++}\times \mathbb{R}_{++}$ such that $p\cdot x(t)<m$ for all $t\in [0,T]$. Choose $\varepsilon>0$ such that $t_i'-t_{i-1}'\ge 6\varepsilon$ for all $i\in \{1,2,3,4\}$. Define
\[y(t)=\begin{cases}
x(t+3\varepsilon) & \mbox{if }0\le t\le T-3\varepsilon,\\
x(t+3\varepsilon-T) & \mbox{if }T-3\varepsilon\le t\le T,
\end{cases}\]
and $t_0=0$, $t_i=t_i'-3\varepsilon$ for $i\in \{1,2,3\}$, $t_4=T-3\varepsilon$ and $t_5=T$. Then, $y(t)$ is a Ville curve such that $D_+y(0)=D_-y(T)$, $g(y(0))\cdot D_+y(0)>0$, $g(y(T))\cdot D_-y(T)>0$, and for $i\in \{1,2,3,4\}$, $g(y(t_i))\cdot D_-y(t_i)>0$ and $g(y(t_i))\cdot D_+y(t_i)>0$. Now, choose a continuous and nonnegative function $\phi:\mathbb{R}\to \mathbb{R}_+$ such that $\phi(t)=0$ if $t\notin [\varepsilon,2\varepsilon]$, and $\int_{\varepsilon}^{2\varepsilon}\phi(t)dt=1$. Let $M=\max\{\phi(t)|\varepsilon\le t\le 2\varepsilon\}$. Because the set $\{y(t)|t_i\le t\le t_i+2\varepsilon\}$ is compact, there exists $\varepsilon'>0$ such that if $\|v\|,\|w\|\le \varepsilon'$, then $g(y(t)+v)\cdot (D_+y(t)+w)>0$ for all $t\in [t_i,t_i+2\varepsilon]$. Now, choose $\delta_i>0$ so small, and for $t\in [t_i-\delta_i,t_i+\delta_i]$, define
\[\psi_i(t)=\frac{t-(t_i-\delta_i)}{2\delta_i}\dot{y}(t_i+\delta_i)+\frac{t_i+\delta_i-t}{2\delta_i}\dot{y}(t_i-\delta_i),\]
\[z_i(t)=y(t_i-\delta_i)+\int_{t_i-\delta_i}^t\psi_i(s)ds.\]
If $\delta_i$ is sufficiently small, then $\delta_i<\varepsilon$, $g(z_i(t))\cdot \psi_i(t)>0$ for all $t\in [t_i-\delta_i,t_i+\delta_i]$, and for $v=z_i(t_i+\delta_i)-y(t_i+\delta_i)$, $\|v\|\le \frac{\varepsilon'}{M+1}$. Define $z_i(t)=y(t)+v$ if $t\in [t_i+\delta_i,t_i+\varepsilon]$, and
\[z_i(t)=y(t)+v-\int_{t_i+\varepsilon}^t\phi(s-t_i)vds,\]
for all $t\in [t_i+\varepsilon,t_i+2\varepsilon]$. Then, $z_i(t_i+2\varepsilon)=y(t_i+2\varepsilon)$, $D_-z_i(t_i+2\varepsilon)=\dot{y}(t_i+2\varepsilon)$, and for every $t\in [t_i-\delta_i,t_i+2\varepsilon]$, $g(z_i(t))\cdot \dot{z}_i(t)>0$. Define
\[z(t)=\begin{cases}
z_i(t) & \mbox{if }t_i-\delta_i\le t\le t_i+2\varepsilon\mbox{ for some }i\in \{1,2,3,4\},\\
y(t) & \mbox{otherwise}.
\end{cases}\]
Then, $z(t)$ is a $C^1$ Ville curve such that $D_+z(0)=D_-z(T)$.

Let $X=\{z(t)|0\le t\le T\}$. Then, $X$ is a $1$-dimensional compact $C^1$ manifold. Define $d(x)=\inf\{\|x-y\||y\in X\}$. By Fact \ref{FACT5}, there exist $\varepsilon_1>0$ and a $C^1$ function $\pi:U\to X$ such that $\pi(x)=x$ for all $x\in X$, where $U=\{x\in \Omega|d(x)\le 2\varepsilon_1\}\subset \Omega$. Define $h_1(x)=\dot{z}(t)$ if $\pi(x)=z(t)$. If $\varepsilon_2>0$ is sufficiently small, then $g(x)\cdot h_1(x)>0$ for all $x\in U$ such that $d(x)\le 2\varepsilon_2$. Also, define
\[h_2(x)=g(x)-\frac{p\cdot x}{m}\frac{p\cdot g(x)}{\|p\|^2}p.\]
Then, $g(x)\cdot h_2(x)>0$ whenever $x\in \Delta(p,m)$ and $x\neq f^g(p,m)$. Let $\psi:\mathbb{R}\to [0,1]$ be a continuous function such that $\psi(t)=0$ if $t\le 0$ and $\psi(t)=1$ if $t\ge \varepsilon_2$, and define
\[h(x)=(1-\psi(d(x)))h_1(x)+\psi(d(x))h_2(x).\]
Then, $h(x)$ is a continuous function such that $g(x)\cdot h(x)>0$ whenever $x\in \Delta(p,m)\setminus \{f^g(p,m)\}$, and $z(t)$ is a solution to (\ref{IMP}). This implies that compact stability is violated. This completes the proof. $\blacksquare$

\subsection{Proof of Theorem \ref{THM3}}
Suppose that $g$ satisfies the weak weak axiom. Choose any $x\in \Omega$, and let $v\neq 0$ and $v\cdot g(x)=0$. Define $x(t)=x+tv$. Then,
\begin{align*}
&~\liminf_{t\downarrow 0}\frac{1}{t}v\cdot (g(x(t))-g(x))\\
=&~\liminf_{t\downarrow 0}\frac{1}{t^2}(x(t)-x)\cdot (g(x(t))-g(x))\le 0,
\end{align*}
which implies that condition (A1) holds at $x$.

Conversely, suppose that (A1) holds at every $x\in \Omega$. By Lemma \ref{LEM3}, if $\succsim^g$ is weakly convex, then $g$ satisfies the weak weak axiom. Hence, it suffices to show that $\succsim^g$ is weakly convex. Suppose not. Then, there exist $x,y\in \Omega$ and $t\in [0,1]$ such that $y\succsim^gx$ and $x\succ^g(1-t)x+ty$. It is obvious that $x$ is not proportional to $y$. Let $V^*=\mbox{span}\{x,y\}$ and $P=P(x,y)$ is orthogonal projection from $\mathbb{R}^n$ onto $V^*$, as defined in Lemma \ref{LEM1}. By Lemma \ref{LEM2}, for $z,v\in V^*\cap \Omega$, $z\succsim^gv$ if and only if $u^g(z,y)\ge u^g(v,y)$. Let $x(t)=(1-t)x+ty$, $t^*=\max\arg\min\{u^g(x(t),y)|t\in [0,1]\}$, and $z^*=x(t^*)$. Then, $0<t^*<1$, $u^g(x(t),y)\ge u^g(z^*,y)$ if $t\in [0,1]$, and $u^g(x(t),y)>u^g(z^*,y)$ if $t^*<t\le 1$. Define $p^*=Pg(z^*)$, $\Phi(s,a)=y(s;z^*,y)+ap^*$, and $\Psi(b,c)=z^*+b(y-x)+cp^*$. Then, $\Psi$ is a $C^{\infty}$ diffeomorphism between $\mathbb{R}^2$ and $V$. Moreover, $D(\Psi^{-1}\circ \Phi)(0,0)$ is regular. Therefore, by the inverse function theorem, there exist open neighborhoods $U,V$ of $(0,0)$ such that $\Psi^{-1}\circ \Phi:U\to V$ is a $C^1$ diffeomorphism. Hence, $\Phi=\Psi\circ (\Psi^{-1}\circ \Phi)$ is a $C^1$ bijection on $U$ such that $\Phi^{-1}=\Psi^{-1}\circ (\Psi^{-1}\circ \Phi)^{-1}$ is also $C^1$ on $W\equiv \Phi(U)$. Because $g$ is locally Lipschitz, we can assume without loss of generality that there exists $L>0$ such that if $z,v\in W$, then $\|g(z)-g(v)\|\le L\|z-v\|$. Moreover, because $p^*\cdot g(z^*)=\|p^*\|^2>0$, we can assume without loss of generality that $p^*\cdot g(z)>0$ for all $z\in W$. Let $(w(t),a(t))=\Phi^{-1}(x(t))$. Because $W,U$ are open, there exists $\delta>0$ such that if $|t-t^*|<\delta$, then $(w(t),0), (w(t),a(t))\in U$. Let
\[y(t)=\Phi(w(t),0)=x(t)-a(t)p^*,\ z(t)=y'(t)=(y-x)-a'(t)p^*.\]
By definition, $x(t)=y(w(t);z^*,y)+a(t)p^*$, and thus $y(t)=y(w(t);z^*,y)$. Because $y(s;z^*,y)\sim^gz^*$ for all $s$, we have that $u^g(y(t),y)=u^g(z^*,y)$, and thus $a(t)\ge 0$ for all $t\in [0,1]$ such that $|t-t^*|<\delta$. Because $a(t^*)=0$ by definition, $a(t^*)$ attains the local minimum, and thus $a'(t^*)=0$. Moreover, by definition of $y(s;z^*,y)$, we have that $\dot{y}(w(t);z^*,y)\cdot g(y(t))=0$, and thus $z(t)\cdot g(y(t))=0$. Now, choose $t\in [0,1]$ such that $0<t-t^*<\delta$, and define $\hat{z}(t,t')=y(t)+z(t)(t'-t)$. Then,
\begin{align*}
&~\frac{|(z(t')-z(t))\cdot (g(y(t'))-g(y(t)))|}{t'-t}\\
\le&~|a'(t')-a'(t)|\|p^*\|L\frac{\|y(t')-y(t)\|}{t'-t}\to 0\mbox{ as }t'\to t,
\end{align*}
and
\begin{align*}
&~\frac{z(t)\cdot (g(y(t'))-g(\hat{z}(t,t')))}{t'-t}\\
=&~L\|z(t)\|\|p^*\|\frac{|a(t')-a(t)-a'(t)(t'-t)|}{t'-t}\to 0\mbox{ as }t'\to t.
\end{align*}
Therefore,
\begin{align*}
&~(p^*\cdot g(y(t)))\limsup_{t'\downarrow t}\frac{a'(t)-a'(t')}{t'-t}\\
=&~\limsup_{t'\downarrow t}\frac{(z(t')-z(t))\cdot g(y(t))}{t'-t}=\limsup_{t'\downarrow t}\frac{z(t')\cdot g(y(t))}{t'-t}\\
=&~-\liminf_{t'\downarrow t}\frac{z(t')\cdot (g(y(t'))-g(y(t)))}{t'-t}\\
=&~-\liminf_{t'\downarrow t}\frac{1}{t'-t}[(z(t')-z(t))\cdot (g(y(t'))-g(y(t)))\\
&~+z(t)\cdot (g(\hat{z}(t,t'))-g(y(t)))+z(t)\cdot (g(y(t'))-g(\hat{z}(t,t')))]\ge 0,
\end{align*}
by condition (A1). Therefore, we have that
\[\limsup_{t'\downarrow t}\frac{a'(t)-a'(t')}{t'-t}\ge 0.\]
By the symmetrical arguments, we can show that
\[\limsup_{t'\uparrow t}\frac{a'(t)-a'(t')}{t'-t}\ge 0.\]
Now, choose $t_1\in [0,1]$ such that $0<t_1-t^*<\delta$, and define $h(t)=a'(t)(t_1-t^*)-a'(t_1)(t-t^*)$. Then, $h(t^*)=h(t_1)=0$, and thus, there exists $t^+\in ]t^*,t_1[$ such that $h(t^+)$ attains either the maximum or the minimum on $[t^*,t_1]$. If $h(t^+)$ attains the minimum, then
\[0\ge \limsup_{t\downarrow t^+}\frac{h(t^+)-h(t)}{t-t^+}=(t_1-t^*)\limsup_{t\downarrow t^+}\frac{a'(t^+)-a'(t)}{t-t^+}+a'(t_1)\ge a'(t_1),\]
which implies that $a'(t_1)\le 0$. If $h(t^+)$ attains the maximum, then by the symmetrical argument, we can also show that $a'(t_1)\le 0$. Therefore, $a'(t)\le 0$ if $1\ge t>t^*$ and $|t-t^*|<\delta$. This implies that if $t\in [0,1]$ and $|t-t^*|<\delta$, then $a(t)\le a(t^*)=0$, and thus, $a(t)\equiv 0$. Hence, $u^g(x(t),y)=u^g(z^*,y)$ for all $t>t^*$ such that $t-t^*$ is sufficiently small, which is a contradiction. Thus, $\succsim^g$ is weakly convex, as desired. In conclusion, we obtain that $g$ satisfies the weak weak axiom if and only if condition (A1) holds everywhere, and i) holds.

Next, suppose that condition (A2) holds everywhere. Then, condition (A1) also holds everywhere, and thus $g$ satisfies the weak weak axiom. Suppose that $g$ violates the weak axiom. Then, there exists $x,y\in\Omega$ such that $g(x)\cdot y\le g(x)\cdot x$ and $g(y)\cdot x\le g(y)\cdot y$. If either inequality is strong, then $g$ violates the weak weak axiom, which is a contradiction. Therefore, we have that $g(x)\cdot y=g(x)\cdot x$ and $g(y)\cdot x=g(y)\cdot y$. Let $v=y-x$ and $x(t)=x+tv$. Choose any $t\in [0,1]$. Because $g(x)\cdot x(t)=g(x)\cdot x$, by Lemma \ref{LEM3}, $x\succsim^gx(t)$. Because $g(y)\cdot x=g(y)\cdot y$, again by Lemma \ref{LEM3}, $y\succsim^gx$. Because $\succsim^g$ is weakly convex, $x(t)\succsim^gx$. Therefore, $x\sim^gx(t)$. By the same argument, we have that $y\sim^gx(t)$. If $g(x(t))\cdot x<g(x(t))\cdot x(t)$, then $x(t)\succsim^gax$ for some $a>1$, and by monotonicity and p-transitivity, $x(t)\succ^gx$, which is a contradiction. If $g(x(t))\cdot x>g(x(t))\cdot x(t)$, then $g(x(t))\cdot y<g(x(t))\cdot x(t)$, which leads a contradiction by the same logic as above. Therefore, $g(x(t))\cdot v=0$ for all $t\in [0,1]$. This implies that
\[\liminf_{t\downarrow 0}\frac{1}{t}v\cdot (g(x(t))-g(x))=0,\]
which contradicts condition (A2). Hence, ii) holds.

Now, suppose that $g$ satisfies Ville's axiom. By Lemma \ref{LEM5}, for any $v\in \Omega$, $u^g_v$ satisfies property (F). Suppose that $u^g_v$ is differentiable at $x\in \Omega$. Let $v(s)=(u^g(x,v)+s)v$, and $a(s)=u^g(v(s),x)$. Then, $a(0)=1$ and $u^g_v(x)+s=u^g_v(v(s))=u^g_v(a(s)x)$. Note that, $a(s)$ is locally Lipschitz, and thus, there exists $\varepsilon>0$ and $L>0$ such that if $s_1,s_2\in [0,\varepsilon]$, then $|a(s_1)-a(s_2)|\le L|s_1-s_2|$. Because
\[1=\limsup_{s\to 0}\frac{u^g_v(a(s)x)-u^g_v(x)}{s}\le LDu^g_v(x)x,\]
we have that $Du^g_v(x)\neq 0$. This implies that there exists $\lambda(x)>0$ such that $Du^g_v(x)=\lambda(x)g(x)$, and thus $(u^g_v,\lambda)$ is a normal solution to (\ref{TDE}) on $\Omega$. By Fact \ref{FACT6}, $g$ satisfies condition (B) almost everywhere.

Conversely, suppose that $g$ satisfies condition (B). First, choose any normal solution $(u,\lambda)$ to (\ref{TDE}) defined on some open and convex set $U_0\subset \Omega$. We show that $u$ is increasing. Suppose that $x,y\in U_0$ and $y\gg x$. Let $w=y-x$. Because $(u,\lambda)$ is a normal solution, by Fubini's theorem, there exists a sequence $(x^k)$ such that $x^k\to x$ as $k\to \infty$ and $\nabla u(x^k+tw)=\lambda(x^k+tw)g(x^k+tw)$ for almost all $t\in [0,1]$. Therefore, $u(x^k+w)>u(x^k)$, and thus, $u(y)\ge u(x)$. This implies that $u$ is nondecreasing. Suppose that $u(y)=u(x)$. Then, $u(x+tw)$ is constant on $[0,1]$, and thus, if $X=u^{-1}(u(x))$, then $w\in T_x(X)$. However, $w\cdot g(x)>0$, which is a contradiction. Therefore, $u(y)>u(x)$, and $u$ is increasing. In particular, $u$ satisfies property (F).

Second, we show that if $t^*>0$ and $y([0,t^*];x,w)\subset U_0$, then $u(x)=u(y(t;x,w))$ for all $t\in [0,t^*]$. Consider the following differential equation:
\[\dot{z}(t)=(g(z(t))\cdot x)w-(g(z(t))\cdot w)x+hw,\ z(0)=x.\]
Let $z(t;h)$ be the solution function of this equation. Note that, $z(t;0)=y(t;x,w)$, and thus there exists $\varepsilon>0$ such that if $|h|<\varepsilon$, then $z(\cdot;h)$ is defined on $[0,t^*]$ and $z(t;h)\in U_0$ for all $t\in [0,t^*]$. By i) of Lemma \ref{LEM4}, we have that if $h>0$, then $u(z(t;h))$ is increasing in $t$, and thus $u(z(t;h))\ge u(x)$ for all $t\in [0,t^*]$. Letting $h\downarrow 0$, we have that $u(z(t;0))\ge u(x)$. By the symmetrical argument, we can show that if $h<0$, then $u(z(t;h))$ is decreasing in $t$, and letting $h\uparrow 0$, $u(z(t;0))\le u(x)$. Therefore, our claim holds. In particular, if we can choose $t^*=t(x,w)$, then $u(x)=u(u^g(x,w)w)$.

Third, choose any $x\in \Omega$. If $x$ is not proportional to $v$, then by Lemma \ref{LEM2}, $u^g_v$ is locally Lipschitz around $x$. Suppose that $x=av$ for some $a>0$. Because $g$ satisfies condition (B), there exists a normal solution $(u,\lambda)$ to (\ref{TDE}) that is defined on an open and convex neighborhood $U_1$ of $x$. Choose an open and convex neighborhood $U_2\subset U_1$ of $x$ such that if $y,z\in U_2$, then $y_1(y,z),y_2(y,z)\in U_1$. By v) of Lemma \ref{LEM1}, $y([0,t(y,z)];y,z)\subset \Delta(y,z)\subset U_1$. Again, choose an open and convex neighborhood $U_3\subset U_2$ of $x$ such that if $y,z\in U_3$, then $y_1(y,z),y_2(y,z)\in U_2$. Suppose that $y,z,w\in U_3$. Then, we can easily confirm that
\[u(u^g(y,w)w)=u(y)=u(u^g(y,z)z)=u(u^g(u^g(y,z)z,w)w),\]
and because $u$ is increasing, we obtain that
\[u^g(y,w)=u^g(u^g(y,z)z,w).\]
Choose any $y\in U_3$ that is not proportional to $v$. Note that, $u^g(z,x)=\frac{\|v\|}{\|x\|}u^g_v(z)$. Therefore, for any $z\in U_3$,
\[u^g_v(z)=u^g(u^g(z,y)y,v),\]
where the right-hand side is locally Lipschitz in $z$ around $x$. Therefore, $u^g_v$ is locally Lipschitz on $\Omega$. Moreover, for $y,z\in U_3$,
\begin{align*}
u(y)\ge u(z)\Leftrightarrow&~u(u^g(y,x)x)\ge u(u^g(z,x)x)\\
\Leftrightarrow&~u^g(y,x)\ge u^g(z,x)\Leftrightarrow u^g_v(y)\ge u^g_v(z).
\end{align*}
which implies that $u^{-1}(u(y))=(u^g_v)^{-1}(u^g_v(y))$ for any $y\in U_3$, and thus, for any $y\in U_3$ and $Y(y)\equiv (u^g_v)^{-1}(u^g_v(y))\cap U_3$, $Y(y)$ is an $n-1$ dimensional $C^1$ manifold, and $g(y)$ is orthogonal to $T_y(Y(y))$.

Let $X$ be the set of all $x\in \Omega$ such that there exists an open and convex neighborhood $U_0$ of $x$ such that for any $y\in U_0$, $Y(y)\equiv (u^g_v)^{-1}(u^g_v(y))\cap U_0$ is an $n-1$ dimensional $C^1$ manifold, and $g(y)$ is orthogonal to $T_y(Y(y))$. Suppose that $x\in X$, and $U_0$ is a corresponding open and convex neighborhood of $x$. We show that if $u^g_v$ is differentiable at $y\in U_0$, then there exists $\lambda(y)>0$ such that $\nabla u^g_v(y)=\lambda(y)g(y)$. First, both $\nabla u^g_v(y), g(y)$ are orthogonal to $T_y(Y(y))$, and thus there exists $\lambda(y)\in \mathbb{R}$ such that $\nabla u^g_v(y)=\lambda(y)g(y)$. Let $a>1$. If $y$ is proportional to $v$, then $u^g_v(ay)=au^g_v(y)$, which implies that $\nabla u^g_v(y)\cdot y=1$ and thus $\lambda(y)>0$. Suppose that $y$ is not proportional to $v$. By Lemma \ref{LEM2},
\[a=u^g(ay,y)=u^g(u^g_v(ay)v,y),\]
and $u^g$ is locally Lipschitz around $(u^g_v(y)v,y)$. Therefore, there exists $\varepsilon>0$ and $L>0$ such that if $a-1<\varepsilon$, then\footnote{Note that, by Lemma \ref{LEM2}, both side of this inequality are positive.}
\[a-1=u^g(u^g_v(ay)v,y)-u^g(u^g_v(y)v,y)\le L\|v\|(u^g_v(ay)-u^g_v(y)).\]
Dividing both sides by $a-1$ and letting $a\downarrow 1$, we have that
\[1\le L\|v\|Du^g_v(y)y,\]
and thus $\nabla u^g_v(y)y>0$, which implies that $\lambda(y)>0$. Because of Rademacher's theorem, $u^g_v$ is differentiable almost everywhere, and thus $(u^g_v,\lambda)$ is a normal solution to (\ref{TDE}) on $U_0$. In particular, $u^g_v$ is increasing on $U_0$, and if $y([0,t(y,w)];y,w)\subset U_0$, then $u^g_v(u^g(y,w)w)=u^g_v(y)$.

Now, choose any $x\in \Omega$. Define $A$ as the trajectory of $y(\cdot;x,v)$ and $B=A\cap X$. We have already shown that $v^*=u^g(x,v)v\in X$. If $x$ is proportional to $v$, then $v^*=x$, and thus $x\in X$. Hence, hereafter, we assume that $x$ is not proportional to $v$. By definition, $B$ is open in $A$. We show that $B$ is closed in $A$. Suppose that $(z^k)$ is a sequence in $B$ that converges to $z\in A$ as $k\to \infty$. If $z=v^*$, then $z\in B$, and thus we assume that $z\neq v^*$. Because $g$ satisfies condition (B), there exists a normal solution $(u,\lambda)$ defined on some open and convex neighborhood $U_1$ of $z$. Note that, $A$ is the trajectory of $y(\cdot;z,v)$, and thus there exists $t^k\in \mathbb{R}$ such that $z^k=y(t^k;z,v)$. Choose an open and convex neighborhood $U_2\subset U_1$ of $z$ such that if $y,w\in U_2$, then $y_1(y,w),y_2(y,w)\in U_2$, and thus $y([0,t(y,w)];y,w)\subset U_1$. Then, $u(y)=u(u^g(y,w)w)$ for any $y,w\in U_2$. Because $(z^k)$ converges to $z$, there exists $k$ such that $z^k\in U_2$. Because $z^k\in B$, there exist an open and convex neighborhood $U_3\subset U_2$ of $z^k$ and $\mu:U_3\to \mathbb{R}_{++}$ such that $(u^g_v,\mu)$ is a normal solution to (\ref{TDE}) on $U_3$. Choose an open and convex neighborhood $U_4\subset U_3$ of $z^k$ such that if $y,w\in U_4$, then $y_1(y,w),y_2(y,w)\in U_3$, and thus $y([0,t(y,w)];y,w)\subset U_3$. Then, for any $y,w\in U_4$, $u(y)=u(u^g(y,w)w)$ and $u^g_v(y)=u^g_v(u^g(y,w)w)$. Because $y(t;y,v)$ is continuous in $y$, there exists an open and convex neighborhood $U_5\subset U_2$ of $z$ such that if $y\in U_5$, then $y(t^k;y,v)\in U_4$. Choose an open and convex neighborhood $U_6\subset U_5$ of $z$ such that if $y,w\in U_6$, then $y_1(y,w),y_2(y,w)\in U_5$, and thus $y([0,t(y,w)];y,w)\subset U_5$. Suppose that $y,w\in U_6$ and let $w^*=u^g(y,w)w$. Then, $y,w^*\in U_5$, and $u(y)=u(w^*)$. Let $y'=y(t^k;y,v)$ and $w'=y(t^k;w^*,v)$. Because of Lemma \ref{LEM2}, $u^g_v(y')=u^g_v(y)$ and $u^g_v(w')=u^g_v(w^*)$. Because $y\in U_5$, we have that $y'\in U_4$, and thus $y,y'\in U_2$. Therefore, $u(y)=u(y')$. By the same reason, $u(w^*)=u(w')$, and thus $u(w')=u(y')$. Because $y',w'\in U_4$, $y([0,t(y',w')];y',w')\subset U_3$, and thus $u^g_v(y')=u^g_v(u^g(y',w')w')$ and $u(y')=u(u^g(y',w')w')$. Because $u$ is increasing, $u^g(y',w')=1$, and thus $u^g_v(y')=u^g_v(w')$. In conclusion, we obtain that $u^g_v(y)=u^g_v(u^g(y,w)w)$ and $u(y)=u(u^g(y,w)w)$ for any $y,w\in U_6$. Because both $u(aw)$ and $u^g_v(aw)$ are increasing in $a$, we have that $(u^g_v)^{-1}(u^g_v(y))\cap U_6=u^{-1}(u(y))\cap U_6$. This implies that $z\in B$, as desired.

Hence, $B$ is a nonempty, open, and closed subset of $A$. Because $A$ is path-connected, we have that $B=A$. This implies that $x\in X$. Because $x$ is arbitrary, we have that $X=\Omega$. We have already shown that there exists $\lambda:\Omega\to \mathbb{R}_{++}$ such that if $u^g_v$ is differentiable at $x$, then $\nabla u^g_v(x)=\lambda(x)g(x)$. Therefore, $u^g_v$ is a normal solution to (\ref{TDE}) defined on $\Omega$, which implies that $u^g_v$ is increasing. To summarize the above argument, we have shown that $u^g_v$ satisfies property (F). By Lemma \ref{LEM5}, we conclude that $g$ satisfies Ville's axiom. Hence, iii) holds. This completes the proof. $\blacksquare$

\subsection{Proof of Proposition \ref{PROP1}}
First, we prove the following lemma.

\begin{lem}\label{LEM6}
Suppose that $g:\Omega\to \mathbb{R}^n_+\setminus \{0\}$ and $c:\Omega\to \mathbb{R}_{++}$ are locally Lipschitz functions, and let $h(x)=c(x)g(x)$. If both $g$ and $c$ are differentiable at $x$, then $g$ satisfies condition (A1) at $x$ if and only if $h$ satisfies condition (A1), $g$ satisfies condition (A2) at $x$ if and only if $h$ satisfies condition (A2), and $g$ satisfies condition (B) at $x$ if and only if $h$ satisfies condition (B).
\end{lem}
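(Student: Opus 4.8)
The plan is to treat the pair of conditions (A1), (A2) together by a splitting identity, and condition (B) by a direct computation of derivatives, exploiting throughout that $c(x)>0$.

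For (A1) and (A2): since $c(x)>0$, the identity $v\cdot h(x)=c(x)\,(v\cdot g(x))$ shows that $v\cdot h(x)=0$ if and only if $v\cdot g(x)=0$, so the two conditions quantify over exactly the same vectors $v$. Fix such a $v$. For small $t>0$ write
\[
h(x+tv)-h(x)=c(x+tv)\bigl(g(x+tv)-g(x)\bigr)+\bigl(c(x+tv)-c(x)\bigr)g(x),
\]
and take the inner product with $v$; the last term vanishes because $v\cdot g(x)=0$, leaving
\[
\frac1t\,v\cdot\bigl(h(x+tv)-h(x)\bigr)=c(x+tv)\cdot\frac1t\,v\cdot\bigl(g(x+tv)-g(x)\bigr).
\]
Since $c$ is continuous, $c(x+tv)\to c(x)>0$ as $t\downarrow0$, and since $g$ is locally Lipschitz the difference quotient $\frac1t(g(x+tv)-g(x))$ is bounded near $t=0$; hence $\limsup_{t\downarrow0}\frac1t\,v\cdot(h(x+tv)-h(x))=c(x)\,\limsup_{t\downarrow0}\frac1t\,v\cdot(g(x+tv)-g(x))$. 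As $c(x)>0$, the left side is $\le0$ (resp.\ $<0$) exactly when the right side is, which is the (A1) (resp.\ (A2)) equivalence. Note this part uses only continuity and positivity of $c$, not differentiability.

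For (B): since $g$ and $c$ are differentiable at $x$, so is $h=cg$ by the product rule, so the differentiability requirement in (B) transfers between $g$ and $h$. Using $\partial_k h_i=(\partial_k c)g_i+c\,\partial_k g_i$ one obtains, for each term appearing in the (B)-expression of $h$,
\[
h_i\bigl(\partial_k h_j-\partial_j h_k\bigr)=c\,\bigl[g_ig_j\,\partial_k c-g_ig_k\,\partial_j c\bigr]+c^2\,g_i\bigl(\partial_k g_j-\partial_j g_k\bigr),
\]
and likewise for the two cyclic images under $i\mapsto j\mapsto k\mapsto i$. Summing the three terms, the $c^2$-parts add up to $c^2$ times the (B)-expression for $g$, while the remaining parts form a sum of terms $c\,g_ag_b\,\partial_\ell c$ in which each monomial $g_ag_b\,\partial_\ell c$ occurs once with a plus and once with a minus sign, so they cancel. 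Hence the (B)-expression for $h$ equals $c(x)^2$ times that for $g$, and since $c(x)^2>0$ one vanishes precisely when the other does.

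I do not expect a substantive obstacle. The only minor technical points are the elementary fact that $\limsup$ of a product with a convergent strictly positive factor pulls that factor out (justified by the boundedness of the $g$-difference quotient coming from local Lipschitzness), and keeping track of the cyclic cancellation in the (B)-computation; both are routine.
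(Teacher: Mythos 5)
Your proof is correct and follows essentially the same route as the paper: product rule on $h=cg$, use $v\cdot g(x)=0$ to kill the cross term for (A1)/(A2), and observe the cyclic cancellation of the $\partial c$ terms for (B). Your treatment of (A1)/(A2) via the splitting $h(x+tv)-h(x)=c(x+tv)(g(x+tv)-g(x))+(c(x+tv)-c(x))g(x)$ is a small refinement — it avoids writing down $Dh(x)$ and, as you note, shows these two equivalences require only continuity (not differentiability) of $c$, whereas the paper computes $v^{T}Dh(x)v=c(x)v^{T}Dg(x)v$ directly — but the substance and the final identities are the same as in the paper's argument.
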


\begin{proof}
First, note that $v\cdot g(x)=0$ if and only if $v\cdot h(x)=0$. If $v\cdot g(x)=0$,
\[\liminf_{t\downarrow 0}\frac{1}{t}v\cdot (g(x+tv)-g(x))=v^TDg(x)v,\]
and,
\[\liminf_{t\downarrow 0}\frac{1}{t}v\cdot (h(x+tv)-h(x))=v^TDh(x)v.\]
Now,
\[Dh(x)=g(x)Dc(x)+c(x)Dg(x),\]
and thus,
\[v^TDh(x)v=c(x)v^TDg(x)v.\]
Therefore, $g$ satisfies condition (A1) if and only if $h$ satisfies condition (A1), and $g$ satisfies condition (A2) if and only if $h$ satisfies condition (A2). Moreover,
\begin{align*}
&~h_i\left(\frac{\partial h_j}{\partial x_k}-\frac{\partial h_k}{\partial x_j}\right)+h_j\left(\frac{\partial h_k}{\partial x_i}-\frac{\partial h_i}{\partial x_k}\right)+h_k\left(\frac{\partial h_i}{\partial x_j}-\frac{\partial h_j}{\partial x_i}\right)\\
=&~c^2g_i\left(\frac{\partial g_j}{\partial x_k}-\frac{\partial g_k}{\partial x_j}\right)+g_j\left(\frac{\partial g_k}{\partial x_i}-\frac{\partial g_i}{\partial x_k}\right)+g_k\left(\frac{\partial g_i}{\partial x_j}-\frac{\partial g_j}{\partial x_i}\right)\\
&~+c\left[(g_ig_j-g_jg_i)\frac{\partial c}{\partial x_k}+(g_jg_k-g_kg_j)\frac{\partial c}{\partial x_i}+(g_kg_i-g_ig_k)\frac{\partial c}{\partial x_j}\right]\\
=&~c^2\left[g_i\left(\frac{\partial g_j}{\partial x_k}-\frac{\partial g_k}{\partial x_j}\right)+g_j\left(\frac{\partial g_k}{\partial x_i}-\frac{\partial g_i}{\partial x_k}\right)+g_k\left(\frac{\partial g_i}{\partial x_j}-\frac{\partial g_j}{\partial x_i}\right)\right].
\end{align*}
Therefore, $g$ satisfies condition (B) if and only if $h$ satisfies condition (B). This completes the proof. $\blacksquare$
\end{proof}

Now, let $c(x)=\frac{1}{g_n(x)}$. Then, $h(x)=\bar{g}(x)$. Therefore, we can assume that $g\equiv \bar{g}$. For $\tilde{g}=(g_1,...,g_{n-1})$,
\[Dg(x)=\left(
\begin{array}{c|c}
A_g(x) & \frac{\partial \tilde{g}}{\partial x_n}(x)\\ \hline
0^T & 0
\end{array}
\right)
+
\left(
\begin{array}{c|c}
\frac{\partial \tilde{g}}{\partial x_n}(x)\tilde{g}^T(x) & 0\\ \hline
0^T & 0
\end{array}
\right).\]
Now, suppose that $v=(\tilde{v},v_n)\in \mathbb{R}^n$ and $v\cdot g(x)=0$. Then, $\tilde{v}\neq 0$ if and only if $v\neq 0$. Moreover,
\begin{align*}
v^TDg(x)v=&~\tilde{v}^TA_g(x)\tilde{v}+\tilde{v}^T\frac{\partial \tilde{g}}{\partial x_n}v_n+\tilde{v}^T\frac{\partial \tilde{g}}{\partial x_n}(x)\tilde{g}^T(x)\tilde{v}\\
=&~\tilde{v}^TA_g(x)\tilde{v},
\end{align*}
and thus, $g$ satisfies condition (A1) at $x$ if and only if $A_g(x)$ is negative semi-definite, and $g$ satisfies condition (A2) at $x$ if and only if $A_g(x)$ is negative definite.

Now, we show that if $g_n(x)\neq 0$, then $g$ satisfies condition (B) at $x$ if and only if for each $i,j\in \{1,...,n-1\}$,
\begin{equation}\label{INT}
g_i\left(\frac{\partial g_j}{\partial x_n}-\frac{\partial g_n}{\partial x_j}\right)+g_j\left(\frac{\partial g_n}{\partial x_i}-\frac{\partial g_i}{\partial x_n}\right)+g_n\left(\frac{\partial g_i}{\partial x_j}-\frac{\partial g_j}{\partial x_i}\right)=0.
\end{equation}
If $g$ satisfies condition (B) at $x$, then clearly (\ref{INT}) holds at $x$ for all $i,j\in \{1,...,n-1\}$. Conversely, suppose that (\ref{INT}) holds for all $i,j\in \{1,...,n-1\}$ at $x$. Then, for $i,j,k\in \{1,...,n-1\}$,
\begin{align*}
g_k\left[g_i\left(\frac{\partial g_j}{\partial x_n}-\frac{\partial g_n}{\partial x_j}\right)+g_j\left(\frac{\partial g_n}{\partial x_i}-\frac{\partial g_i}{\partial x_n}\right)+g_n\left(\frac{\partial g_i}{\partial x_j}-\frac{\partial g_j}{\partial x_i}\right)\right]=&~0,\\
g_i\left[g_j\left(\frac{\partial g_k}{\partial x_n}-\frac{\partial g_n}{\partial x_k}\right)+g_k\left(\frac{\partial g_n}{\partial x_j}-\frac{\partial g_j}{\partial x_n}\right)+g_n\left(\frac{\partial g_j}{\partial x_k}-\frac{\partial g_k}{\partial x_j}\right)\right]=&~0,\\
g_j\left[g_k\left(\frac{\partial g_i}{\partial x_n}-\frac{\partial g_n}{\partial x_i}\right)+g_i\left(\frac{\partial g_n}{\partial x_k}-\frac{\partial g_k}{\partial x_n}\right)+g_n\left(\frac{\partial g_k}{\partial x_i}-\frac{\partial g_i}{\partial x_k}\right)\right]=&~0.
\end{align*}
Summing up these equations, we have that
\[g_n\left[g_i\left(\frac{\partial g_j}{\partial x_k}-\frac{\partial g_k}{\partial x_j}\right)+g_j\left(\frac{\partial g_k}{\partial x_i}-\frac{\partial g_i}{\partial x_k}\right)+g_k\left(\frac{\partial g_i}{\partial x_j}-\frac{\partial g_j}{\partial x_i}\right)\right]=0,\]
and because $g_n\neq 0$, we have that condition (B) holds at $x$. In particular, if $g_n(x)\equiv 1$, then
\[g_i\left(\frac{\partial g_j}{\partial x_n}-\frac{\partial g_n}{\partial x_j}\right)+g_j\left(\frac{\partial g_n}{\partial x_i}-\frac{\partial g_i}{\partial x_n}\right)+g_n\left(\frac{\partial g_i}{\partial x_j}-\frac{\partial g_j}{\partial x_i}\right)=a_{ij}-a_{ji},\]
and thus, (\ref{INT}) holds at $x$ if and only if $A_g(x)$ is symmetric. This completes the proof. $\blacksquare$

\subsection{Proof of Corollary \ref{COR1}}
There exists a continuously differentiable function $g^*:\Omega\to \mathbb{R}^n_{++}$ whose Antonelli matrix is not symmetric.\footnote{The existence of such a function $g^*$ is shown by Gale (1960) in the case of $n=3$, and his construction method can easily be extended to the case in which $n\ge 4$.} If $g^+(x)=\frac{1}{\sum_{i=1}^ng_i^*(x)}g^*(x)$, then $g^+\in \mathscr{G}$, and by Lemma \ref{LEM6}, the Antonelli matrix of $g^+$ is not symmetric. Therefore, if $g\in \mathscr{G}$ satisfies Ville's axiom and $U$ is a neighborhood of $g$, then $(1-t)g+tg^*$ violates Ville's axiom whenever $0<t\le 1$. Because $(1-t)g+tg^*\in U$ if $t>0$ is sufficiently small, our claim holds. This completes the proof. $\blacksquare$

\subsection{Proof of Theorem \ref{THM4}}
By Lemma \ref{LEM2}, $\succsim^g$ is a complete, p-transitive, continuous, and monotone binary relation on $\Omega$. 

Suppose that $\succsim$ is a complete, p-transitive, and continuous binary relation such that $f^g=f^{\succsim}$. First, choose any $x,v\in \Omega$ such that $x$ is not proportional to $v$. As in the proof of Step 5 of Lemma \ref{LEM3}, choose $k\in \mathbb{N}$, and define
\[h^k=\frac{t(x,v)}k,\ t_i^k=ih^k,\]
\[x_0^k=x,\ x_{i+1}^k=x_i^k+h^kCRPg(x_i^k),\]
and for any $t\in [t_i^k,t_{i+1}^k]$,
\[x^k(t)=\frac{t-t_i^k}{t_{i+1}^k-t_i^k}x_{i+1}^k+\frac{t_{i+1}^k-t}{t_{i+1}^k-t_i^k}x_i^k.\]
As we have mentioned, $x^k(t)$ uniformly converges to $y(t;x,v)$ on $[0,t(x,v)]$ as $k\to \infty$. Because $g(x_i^k)\cdot (x_{i+1}^k-x_i^k)=0$, we have that $x_i^k\succsim x_{i+1}^k$. By p-transitivity of $\succsim$, we have that $x\succsim x_k^k$. If $z=av$ for some $a<u^g(x,v)$, then there exists $k$ such that $x_k^k\gg z$. Because $g(x_k^k)\cdot z<g(x_k^k)\cdot x_k^k$, $z\notin f^{\succsim}(g(x_k^k),g(x_k^k)\cdot x_k^k)$, and thus $x_k^k\succ z$. By p-transitivity, we have that $x\succ z$. Because $\succsim$ is continuous, we have that if $z=u^g(x,v)v$, then $x\succsim z$. In particular, we have that $x\succsim^gv$, then $x\succsim v$. Next, suppose that $x\not\succsim^gv$. Then, $u^g(x,v)<1$, and thus $u^g(v,x)>1$. Applying the above argument, we have that $v\succ x$, and thus $x\not\succsim v$. Therefore, $x\succsim^gv$ if and only if $x\succsim v$, and I) is proved.

II) is already proved in Lemma \ref{LEM3} and III) is already proved in Lemma \ref{LEM5}. If $g(x)$ is not proportional to $g(x)$, then there exists $v\in \Omega$ such that
\[(g(x)\cdot x)v-(g(x)\cdot v)x\neq (h(x)\cdot x)v-(h(x)\cdot v)x,\]
which implies that there exists $z\in \Omega$ such that $x\sim^gz$ and $x\not\sim^hz$. Therefore, if $\succsim^g\neq \succsim^h$.

Conversely, suppose that $h(x)=\lambda(x)g(x)$ for any $x\in \Omega$, where $\lambda(x)>0$. Choose any $(x,v)\in \Omega^2$. If $x$ is proportional to $v$, then $u^g(x,v)=u^h(x,v)$ by definition, and thus we assume that $x$ is not proportional to $v$. Let $y(t;x,v)$ be the solution function of (\ref{IC}), and $z(t;x,v,c)$ be the solution function of the following differential equation
\[\dot{z}(t)=(h(z(t))\cdot v)x-(h(z(t))\cdot x)v+cv,\ h(0)=x.\]
Because of Lemma \ref{LEM2}, $u^h(x,v)v=z(t^h(x,v);x,v,0)$ for some $t^h(x,v)>0$. Therefore, there exists $\varepsilon>0$ such that if $|c|<\varepsilon$, then $z(\cdot;x,v,c)$ is defined on $[0,t^h(x,v)]$. Recall that $w^*(x,v)=(v\cdot x)v-(v\cdot v)x$. Then, $w^*(x,v)\cdot \dot{y}(t;x,v)>0$ and $w^*(x,v)\cdot \dot{z}(t;x,v,c)>0$, and thus, by the implicit function theorem, there exists a $C^1$ function $s(t,c)$ on $[0,t^h(x,v)]\times ]-\varepsilon,\varepsilon[$ such that $z(t;x,v,c)-y(s(t,c);x,v)$ is proportional to $v$. If $c>0$, then
\[\dot{z}(t;x,v,c)\cdot g(x)>0=g(x)\cdot \left.\frac{d}{dt}y(s(t,c);x,v)\right|_{t=0},\]
and thus, if $t>0$ is sufficiently small, then $z(t;x,v,c)\gg y(s(t,c);x,v)$. Suppose that $z(t;x,v,c)\le y(s(t,c);x,v)$ for some $t\in ]0,t^h(x,v)]$. Let $t^*=\inf\{t\in ]0,t^h(x,v)]|z(t;x,v,c)\le y(s(t,c);x,v)\}$. Then, $t^*>0$, and 
\[\dot{z}(t^*;x,v,c)\cdot g(z(t^*;x,v,c))>0=g(y(s(t^*,c);x,v))\cdot \left.\frac{d}{dt}y(s(t,c);x,v)\right|_{t=t^*}.\]
Thus, $z(t;x,v,c)\le y(s(t,c);x,v)$ for some $t<t^*$, which is a contradiction. Therefore, $z(t;x,v,c)\gg y(s(t,c);x,v)$. Letting $c\to 0$, we have that $z(t;x,v,0)\ge y(s(t,0);x,v)$. By the symmetrical argument, we can show that $z(t;x,v,0)\le y(s(t,0);x,v)$. Therefore,
\[u^h(x,v)v=z(t^h(x,v);x,v)=y(s(t^h(x,v),0);x,v)=y(t(x,v);x,v)=u^g(x,v)v,\]
which implies that $u^h=u^g$, and thus IV) holds.

It suffices to show the continuity of the mapping $g\mapsto \succsim^g$. Suppose that $(g^k)$ is a sequence in $\mathscr{G}$ that uniformly converges to $g\in \mathscr{G}$ as $k\to \infty$ on any compact set $C\subset \Omega$.

First, we show that $(u^{g^k})$ converges to $u^g$ as $k\to \infty$ on any compact set. Choose any compact set $D\subset \Omega^2$. Then, $y_1(D),y_2(D)$ are also compact, and thus, there exists a convex and compact set $C\subset \Omega$ such that $y_1(D)\cup y_2(D)$ is included in the interior of $C$ and $D$ is included in the interior of $C^2$. Let $U$ be the interior of $C$, and $\|g^*-g^+\|=\max\{\|g^*(x)-g^+(x)\||x\in C\}$. Because $g$ is locally Lipschitz, there is a constant $L>0$ such that if $x,y\in C$, then
\[\|g(x)-g(y)\|\le L\|x-y\|.\]
Consider the following differential equations:
\begin{align}
\dot{y}(t)=&~(g(y(t))\cdot x)v-(g(y(t))\cdot v)x,\ y(0)=x,\label{IC3}\\
\dot{y}(t)=&~(g^k(y(t))\cdot x)v-(g^k(y(t))\cdot v)x,\ y(0)=x.\label{IC4}
\end{align}
Let $y(t;x,v)$ be the solution function of (\ref{IC3}) and $y^k(t;x,v)$ be the solution function of (\ref{IC4}). Let $w^*(x,v)=(v\cdot x)v-(v\cdot v)x$, $t(x,v)=\min\{t\ge 0|w^*(x,v)\cdot y(t;x,v)=0\}$, and $t^k(x,v)=\min\{t\ge 0|w^*(x,v)\cdot y^k(t;x,v)\}$. We have already shown that $u^g(x,v)v=y(t(x,v);x,v)$ and $u^{g^k}(x,v)v=y^k(t^k(x,v);x,v)$. Moreover, if $(x,v)\in D$, then $x,y_1(x,v),y_2(x,v)\in U$, and thus, $y([0,t(x,v)];x,v)\subset U$ and $y^k([0,t^k(x,v)];x,v)\subset U$.

Choose any $\varepsilon>0$. Let $W=\{(x,v)\in D|\|y_1(x,v)-y_2(x,v)\|<\|v\|\varepsilon\}$. Because $u^g(x,v)v,u^{g^k}(x,v)v\in [y_1(x,v),y_2(x,v)]$, $(x,v)\in W$ implies that $|u^g(x,v)-u^{g^k}(x,v)|<\varepsilon$. Thus, define $D'=D\setminus W$. Then, $D'$ is also compact, and $(x,v)\in D'$ implies that $x$ is not proportional to $v$. By Step 9 of the proof of Lemma \ref{LEM2}, $t(x,v)$ is continuous on $D'$. Let $h(x,v)$ be the supremum of $h\in [0,1]$ such that $y(t;x,v)$ is defined on $[0,t(x,v)+h]$ and $y(t;x,v)\in U$ for all $t\in [0,t(x,v)+h]$, and $D^k=\{(x,v)\in D'|h(x,v)>\frac{1}{k}\}$. By the above consideration, we have that $D^k$ is open in $D'$, and $\cup_kD^k=D'$. Because $D'$ is compact, there exists $k$ such that $D^k=D'$. Therefore, there exists $h>0$ such that $y(\cdot;x,v)$ is defined on $[0,t(x,v)+h]$ for all $(x,v)\in D'$. If $(x,v)\in D'$, then $y([0,t(x,v)+h];x,v)\subset U$. Because the set
\[C'=\{y(t;x,v)|(x,v)\in D',\ t\in [0,t(x,v)+h]\}\]
is compact and included in $U$,
\[\delta\equiv\min\{\|y-z\||y\in C', z\notin U\}>0.\]
Choose any $c\in ]0,t(x,v)+h]$. If $y^k(\cdot;x,v)$ is defined on $[0,c]$ and $y^k([0,c];x,v)\subset U$, then for any $t\in [0,c]$,
\begin{align*}
&~\|y^k(t;x,v)-y(t;x,v)\|\\
\le&~\int_0^t\|g^k(y^k(s;x,v))-g(y(s;x,v))\|\cdot 2\|x\|\|v\|ds\\
\le&~\int_0^t\|g(y^k(s;x,v))-g(y(s;x,v))\|\cdot 2\|x\|\|v\|ds\\
&~+\int_0^t\|g^k(y^k(s;x,v))-g(y^k(s;x,v))\|\cdot 2\|x\|\|v\|ds\\
\le&~2\|x\|\|v\|\int_0^t[\|g(y^k(s;x,v))-g(y(s;x,v))\|+\|g^k-g\|]ds\\
\le&~2\|x\|\|v\|\int_0^tL\|y^k(s;x,v)-y(s;x,v)\|ds+2\|x\|\|v\|\|g^k-g\|t.
\end{align*}
By Fact \ref{FACT8},
\[\|y^k(t;x,v)-y(t;x,v)\|\le \frac{\|g^k-g\|}{L}(e^{2\|x\|\|v\|L(t(x,v)+h)}-1)\equiv \|g^k-g\|M(x,v).\]
Because $t(x,v)$ is continuous on $D'$, $M(x,v)$ is also continuous on $D'$. Let
\[M=\max\{M(x,v)|(x,v)\in D'\}.\]
If $k$ is sufficiently large, then $\|g^k-g\|M<\delta$. If $(x,v)\in D'$ and $y^k(\cdot;x,v)$ is defined only on $[0,c[$ for some $c\le t(x,v)+h$, then by Fact \ref{FACT1}, there exists $t\in [0,c[$ such that $y^k(t;x,v)\notin C$, and thus $\|y^k(t;x,v)-y(t;x,v)\|\ge \delta$, which contradicts the above inequality. Therefore, $y^k(\cdot;x,v)$ is defined on $[0,t(x,v)+h]$ and $y^k([0,t(x,v)+h];x,v)\subset C$. Let
\[\delta'=\min\{w^*(x,v)\cdot y(t(x,v)+h;x,v)|(x,v)\in D'\}>0.\]
\[w^+=\max\{\|w^*(x,v)\||(x,v)\in D'\}>0.\]
If $k$ is sufficiently large, then $\|g^k-g\|M<\frac{\delta'}{w^+}$, and thus,
\[|w^*(x,v)\cdot (y^k(t(x,v)+h;x,v)-y(t(x,v)+h;x,v))|\le\|w^*(x,v)\|\|g^k-g\|M<\delta',\]
which implies that $t^k(x,v)<t(x,v)+h$.

Now, recall that
\[w^*(x,v)\cdot \dot{y}(t;x,v)=(C(x,v))^2v\cdot g(y(t;x,v)).\]
Define
\[M_1=\min\{(C(x,v))^2v\cdot g(y)|(x,v)\in D',y\in C\}>0.\]
If $(x,v)\in D'$ and $y\in C$, then
\begin{align*}
(C(x,v))^2v\cdot g^k(y)=&~(C(x,v))^2v\cdot g(y)+(C(x,v))^2v\cdot (g^k(y)-g(y))\\
\ge&~M_1-(C(x,v))^2\|v\|\|g^k-g\|,
\end{align*}
and thus, if $k$ is sufficiently large, then
\[w^*(x,v)\cdot \dot{y}^k(t;x,v)>\frac{M_1}{2}\]
for all $(x,v)\in D'$ and all $t\in [0,t(x,v)+h]$. If $t^k(x,v)\le t(x,v)$, then
\begin{align*}
w^*(x,v)\cdot y^k(t(x,v);x,v)=&~\int_{t^k(x,v)}^{t(x,v)}[w^*(x,v)\cdot \dot{y}^k(t;x,v)]dt\\
\ge&~\frac{M_1}{2}(t(x,v)-t^k(x,v)),
\end{align*}
and if $t(x,v)\le t^k(x,v)$, then
\begin{align*}
w^*(x,v)\cdot y(t^k(x,v);x,v)=&~\int_{t(x,v)}^{t^k(x,v)}[w^*(x,v)\cdot \dot{y}(t;x,v)]dt\\
\ge&~M_1(t^k(x,v)-t(x,v)).
\end{align*}
Note that,
\begin{align*}
w^*(x,v)\cdot y^k(t(x,v);x,v)=&~w^*(x,v)\cdot (y^k(t(x,v);x,v)-y(t(x,v);x,v))\\
\le&~\|w^*(x,v)\|\|g^k-g\|M,\\
w^*(x,v)\cdot y(t^k(x,v);x,v)=&~w^*(x,v)\cdot (y(t^k(x,v);x,v)-y^k(t^k(x,v);x,v))\\
\le&~\|w^*(x,v)\|\|g^k-g\|M.
\end{align*}
Therefore, if we define
\[M_2=\max\{\|v\|^{-1}|(x,v)\in D'\},\]
\[M_3=\max\{C(x,v)\|Pg(y)\||(x,v)\in D',y\in C\},\]
\[M_4=\max\{\|w^*(x,v)\||(x,v)\in D'\},\]
then, for all $(x,v)\in D'$,
\begin{align*}
&~|u^g(x,v)-u^{g^k}(x,v)|\\
=&~\frac{1}{\|v\|}\|u^g(x,v)v-u^{g^k}(x,v)v\|\\
\le&~M_2\|y(t(x,v);x,v)-y^k(t^k(x,v);x,v)\|\\
\le&~M_2[\|y(t^k(x,v);x,v)-y^k(t^k(x,v);x,v)\|\\
&~+\|y(t(x,v);x,v)-y(t^k(x,v);x,v)\|]\\
\le&~M_2\|g^k-g\|M+M_2\left|\int_{t(x,v)}^{t^k(x,v)}\|\dot{y}(t;x,v)\|dt\right|\\
\le&~M_2\|g^k-g\|M+M_2M_3|t(x,v)-t^k(x,v)|\\
\le&~M_2\|g^k-g\|M\\
&~+\frac{2M_2M_3}{M_1}\max\{w^*(x,v)\cdot y(t^k(x,v);x,v),w^*(x,v)\cdot y^k(t(x,v);x,v)\}\\
\le&~\left[M_2M+\frac{2M_2M_3M_4M}{M_1}\right]\|g^k-g\|<\varepsilon
\end{align*}
for all sufficiently large $k$, as desired.

Therefore, $u^{g^k}$ uniformly converges to $u^g$ on any compact set. Choose any $(x,v)\in \mbox{Limsup}_k\succsim^{g^k}$. Then, for any neighborhood $U$ of $(x,v)$ and $k_0$, there exists $k\ge k_0$ such that $\succsim^{g^k}\cap U\neq \emptyset$. This implies that there exists an increasing sequence $k(\ell)$ such that there exists $(x^{\ell},v^{\ell})\in \Omega^2$ such that $u^{g^{k(\ell)}}(x^{\ell},v^{\ell})\ge 1$ and $\|(x^{\ell},v^{\ell})-(x,v)\|<\frac{1}{\ell}$. Because $u^{g^{k(\ell)}}$ uniformly converges to $u^g$ on some compact neighborhood of $(x,v)$, we have that $u^g(x,v)=\lim_{\ell\to\infty}u^{g^{k(\ell)}}(x^{\ell},v^{\ell})\ge 1$. This implies that $(x,v)\in \succsim^g$, and thus $\mbox{Limsup}_k\succsim^{g^k}\subset \succsim^g$. Conversely, suppose that $(x,v)\in \succsim^g$. Then, $u^g(x,v)\ge 1$. Choose any neighborhood $U$ of $(x,v)$. Then, for some $a\in ]0,1[$, $U$ includes $(x,av)$, and $u^g(x,av)>1$. Because $u^{g^k}(x,av)$ converges to $u^g(x,av)$ as $k\to \infty$, there exists $k_0$ such that if $k\ge k_0$, then $u^{g^k}(x,av)\ge 1$, and thus $(x,av)\in \succsim^{g^k}$. This implies that $(x,v)\in \mbox{Liminf}_k\succsim^{g^k}$. Thus, $\succsim^{g^k}$ converges to $\succsim^g$ as $k\to \infty$ with respect to the closed convergence topology. This completes the proof. $\blacksquare$

\section*{References}
\begin{description}
\item{[1]} Antonelli, G. B.: Sulla Teoria Matematica dell' Economia Politica. Tipografia del Folchetto, Pisa (1886). Translated by Chipman, J. S., Kirman, A. P. 1971. On the mathematical theory of political economy. In: Chipman, J. S., Hurwicz, L., Richter, M. K., Sonnenschein, H. F. (Eds.) Preferences, Utility and Demand, pp.333-364. Harcourt Brace Jovanovich, New York (1971)

\item{[2]} Debreu, G.: Representation of a preference ordering by a numerical function. In: Thrall, R. M., Coombs, C. H., Davis, R. L. (Eds.) Decision Processes, pp.159-165. Wiley, New York (1954)

\item{[3]} Debreu, G.: Economies with a finite set of equilibria. Econometrica 38, 387-392 (1970)

\item{[4]} Debreu, G.: Smooth preferences. Econometrica 40, 603-615 (1972)

\item{[5]} Debreu, G.: Smooth preferences, a corrigendum. Econometrica 44, 831-832 (1976)

\item{[6]} Ellsberg, D.: Risk, ambiguity, and the Savage axioms. Q. J. Econ. 75, 643-669 (1961)

\item{[7]} Gale, D.: A note on revealed preference. Economica 27, 348-354 (1960)

\item{[8]} Guillemin, V., Pollack, A.: Differential Topology. Prentice Hall, New Jersey (1974)

\item{[9]} Hartman, P.: Ordinary Differential Equations. second ed. Birkh\"auser Verlag AG, Boston (1997)

\item{[10]} Hosoya, Y.: Measuring utility from demand. J. Math. Econ. 49, 82-96 (2013)

\item{[11]} Hosoya, Y.: A Theory for estimating consumer's preference from demand. Adv. Math. Econ. 18, 33-55 (2015)

\item{[12]} Hosoya, Y.: Recoverability revisited. J. Math. Econ. 90, 31-41 (2020)

\item{[13]} Hosoya, Y.: Consumer optimization and a first-order PDE with a non-smooth system. Oper. Res. Forum 2:66 (2021)

\item{[14]} Hosoya, Y.: Differential characterization of quasi-concave functions without twice differentiability. Commun. Optim. Theory 2022:21, 1-10 (2022)

\item{[15]} Hosoya, Y.: Non-smooth integrability theoty. Forthcoming to Economic Theory (2024)

\item{[16]} Hurwicz, L.: On the problem of integrability of demand functions. In: Chipman, J. S., Hurwicz, L., Richter, M. K., Sonnenschein, H. F. (Eds.) Preferences, Utility and Demand, pp.174-214. Harcourt Brace Jovanovich, New York (1971)

\item{[17]} Hurwicz, L., Richter, M. K.: An integrability condition with applications to utility theory and thermodynamics. J. Math. Econ. 6, 7-14 (1979)

\item{[18]} Hurwicz, L., Uzawa, H.: On the integrability of demand functions. In: Chipman, J. S., Hurwicz, L., Richter, M. K., Sonnenschein, H. F. (Eds.) Preferences, Utility and Demand, pp.114-148. Harcourt Brace Jovanovich, New York (1971)

\item{[19]} Iserles, A.: A First Course in the Numerical Analysis of Differential Equations, 2nd ed. Cambridge University Press, Cambridge (2009)

\item{[20]} Katzner, D. W.: A note on the differentiability of consumer demand functions. Econometrica 36, 415-418 (1968)

\item{[21]} Kawamata, M.: On the role of Wieser's theory of natural value played in the history of marginal utility theory. Keio J. Econ. 82-2, 275-296.

\item{[22]} Kihlstrom, R., Mas-Colell, A., Sonnenschein, H.: The demand theory of the weak axiom of revealed preference. Econometrica 44, 971-978 (1976)

\item{[23]} Mas-Colell, A.: The recoverability of consumers' preferences from market demand behavior. Econometrica 45, 1409-1430 (1977)

\item{[24]} Menger, C. Grunds\"atze der Vorkswirthschaftslehre. Wilhelm Braum\"uller, Wien (1871)

\item{[25]} Otani, K.: A characterization of quasi-concave functions. J. Econ. Theory 31, 194-196 (1983)

\item{[26]} Pareto, V.: Manuale di Economia Politica con una Introduzione alla Scienza Sociale. Societa Editrice Libraria, Milano (1906)

\item{[27]} Pareto, V.: Manuel d'Economie Politique. Paris: Giard et E. Briere (1909)

\item{[28]} Richter, M. K.: Revealed preference theory. Econometrica 34, 635-645 (1966)

\item{[29]} Samuelson, P. A.: The problem of integrability in utility theory. Economica 17, 355-385 (1950)

\item{[30]} Schmeidler, D.: Subjective probability and expected utility without additivity.'' Econometrica 57, 571-587 (1989)

\item{[31]} Suda, S.: Vilfredo Pareto and the Integrability Problem of Demand Functions. Keio J. Econ. 99-4. 637-655 (2007)

\item{[32]} Ville, J.: Sur les conditions d'existence d'une oph\'elimit\'e totale et d'un indice du niveau des prix. Annales de l'Univers\`it\'e de Lyon 8, Sec. A(3), 32-39 (1946)

\item{[33]} Volterra, V.: L'economia matematica ed il nuovo manuale del prof. Pareto. Giornale degli Economisti 32, 296-301 (1906)

\end{description}

\end{document}